\pdfoutput=1

\documentclass[letterpaper,11pt]{amsart}
\usepackage{amsmath}
\usepackage{amsthm}
\usepackage{amssymb}
\usepackage{amsfonts}
\usepackage{fullpage}
\usepackage[dvipsnames]{xcolor}
\usepackage[noadjust]{cite}
\usepackage[utf8]{inputenc}
\usepackage{bm}
\renewcommand{\mathbf}[1]{\bm{#1}}

\newcommand{\inp}[1]{\left(#1\right)}
\newcommand{\inb}[1]{\left[#1\right]}
\newcommand{\inbr}[1]{\left\{#1\right\}}

\newcommand{\Grad}{\ensuremath{\nabla}}

\usepackage[bookmarks=false]{hyperref}
\hypersetup{
  colorlinks=true,
  linkcolor=green!30!black,
  linktoc=all,
  citecolor=blue,
}

\newtheorem{theorem}{Theorem}[section]
\newtheorem{observation}[theorem]{Observation}
\newtheorem{corollary}[theorem]{Corollary}
\newtheorem{lemma}[theorem]{Lemma}
\newtheorem{proposition}[theorem]{Proposition}

\newtheorem{condition}{Condition}
 \newtheorem{consequence}[theorem]{Consequence}

\theoremstyle{definition}
\newtheorem{definition}[theorem]{Definition}

\newtheorem{numremark}{Remark}
\newtheorem{remark}[numremark]{Remark}

\newcommand{\norm}[2]{\ensuremath{\Vert #1 \Vert_{#2}}}
\newcommand{\abs}[1]{\left\vert#1\right\vert}
\DeclareMathOperator*{\argmax}{arg\,max}
\DeclareMathOperator*{\argmin}{arg\,min}
\newcommand{\set}[1]{\left\{#1\right\}}
\newcommand{\eps}{\varepsilon}

\newcommand{\defeq}{:=}

\newcommand{\im}{\ensuremath{\iota}}
\newcommand{\poly}[1]{\ensuremath{\mathop{\mathrm{poly}}\inp{#1}}}

\newcommand{\C}{\mathbb{C}}
\newcommand{\R}{\mathbb{R}}
\newcommand{\N}{\mathbb{N}}

\newcommand{\ratio}[2]{R_{#1}^{(#2)}}
\newcommand{\prob}[2]{\mathcal{P}_{#1} \inb{ #2 }}
\renewcommand{\Pr}[2]{\mathrm{Pr}_{#1} \inb{ #2 }}
\newcommand{\Z}[2]{Z_{#1}^{(#2)}}
\newcommand{\G}[2]{G_{#1}^{(#2)}}
\newcommand{\as}[2]{a_{#1}^{(#2)}}
\newcommand{\bs}[2]{b_{#1}^{(#2)}}
\newcommand{\Bad}[1]{B_{#1}}
\newcommand{\Bi}[1][i]{B(#1)}
\newcommand{\Bj}[1][j]{B(#1)}
\newcommand{\alphastar}{\alpha^{\star}}
\newcommand{\Yk}{Y_k}
\newcommand{\Yki}{Y_k^{(i)}}
\newcommand{\Ykj}{Y_k^{(j)}}

\newcommand{\st}[1][]{\ensuremath{\;\mathbf{#1\vert}\;}}

\renewcommand{\vec}[1]{\ensuremath{\mathbf{#1}}}

\newcommand{\good}[1]{\ensuremath{\Gamma_{#1}}}
\newcommand{\ifactor}[0]{\ensuremath{\rho_I}}
\newcommand{\rfactor}[0]{\ensuremath{\rho_R}}

\newcommand{\nice}[0]{nice}

\newcommand{\wt}{\ensuremath{{\tilde{w}}}}
\newcommand{\tw}{\ensuremath{\wt}}
\newcommand{\gammat}{\ensuremath{{\tilde{\gamma}}}}

\newcommand{\eqbreak}{}
\newcommand{\eqbreakno}{}
\newcommand{\maybealign}{}

\begin{document}
\title{Correlation decay and partition function zeros:\\ Algorithms and 
phase transitions} \thanks{A preliminary version of this paper, under the
  title ``A Deterministic Algorithm for Counting Colorings with $2\Delta$
  Colors," appeared in {\it Proceedings of the 60th Annual IEEE Symposium on
    Foundations of Computer Science}, 2019~\cite{focs-version}.  The present
    version contains several additional results.}
\author{Jingcheng Liu} \thanks{Jingcheng Liu, Nanjing University. Email: \texttt{liu@nju.edu.cn}.}\author{Alistair Sinclair} \thanks{Alistair Sinclair, University of California,
  Berkeley. Email: \texttt{sinclair@cs.berkeley.edu}.}\author{Piyush Srivastava} \thanks{Piyush Srivastava, Tata Institute of
  Fundamental Research.  Email: \texttt{piyush.srivastava@tifr.res.in}.}

\begin{abstract}
  \noindent We explore connections between the phenomenon of correlation decay
  (more precisely, strong spatial mixing) and the location of Lee-Yang
  and Fisher zeros for various spin systems.  In particular we show that, in
  many instances, proofs showing that weak spatial mixing on the Bethe lattice
  (infinite $\Delta$-regular tree)
  implies strong spatial mixing on all graphs of maximum degree~$\Delta$ can be
  lifted to the complex plane, establishing the absence of zeros of
  the associated partition function in a complex neighborhood of the
  region in parameter space corresponding to strong spatial mixing.  This allows
  us to give unified proofs of several recent results of this kind, including
  the resolution by Peters and Regts of the Sokal conjecture for the
  partition function of the hard core lattice gas.  It also allows us to prove new
  results on the location of Lee-Yang zeros of the anti-ferromagnetic Ising
  model.

  We show further that our methods extend to the case when weak spatial mixing
  on the Bethe lattice is not known to be equivalent to strong spatial mixing on
  all graphs.  In particular, we show that results on strong spatial mixing in
  the anti-ferromagnetic Potts model can be lifted to the complex plane to give
  new zero-freeness results for the associated partition function, significantly
  sharpening previous results of Sokal and others.  This new extension is also
  of independent algorithmic interest: it allows us to give the first polynomial
  time deterministic approximation algorithm (FPTAS) for counting the number of
  $q$-colorings of a graph of maximum degree~$\Delta$ provided only that
  $q\ge 2\Delta$, a question that has been studied intensively.  This matches
  the natural bound for randomized algorithms obtained by a straightforward
  application of Markov chain Monte Carlo.  In the case when the graph is also
  triangle-free, we show that our algorithm applies under the weaker condition
  $q \geq \alpha\Delta + \beta$, where $\alpha \approx 1.764$ and
  $\beta = \beta(\alpha)$ are absolute constants.

\end{abstract}

\maketitle
\setcounter{tocdepth}{1}

\vspace{-2\baselineskip}
\tableofcontents

\thispagestyle{empty}

\newpage
\setcounter{page}{1}

\newpage

\section{Introduction}
\subsection{Background and related work}
A standard approach to formalizing phase transitions in spin systems proceeds by
characterizing when long-range correlations between spins appear in the system.
More formally, one starts with an infinite graph, such as the Bethe lattice
(infinite $\Delta$-regular tree) or $\mathbb{Z}^d$, and then
characterizes the regions in the space of parameters of the model in which,
under the associated Gibbs distribution that assigns probabilities to configurations,
correlations between spins decay exponentially with the distance between them.
This correlation decay property is also known as ``spatial mixing."
This formalism can be extended to infinite families of finite graphs, and
has also been studied extensively due to its connections with the computational
complexity of Markov chain Monte Carlo methods for sampling from the associated
Gibbs distributions.

A related but different approach to studying phase transitions is via the so-called
Yang-Lee theory~\cite{leeyan52}.  Here, one views the infinite graph as a
suitable limit of a sequence of finite graphs of growing size, and studies the
convergence of the free energy density (the logarithm of the partition function
divided by the number of vertices) over this sequence.  Yang and Lee showed that,
under mild conditions, this limit exists and is analytic on a given subset~$S$ of the
parameter space of the system, provided that none of the partition functions of
the graphs appearing in the sequence have any roots in a \emph{complex} 
neighborhood of~$S$, uniformly over the graphs in the sequence.  This classical
approach has also recently found new algorithmic applications inspired by the
work of Barvinok~\cite{barvinok2017combinatorics}.

Classical work of Dobrushin and Shlosman~\cite{DS85,DS87} establishes an
equivalence between a strong version of spatial mixing (called ``strong spatial
mixing''\footnote{``(Weak) spatial mixing" simply refers to the decay of
  correlations property; ``strong spatial mixing" is said to hold when
  correlations between spins decay with distance even in the presence of fixed
  spins (boundary conditions) close to the spins being measured.  Strong spatial
  mixing is a crucial ingredient in the design of efficient algorithms,
  including Markov chain Monte Carlo and algorithms based on the self-avoiding
  walk tree.}) and the Yang-Lee formalism of phase transitions in the special
case of lattices~$\mathbb{Z}^d$.  However, their approach makes essential use of
the amenability of the lattice (in the form that the size of a neighborhood of
radius $r$ grows only as a polynomial in $r$), and does not extend to
non-amenable graph families.  Until recently, few formal connections between
them were known for the setting of general graphs.
Sokal~\cite{sokal2000personal} conjectured that, for the hard core lattice gas
model, there is a complex neighborhood $\mathcal{N}$ of the interval
$(0, \lambda_c(\Delta))$, where $\lambda_c(\Delta)$ is the critical activity of
the model on the infinite regular tree (Bethe lattice) of degree~$\Delta$, such
that the partition function of {\it any\/} finite graph of degree at
most~$\Delta$ does not vanish in~$\mathcal{N}$.  This conjecture was only
recently resolved by Peters and Regts~\cite{peters17:_sokal}.  More recently,
this correspondence between the two notions of phase transition for general
bounded-degree graphs has been
extended to the Ising model~\cite{liu2018fisher,peters_location_2018,shaosun19}.

In this paper, we further explore this correspondence with a view to establishing
it in more generality.  Our first set of results show that previous arguments
establishing an equivalence between weak spatial mixing on the infinite $\Delta$-regular
tree and
strong spatial mixing of the Gibbs measure on all graphs of maximum degree~$\Delta$
can be ``lifted'' to the complex plane, in such a way as to also prove that the
partition functions of all such graphs remain zero-free in a uniform complex
neighborhood of the real parameter interval in which strong spatial mixing 
holds.
This gives new and simpler proofs of Peters and Regts' resolution of the Sokal
conjecture~\cite{peters17:_sokal} described above, and of the results of the present authors 
on the Fisher zeros of the zero-field ferromagnetic Ising
model~\cite{liu2018fisher}.  In addition, our method allows us to prove new results on the
Lee-Yang zeros of the {\it anti-ferromagnetic\/} Ising model, which we now describe.

Formally, the partition function of the Ising model on a graph $G = (V, E)$
can be written in terms of an edge activity (nearest-neighbor interaction) $\beta > 0$
and a vertex activity (external field) $\lambda>0$ as follows:
\begin{equation}
  \label{eq:ising-partition}
Z_G(\beta, \lambda) \defeq \sum_{\sigma: V \rightarrow \inbr{+, -}}
  \beta^{d(\sigma)}\lambda^{p(\sigma)},
\end{equation}
where $\sigma$ ranges over assignments of spins $\{+,-\}$ to vertices,
$d(\sigma)$ is the number of edges $\inbr{u, v}$ for which
$\sigma(u) \neq \sigma(v)$, and $p(\sigma)$ is the number of vertices for which
$\sigma(v) = +$.  As usual, the partition function implicitly defines the Gibbs
distribution on configurations~$\sigma$ according to their weights
in~\eqref{eq:ising-partition}.  When $\beta<1$ the model is \emph{ferromagnetic}
and assigns larger weight to configurations with more aligned neighboring spins;
conversely, when $\beta > 1$ the model is \emph{anti-ferromagnetic}.  For the
infinite $\Delta$-regular tree, it is known that weak spatial mixing holds when
$\beta \in (1, \frac{\Delta}{\Delta-2})$ for all $\lambda > 0$, while for
$\beta > \frac{\Delta}{\Delta-2}$ there exists a $\lambda_c(\beta, \Delta) > 0$
such that weak spatial mixing holds if
$\abs{\log \lambda} > \log \lambda_c(\beta, \Delta)$~\cite[p.~255; see also the
remarks following Theorems~1 and~3 of
\cite{sinclair_approximation_2012}]{georgii_hans-otto_gibbs_2011}.  We show
that, in a complex neighborhood of this weak spatial mixing region, the model
has no zeros in the parameter\footnote{Since $Z_G(\beta,\lambda)$ is a bivariate
  polynomial, one can investigate phase transitions in terms of either~$\beta$
  or~$\lambda$.  Complex zeros of $Z_G$ as a function of~$\lambda$ are known as
  {\it Lee-Yang zeros}, while zeros of $Z_G$ as a function of~$\beta$ are known
  as {\it Fisher zeros}.}~$\lambda$.
\begin{theorem}
  \label{thm:af-ising-intro} Fix $\Delta \geq 3$ and let $\beta > 1$ and
  $\lambda > 0$ be such that weak spatial mixing for the anti-ferromagnetic
  Ising model with edge activity $\beta$ and vertex activity $\lambda$ holds on
  the infinite $\Delta$-regular tree.  Then there exists
  $r_{\beta, \lambda, \Delta} > 0$ such that $Z_G(\beta, \lambda') \neq 0$ for
  any $\lambda' \in \C$ satisfying
  $\abs{\lambda - \lambda'} \leq r_{\beta, \lambda, \Delta}$.
\end{theorem}
\begin{remark}
  Note that the width $r_{\beta, \lambda, \Delta}$ of the region depends
  on~$\beta$ and $\lambda$, and indeed tends to zero as the parameters approach
  their critical values.  For any fixed {\it compact\/} subset of values of
  $(\beta,\lambda)$ within the regime of weak spatial mixing, we get a
  fixed width~$r$ for the region.
\end{remark}

Our second main result goes beyond the setting where a translation from weak
spatial mixing on the infinite tree to strong spatial mixing on general graphs
is known, and considers the anti-ferromagnetic Potts model.  Even in this more
general setting, we show that currently known arguments for proving strong
spatial mixing for the model can be ``lifted'' to the complex plane to prove new
zero-freeness results for the Potts model partition function.  We now formally
describe these results.

The partition function of the anti-ferromagnetic Potts model (at zero field)
of a graph $G = (V, E)$ with a fixed number $q$ of spins (which we often refer to as
``colors'') can be written as
\begin{equation}\label{eq:partitionfun}
Z_{G}(q; w) := \sum_{\sigma:V\to[q]} w^{m(\sigma)}.
\end{equation}
Here $\sigma$ ranges over arbitrary assignments of spins (colors) to vertices,
and $m(\sigma)$ is the number of {\it monochromatic\/} edges, i.e., edges $\inbr{u, v}$ 
for which $\sigma(u) = \sigma(v)$.
Note that the number of \emph{proper} $q$-colorings of~$G$
(i.e., those with no monochromatic edges) is just~$Z_G(q; 0)$.  
The partition function again defines the Gibbs distribution on
colorings~$\sigma$ of $G$, according to their weights
in~\eqref{eq:partitionfun}.  We will often drop $q$ from the notation since it
will be clear from the context, and write $Z_{G}(q; w)$ simply as $Z_G(w)$.

\begin{theorem}\label{thm:zeros-intro}
  Fix a positive integer $\Delta$.  Then there exists a $\tau_\Delta > 0$ such
  that the following is true.  Let $\mathcal{D}_\Delta$ be a simply connected
  region in the complex plane obtained as the union of disks of radius
  $\tau_\Delta$ centered at all points on the segment $[0, 1]$.  For any graph
  $G$ of maximum degree at most $\Delta \geq 3$ and integer~$q \geq 2\Delta$,
  we have $Z_{G}(q; w) \neq 0$ when $w \in {\mathcal{D}}_\Delta$.
\end{theorem}

\begin{remark} The condition $q \geq 2\Delta$, as discussed in more detail
  below, corresponds to the Dobrushin uniqueness condition and the ``path
  coupling'' method for analyzing Gibbs samplers.  The previous best result in
  this direction is due to Bencs, Davies, Patel and Regts~\cite{bencs2018zero}
  and requires $q \geq e\Delta + 1$, which is much stronger than the Dobrushin
  bound. \end{remark}

As discussed later in Section\nobreakspace \ref {sec:technical-overview}, our technique is also
capable of directly harnessing tighter strong spatial mixing arguments used in
the analysis of Markov chains for special classes of graphs.  As an example, we
can exploit such an argument of Gamarnik, Katz and Misra~\cite{Gamarnik2015SSM}
to improve the bound on $q$ in~Theorem\nobreakspace \ref {thm:zeros-intro}
when the graph is triangle-free,
for all but small values of~$\Delta$. \begin{theorem}\label{thm:zeros-intro-176}
  Let $\alphastar \approx 1.7633$ be the unique positive solution of the
  equation $xe^{-1/x}=1$.  For every $\alpha > \alphastar$, there exists a
  $\beta = \beta(\alpha)$ such that for any integer $\Delta \geq 3$, there
  exists a $\tau_\Delta > 0$ for which the following is true.  Let
  $\mathcal{D}_\Delta$ be a simply connected region in the complex plane
  obtained as the union of disks of radius $\tau_\Delta$ centered at all points
  on the segment $[0, 1]$.  For any \emph{triangle-free} graph $G$ of maximum
  degree at most $\Delta$ and integer~$q \geq \alpha\Delta + \beta$ , we have
  $Z_{G}(q; w) \neq 0$ when $w \in {\mathcal{D}}_\Delta$.
\end{theorem}

Finally, for the special case of trees,
the argument leading to the above theorems also leads to the following improved
bound.
\begin{proposition}
  \label{thm:zeros-trees}
  Fix an integer $\Delta \geq 3$, and let $q \geq \Delta + 1$.  Then, there
  exists a $\tau_\Delta > 0$ for which the following is true.  Let
  $\mathcal{D}_\Delta$ be a simply connected region in the complex plane
  obtained as the union of disks of radius $\tau_\Delta$ centered at all points
  on the segment $[0, 1]$. Then, for every tree $T$ of maximum degree $\Delta$,
  we have $Z_{T}(q; w) \neq 0$ when $w \in {\mathcal{D}}_\Delta$.
\end{proposition}

We note that one can directly analyse the roots of the partition function
of the Potts model on a tree, and they all lie at the points $0$ and $1 - q$.
Nevertheless, we include the above observation since the bound $q\geq \Delta+1$
matches the optimal number of colors in the classical result of
Jonasson~\cite{jonasson} showing {\it weak\/} spatial mixing, and while proving
{\it strong\/} spatial mixing for $q\ge\Delta+1$ remains an important open
problem, we find it interesting that one can obtain the above bound for the
related property of zero-freeness using an argument that is based on spatial
mixing.  Further, the argument leading to Proposition~\ref{thm:zeros-trees}
appears to be robust, and we expect that it may extend to more general settings
(e.g., that of \emph{list colorings} discussed later in this section).

For ease of later reference, we record the above three results in
the following.
\begin{theorem}\label{thm:zeros}
  Fix an integer $\Delta \geq 3$. Then there exists a $\tau_\Delta > 0$ such
  that for the simply connected region $\mathcal{D}_\Delta$ in the complex plane
  obtained as the union of disks of radius $\tau_\Delta$ centered at all points
  on the segment $[0, 1]$, the following is true.  For any graph $G$ of maximum
  degree $\Delta$ and integer~$q$ satisfying the hypotheses of
  Theorem\nobreakspace \ref {thm:zeros-intro}, Theorem\nobreakspace \ref
  {thm:zeros-intro-176} or Proposition~\ref{thm:zeros-trees}, $Z_G(q; w) \neq 0$
  when $w \in {\mathcal{D}}_\Delta$.
\end{theorem}

  \begin{remark}
    Our proof of the above theorem actually holds under an abstract condition on
    coloring instances that we call \emph{admissibility} (see
    Definition~\ref{def:good-list-condition}).  We show in
    Section~\ref{sec:prop-real-valu} that the instances covered in Theorems~\ref
    {thm:zeros-intro} and \ref {thm:zeros-intro-176} and Proposition~\ref{thm:zeros-trees} are
    all admissible.  Proving admissibility for any larger class of instances would
    immediately extend Theorem~\ref{thm:zeros} to such a class.
  \end{remark}

There has been extensive previous work on the complex zeros of the anti-ferromagnetic
Potts model, which we now briefly summarize.  
Sokal~\cite{sokal2001bounds,sokal2005multivariate} proved (in the
language of the Tutte polynomial) that the partition function has no zeros
in the entire unit disk centered at $w=0$, under the strong condition
$q\ge 7.964\Delta$; the constant was later improved to 6.907 by Fern\'andez and
Procacci~\cite{fernandez2008regions} (see also~\cite{jps}).  The results in
these papers, since they are in terms of the Tutte polynomial, in fact extend to
\emph{complex} values of $q$---a setting which is not accessible in the
Potts model formulation we use---and hence are not directly comparable to our
result.  However, for the most natural case of positive integer~$q$,
our result significantly improves upon them.  Much more recently, the work of Bencs et
al.~\cite{bencs2018zero} referred to above gave a zero-free region analogous to
that in Theorem~\ref{thm:zeros-intro} above, but under the stronger condition
$q\ge e\Delta + 1$.  We note also that Barvinok and
Sober\'on~\cite{BarvinokSoberon16a} (see also \cite{barvinok2017combinatorics}
for an improved version) established a zero-free region in a disk centered at $w=1$
of radius significantly less than~1.

\subsection{Algorithmic implications for the problem of counting colorings}
\label{sec:algor-impl-probl}

The above theorems also allow us to make progress on a long-standing open
problem in theoretical computer science: that of approximately counting proper
colorings of a bounded degree graph using a deterministic algorithm.  The
problem of counting colorings is a benchmark problem in the theory of
approximate counting, due both to its importance in combinatorics and
statistical physics, as well as to the fact that it has repeatedly challenged existing
algorithmic techniques and stimulated the development of new ones.  Below, we
briefly summarize its history and current status.

Given a finite graph $G=(V,E)$ of maximum degree~$\Delta$, and a positive
integer~$q$, the goal is to count the number of (proper) vertex colorings of~$G$
with $q$ colors.  It is well known~\cite{brooks_1941} that a greedy coloring exists
if $q\ge \Delta+1$.  While counting colorings exactly is \#P-complete, a
long-standing conjecture asserts that approximately counting colorings is
possible in polynomial time provided $q\ge \Delta+1$.  It is known that when
$q < \Delta$, even approximate counting is NP-hard~\cite{gsv}.

This question has led to numerous algorithmic developments over the past 25 years.
The first approach was via Markov chain Monte Carlo (MCMC), based on the fact that
approximate counting can be reduced to sampling a coloring (almost) uniformly at random.
Sampling can be achieved by simulating a natural local Markov chain (or Glauber dynamics)
that randomly flips colors on vertices: provided the chain is rapidly mixing, this leads to an
efficient algorithm (a {\it fully polynomial randomized approximation scheme}, or
{\it FPRAS\/}).  

Jerrum's 1995 result~\cite{jerrum1995very} that the Glauber dynamics
is rapidly mixing for $q\ge 2\Delta$ gave the first non-trivial randomized approximation
algorithm for colorings and led to a plethora of follow-up work on MCMC (see, e.g.,
\cite{dyer2003randomly,dfhv,frieze2006randomly,goldberg2004strong,hayes2003randomly,hayes2003non,hayes2005coupling,molloy2004glauber,vigoda2000improved} 
and~\cite{frieze2007survey} for a survey), focusing on reducing the constant~2 in front of~$\Delta$.
The best constant known for general graphs remains essentially $\frac{11}{6}$, obtained 
by Vigoda~\cite{vigoda2000improved} using a more sophisticated Markov chain, though this 
was recently reduced to $\frac{11}{6} - \varepsilon\,$ for a very small~$\varepsilon$
by Chen et al.~\cite{chen2019improved}.  The constant can be substantially 
improved if additional restrictions are
placed on the graph: e.g., Dyer et al.~\cite{dfhv} achieve roughly $q\ge 1.49\Delta$
provided the girth is at least~6 and the degree is a large enough constant, while
Hayes and Vigoda improve this to $q\ge(1+\varepsilon)\Delta$ for girth at least~11
and degree $\Delta=\Omega(\log n)$, where $n$ is the number of vertices.

A significant recent development in approximate counting is the emergence of
{\it deterministic\/} approximation algorithms that in some cases match, or even
improve upon, the best known MCMC algorithms\footnote{In this case, the notion
  of an FPRAS is replaced by that of a {\it fully polynomial time approximation
    scheme}, or {\it FPTAS}.  An FPTAS for $q$-colorings of graphs of maximum
  degree at most $\Delta$ is an algorithm that, given as input the graph $G$
  and an error parameter $\varepsilon\in(0,1)$, produces a $(1\pm\varepsilon)$-factor
  multiplicative approximation of the number of $q$-colorings of $G$ in time
  $\poly{|G|, 1/\varepsilon}$. (The degree of the polynomial is allowed to depend upon
  the constants $q$ and~$\Delta$.)  }.
Interestingly, these algorithms have made use of one of two main techniques,
both of which are inspired by the two different notions of phase transitions
in statistical physics described above.  The first, based on {\it decay of
correlations}, exploits the decreasing influence of the spins (colors) on distant
vertices on the spin at a given vertex; while the second, based on {\it
polynomial interpolation}, uses the absence of zeros of the partition function
in a suitable region of the complex plane to perform a form of algorithmic
analytic continuation.  Early examples of the decay of the decay of correlations
approach include~\cite{Weitz,bandyopadhyay_counting_2008,bayati2007simple},
while for early examples of
the polynomial interpolation method we refer to the monograph of
Barvinok~\cite{barvinok2017combinatorics} (see also
\cite{barvinok2017weighted,helmuth2018algorithmic,patel2017deterministic,jenssen2019algorithms,guo2019zeros,liu2019ising,eldar2018approximating} for more recent examples).

Unfortunately, however, in the case of colorings on general bounded degree
graphs, these techniques have so far lagged well behind the MCMC algorithms
mentioned above.  One obstacle to getting correlation decay to work is the lack
of a
higher-dimensional analog of Weitz's beautiful algorithmic
framework~\cite{Weitz}, which allows correlation decay to be fully exploited via
strong spatial mixing in the case of spin systems with just two spins (as
opposed to the $q\ge 3$ colors present in coloring).  For polynomial interpolation,
the obstacle has been a lack of precise information about the location of the
zeros of associated partition functions.

So far, the best algorithmic condition for colorings obtained via correlation
decay is $q\ge 2.58\Delta + 1$, due to Lu and Yin~\cite{lu2013improved}, and this remains
the best available condition for any deterministic algorithm.  This improved on
an earlier bound of roughly $q\ge 2.78\Delta$ (proved only for triangle-free
graphs), due to Gamarnik and Katz~\cite{gamarnik_correlation_2007}.  For the special case
$\Delta=3$, Lu et al.~\cite{lu2017fptas} give a correlation decay algorithm for
counting 4-colorings.  Furthermore, Gamarnik, Katz and
Misra~\cite{Gamarnik2015SSM} establish the related property of ``strong spatial
mixing" under the weaker condition $q\ge \alpha\Delta + \beta$ for any constant
$\alpha>\alphastar$, where $\alphastar\approx 1.7633$ is the unique solution to
$xe^{-1/x} = 1$ and $\beta$ is a constant depending on~$\alpha$, and under the
assumption that $G$ is triangle-free (see also \cite{ge2011strong,goldberg2004strong} for similar results
on restricted classes of graphs).  However, as discussed in
\cite{Gamarnik2015SSM}, this strong spatial mixing result unfortunately does not
lead to a deterministic algorithm\footnote{The strong spatial mixing condition
  does imply fast mixing of the Glauber dynamics, and hence an FPRAS, but only
  when the graph family being considered is ``amenable'', i.e., if the size of
  the $\ell$-neighborhood of any vertex does not grow exponentially with~$\ell$.
  This restriction is satisfied by regular lattices, but fails, e.g., for random
  regular graphs.}.

The newer technique of polynomial interpolation, pioneered by
Barvinok~\cite{barvinok2017combinatorics}, has also recently been brought to bear on counting
colorings.  In a recent paper, Bencs et al.~\cite{bencs2018zero} use this technique to
derive a FPTAS for counting colorings provided $q\ge e\Delta + 1$.  Although this 
result is weaker than those obtained via correlation deay, it
is of independent interest because it uses a different algorithmic approach, and
because it establishes a new zero-free region for the associated partition
function in the complex plane (see below).

In this paper, we push the polynomial interpolation method further and obtain a
FPTAS for counting colorings under the condition $q\ge 2\Delta$.
\begin{theorem}\label{thm:main}
  Fix positive integers $q$ and $\Delta$ such that $q \geq 2\Delta$.  Then there
  exists a fully polynomial time deterministic approximation scheme (FPTAS) for
  counting $q$-colorings in any graph of maximum degree~$\Delta$.
\end{theorem}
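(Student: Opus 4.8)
The plan is to follow Barvinok's polynomial interpolation paradigm. The number of $q$-colorings of $G$ is the value at $x = 0$ of the Potts partition function $Z_G(x) = \sum_{\sigma: V \to [q]} \prod_{uv \in E} (1 + x\,[\sigma(u)=\sigma(v)])$, which is a polynomial of degree $|E|$ in $x$; evaluating at $x=-1$ kills every coloring with a monochromatic edge, leaving exactly the proper $q$-colorings. Barvinok's method reduces FPTAS-computation of $Z_G(-1)$ to the following two ingredients: (i) a \emph{connected zero-free region} $\Omega \subseteq \mathbb{C}$ containing both $0$ and $-1$, on which $Z_G$ never vanishes for any graph of maximum degree $\Delta$; and (ii) the ability to compute the first $O(\log(|E|/\delta))$ coefficients of $\log Z_G(x)$ (equivalently, the power sums of the inverse roots) in polynomial time. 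Ingredient (ii) is standard for bounded-degree graphs: the low-order coefficients of $\log Z_G$ depend only on constant-size subgraphs, so they can be enumerated and summed in time $\poly(|E|)$ (this is the now-routine ``cluster expansion / Newton's identities'' bootstrap used in \cite{patelregts,barvbook}). So the entire content of \Cref{thm:main} is ingredient (i): proving that $Z_G(x) \neq 0$ on a suitable complex neighborhood of the real segment $[-1,0]$ whenever $q \ge 2\Delta$.

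To prove the zero-freeness I would argue inductively on $|V|$, tracking not $Z_G$ itself but the conditional marginal ratios that drive correlation-decay arguments for colorings: fix a vertex $v$ and a color $c$, and consider the ratio of the partition function with $v$ pinned to $c$ to the partition function without $v$ (or, more symmetrically, ratios $Z_{G}^{\sigma(v)=c}/Z_{G}^{\sigma(v)=c'}$). The standard recursion expresses such a ratio at $v$ in terms of the analogous ratios at the neighbors $u_1,\dots,u_d$ of $v$ in the graph with $v$ deleted, via a multivariate rational map whose building block is, per neighbor, something like $1 - x \cdot \mathcal P_{u}[\sigma(u) = c]$. The goal is to find a complex domain $\mathcal R$ for these ratios — the complex analogue of the ``good'' simplex of sub-probability vectors used in the strong-spatial-mixing analyses of Gamarnik–Katz(–Misra) \cite{gamkatz,Gamarnik2015SSM} — together with a complex domain for $x$ around $[-1,0]$, such that (a) the base case (isolated vertex, uniform marginals) lies in $\mathcal R$, (b) $\mathcal R$ is closed under the recursion whenever $d \le \Delta$ and $q \ge 2\Delta$, and (c) membership in $\mathcal R$ forbids the relevant denominators, hence $Z_G$, from vanishing. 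The constant $2$ should emerge exactly as in Jerrum's coupling / the $2\Delta$ path-coupling threshold: with $q \ge 2\Delta$ each pinned color ``blocks'' at most a $\Delta/q \le 1/2$ fraction of the color budget at a neighbor, which is precisely the contraction margin needed to keep the iterated map inside $\mathcal R$. For the triangle-free refinement I would additionally exploit that the neighbors of $v$ are pairwise non-adjacent, so their blocked color sets behave more independently, recovering the $\alphastar$ threshold of \cite{Gamarnik2015SSM} in the complex setting; here $\beta = \beta(\alpha)$ absorbs the slack needed to make the domain strictly invariant rather than merely invariant.

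The hard part will be step (b): designing the complex invariant domain $\mathcal R$ and verifying it is genuinely preserved by the recursion. Over the reals one works with honest sub-probability vectors and convexity, but in $\mathbb C$ one must choose $\mathcal R$ to be, say, a product of disks or annuli in each color-coordinate (or in each pairwise ratio), and the rational recursion must be shown to map this region strictly into itself — this is where real-analysis monotonicity arguments break and one needs careful modulus/argument estimates, typically the crux of every interpolation-method paper. A secondary subtlety is the bootstrapping between ``zero-freeness of $Z_G$'' and ``boundedness of the marginal ratios'': these are not literally equivalent, so I would set up a joint induction (as in \cite{bdpr,lssising}) where one simultaneously maintains that all relevant sub-partition-functions are nonzero \emph{and} that all ratios lie in $\mathcal R$, peeling off one vertex at a time. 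Finally, once the zero-free region is established, extending from counting colorings to list colorings and to the general anti-ferromagnetic Potts partition function requires only cosmetic changes to the recursion (restricting the color set per vertex, and carrying a general edge activity in place of $x = -1$), so I would state the zero-free region at that level of generality from the outset.
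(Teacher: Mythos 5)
Your overall architecture matches the paper's: reduce \cref{thm:main} to a zero-free region for the Potts polynomial on a complex neighborhood of the segment joining the ``hard'' point (proper colorings) to the trivial point (your $x=0$, the paper's $w=1$), and then invoke Barvinok interpolation with the Patel--Regts coefficient computation; the recurrence you cite for marginal ratios is indeed the Gamarnik--Katz--Misra recurrence that the paper generalizes in \cref{thm:recurrence}. But the entire content of the theorem is your step (b), and there you have only stated the problem, not solved it. The mechanism you offer for why $q\ge 2\Delta$ suffices --- ``each pinned color blocks at most a $\Delta/q\le 1/2$ fraction of the color budget, which is precisely the contraction margin needed to keep the iterated map inside $\mathcal R$'' --- is a heuristic with no accompanying estimate, and the specific strategy you sketch (a fixed, graph-independent invariant region $\mathcal R$, e.g.\ a product of disks in the color coordinates, mapped strictly into itself by the recursion) is essentially the strategy of Bencs et al.\ and of earlier interpolation papers, which at present only reaches $q\ge e\Delta+1$. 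Nothing in your outline explains how a direct invariant-domain argument would close the gap from $e\Delta$ to $2\Delta$; that gap is exactly what the paper identifies as the obstacle.

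The paper's proof works differently at this crucial point, and the difference is the missing idea: rather than asking the complex ratios to stay in an absolute region, it anchors them to the \emph{real} recursion at a nearby real parameter $\tilde w\in[0,1]$, where the quantities are genuine marginal probabilities. The lower bound $q\ge 2\deg(v)$ enters only through the purely real, combinatorial ``niceness'' bound $\Pr_{G,\tilde w}[c(v)=i]\le \frac{1}{d+2}$ (\cref{cond:nice,obv:good-to-nice}), which in turn yields a contraction factor $\rho_R\le \frac{1}{d+\eta}$ when a mean value theorem is applied \emph{separately} to the real and imaginary parts of $f_\gamma(x)=-\ln(1-\gamma e^x)$ (\cref{thm:signed-mean,obv:f-props-int,lem:real-part-ind}); the induction hypothesis (\cref{lem:origin-induction,lem:interval-induction}) then keeps each $\ratio{G,v}{i,j}(w)$ within multiplicative errors $\eps_R,\eps_I$ of $\ratio{G,v}{i,j}(\tilde w)$, uniformly in $|G|$. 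Your proposal also omits the genuinely delicate regime $w\approx 0$: there the ratios for ``bad'' colors (colors pinned at a neighbor) vanish like $w^{n_i}$ and cannot be kept in any fixed annulus, so the paper must split colors into good and bad, prove a separate smallness bound (\cref{ind-4} of \cref{lem:origin-induction}, forcing $|w|\lesssim \eps_w 2^{-\Delta}$), and only then assemble zero-freeness via the angle argument of \cref{lem:geometric-proj}. Without (i) a concrete replacement for the niceness-plus-signed-mean-value contraction and (ii) a treatment of bad colors near the proper-coloring point, your plan does not yet constitute a proof of the $2\Delta$ bound, and as sketched it would more plausibly reproduce the known $e\Delta$-type thresholds.
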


This is the first deterministic algorithm (of {\it any\/} kind) that for all
$\Delta$ matches the ``natural'' bound for MCMC, first obtained by
Jerrum~\cite{jerrum1995very}.  Indeed, {$q \geq 2\Delta$} remains the best bound known
for rapid mixing of the basic Glauber dynamics that does not require either
additional assumptions on the graph or a spectral comparison with another Markov
chain: all the improvements mentioned above require either lower bounds on the
girth and/or maximum degree, or (in the case of Vigoda's result~\cite{vigoda2000improved})
analysis of a more sophisticated Markov chain.  This is for good reason, since
the bound {$q \geq 2\Delta$} coincides with the closely related Dobrushin
uniqueness condition from statistical physics~\cite{salas1997absence}, which in turn
is closely related~\cite{weitz2005combinatorial} to the path coupling method of Bubley and
Dyer~\cite{bubley1997path} that provides the simplest currently known proof of the
{$q\geq 2\Delta$} bound for the Glauber dynamics.

We therefore view our result also as a promising starting point for
deterministic coloring algorithms to finally compete with their randomized
counterparts.
As pointed out above, our technique is capable of harnessing strong spatial
mixing arguments used in the analysis of Markov chains for certain classes of
graphs in order to relax the requirements on $q$.  In particular, for the same
reason as in \MakeUppercase Theorem\nobreakspace \ref {thm:zeros-intro-176} above, we can exploit such an argument
of Gamarnik, Katz and Misra~\cite{Gamarnik2015SSM} to improve the bound on $q$
in~Theorem\nobreakspace \ref {thm:main} when the graph is triangle-free, for all but small values
of~$\Delta$. (Recall that $\alphastar \approx 1.7633$ is the unique positive
solution of the equation $xe^{-1/x}=1$.)
\begin{theorem}\label{thm:main-176}
  For every $\alpha > \alphastar$, there exists a $\beta = \beta(\alpha)$ such
  that the following is true.  For all integers $q$ and $\Delta$ such that
  $q \geq \alpha\Delta + \beta$, there exists a fully polynomial time
  deterministic approximation scheme (FPTAS) for counting $q$-colorings in any
  triangle-free graph of maximum degree~$\Delta$.
\end{theorem}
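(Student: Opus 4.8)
The plan is to follow the same polynomial interpolation strategy used to prove \cref{thm:main}, but to replace the combinatorial input that controls the relevant ratios of partition functions with the stronger estimates available for triangle-free graphs. Recall that the interpolation method reduces the existence of an FPTAS to establishing a zero-free region for the Potts partition function $Z_G(\cdot)$ around the point corresponding to proper colorings, together with a bound on its extent that is uniform over all graphs of maximum degree~$\Delta$. For \cref{thm:main} this zero-free region is certified by a recursive argument that tracks certain ``occupation ratios'' $R_v$ attached to vertices, and shows that along the recursion these ratios stay inside a bounded complex domain provided $q \geq 2\Delta$; the factor~$2$ arises because each step can accumulate a contribution from up to $\Delta$ neighbors, each contributing a term of size roughly $1/q$, and one needs the aggregate contraction factor $\Delta/q$ to be bounded away from~$1$ with enough room to absorb the complex perturbation.

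The key modification is that, when $G$ is triangle-free, the neighbors of any vertex~$v$ form an independent set, so the ``messages'' arriving at~$v$ from distinct neighbors are, in the tree-recursion picture, essentially decoupled over one more level than in the general case. This is exactly the structural feature that Gamarnik, Katz and Misra~\cite{Gamarnik2015SSM} exploit to prove strong spatial mixing for $q \geq \alpha\Delta+\beta$ with any $\alpha>\alphastar$: instead of bounding the per-neighbor contribution by the crude $1/(q-\Delta)$, one obtains a self-improving bound of the form $R_v \le 1/(q - \Delta R_{\text{child}})$ type, whose fixed-point analysis yields the constant $\alphastar$ solving $xe^{-1/x}=1$ rather than~$2$. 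First I would isolate, as a real-variable lemma, precisely the contraction estimate proved in~\cite{Gamarnik2015SSM} — phrased as a bound on the appropriate ratios along the self-avoiding-walk tree recursion — and then I would \emph{complexify} it: show that the same recursion, started from and confined to a small complex neighborhood of the real interval on which the \cite{Gamarnik2015SSM} analysis operates, remains confined to that neighborhood and stays bounded away from the bad values (the poles of the recursion), with quantitative slack that depends only on the gap $\alpha - \alphastar$ and hence translates into an additive constant $\beta=\beta(\alpha)$.

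Concretely the steps are: (i) recall the interpolation framework and reduce \cref{thm:main-176} to exhibiting, for every $\alpha>\alphastar$, a $\beta(\alpha)$ and a complex domain $\mathcal{D}$ (a neighborhood of the relevant real segment, uniform in $n$) such that for every triangle-free $G$ with $\Delta(G)\le\Delta$ and $q\ge\alpha\Delta+\beta$ the partition function $Z_G$ has no zeros in the region of interpolation parameters mapping into $\mathcal{D}$; (ii) set up the tree recursion for the vertex ratios on the self-avoiding-walk tree, using triangle-freeness to peel off two levels at a time and obtain the improved recursion; (iii) prove a \emph{real} contraction/invariance statement recovering the \cite{Gamarnik2015SSM} bound, with an explicit margin as a function of $\alpha-\alphastar$; (iv) perturb to the complex setting, verifying that the margin from~(iii) is large enough that a suitably small complex thickening $\mathcal{D}$ of the real segment is mapped into itself by the recursion and avoids the poles — this is the step where one must be careful that the denominators $q - \sum_{\text{children}} R$ stay bounded away from zero in modulus, not just in real part; (v) conclude zero-freeness of $Z_G$ by the standard argument that identifies zeros with degeneracies of the ratios, and invoke the interpolation machinery (Barvinok, and the now-standard efficient-evaluation-of-Taylor-coefficients argument) to extract the FPTAS.

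The main obstacle I expect is step~(iv): going from the real contraction of~\cite{Gamarnik2015SSM} to a genuine complex zero-free region. Their analysis is a somewhat delicate fixed-point argument that uses monotonicity and convexity properties of real functions (it is, after all, only a strong-spatial-mixing statement, not a statement about complex zeros), and these properties do not survive complexification for free; one must either redo the contraction estimate with an explicitly controlled Lipschitz constant in a complex neighborhood, or find a potential/metric (as is done in the $q\ge 2\Delta$ case and in related interpolation arguments) under which the recursion is a strict contraction on $\mathcal{D}$. Ensuring that the slack is strictly positive — and quantifying how the required thickening of $\mathcal{D}$, and hence $\beta(\alpha)$, blows up as $\alpha\downarrow\alphastar$ — is where the real work lies; the combinatorial reduction to triangle-free structure and the final invocation of the interpolation framework should be comparatively routine given \cref{thm:main}.
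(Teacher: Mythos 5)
Your skeleton (Barvinok interpolation reduced to a zero-free neighborhood of the segment joining the hard point to $w=1$, a GKM-style recursion for vertex ratios, and the triangle-free estimate of~\cite{Gamarnik2015SSM} as the source of the constant $\alphastar$) matches the paper's. But the proof stops exactly at your step~(iv), which you yourself flag as ``where the real work lies'': you propose to \emph{complexify} the Gamarnik--Katz--Misra contraction --- redo their fixed-point analysis with a controlled Lipschitz constant on a complex thickening $\mathcal{D}$, or find a potential metric making the complex recursion contract --- and you give no mechanism for doing so. That is a genuine gap, not a routine verification: the GKM argument rests on monotonicity, positivity and AM--GM-type convexity of real quantities (and on conditioning on distance-two colorings), none of which has a direct complex analogue, and simply postulating a ``margin depending on $\alpha-\alphastar$'' does not produce one. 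Two smaller issues: there is no Weitz self-avoiding-walk tree for multi-spin systems (this is noted in the introduction as precisely the obstruction to correlation-decay algorithms here), so the recursion must be the pinning recursion of \cref{thm:recurrence} on the graphs $\G{k}{i,j}$; and the zero-free region must cover a neighborhood of all of $[0,1]$, with the endpoint $w=0$ requiring separate, more delicate treatment (good vs.\ bad colors, conflicted graphs) than the rest of the interval.

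The paper's resolution of your step~(iv) is structurally different and is the main point of the argument: the GKM-style analysis is \emph{never complexified}. It is used only in the purely real regime, to prove the ``\nice{}ness'' bound $\prob{G,\tw}{c(u)=i}\le \frac{1}{d+2}$ for every real $\tw\in[0,1]$ (\cref{cond:nice}, established for triangle-free graphs with $|L(v)|\ge\alpha\deg(v)+\beta$ in \cref{lem:real-estimate176}, via depth-two conditioning and AM--GM; this is the only place where $\alpha>\alphastar$ and the choice of $\beta(\alpha)$ with $\alpha e^{-(1+1/\beta)/\alpha}\ge 1$ enter). The complex work is then \emph{identical} to the $q\ge 2\Delta$ case: an induction on the number of unpinned vertices (\cref{lem:origin-induction,lem:interval-induction}) showing that $\ratio{G,u}{i,j}(w)$ tracks $\ratio{G,u}{i,j}(\tw)$ for the nearby real $\tw$, where the contraction factor $\frac{1}{d+\eta}$ is obtained by feeding the real niceness bound into a mean value estimate for $f_\gamma(x)=-\ln(1-\gamma e^x)$ applied separately to real and imaginary parts (\cref{thm:signed-mean,obv:f-props-int}); zero-freeness then follows from the angle argument of \cref{lem:geometric-proj}. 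In other words, the missing idea in your proposal is the decoupling of (a) a real probabilistic marginal bound, where all the $\alpha,\beta$-dependence lives, from (b) a generic complex perturbation induction that consumes only that bound --- once you have this, \cref{thm:main-176} and \cref{thm:main} are proved by the same complex argument, and nothing in~\cite{Gamarnik2015SSM} needs to survive complexification.
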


We mention also that our technique applies without further effort to the
more general setting of {\it list\/} colorings, where each vertex has a list of allowed
colors of size~$q$, under the same conditions as above on~$q$.
Indeed, our proofs are written to handle this more general situation.

We now describe in more detail the connection between our results on the zeros of the Potts
model and the above algorithmic results.

\subsubsection{Our approach}
Recall that the polynomial $Z_G(w)$ in eq.\nobreakspace \textup{(\ref {eq:partitionfun})}, being the
partition function of the Potts model, implicitly defines
a probability distribution on colorings~$\sigma$ according to their weights
in~\eqref{eq:partitionfun}.  The parameter $w$ measures the strength of
nearest-neighbor interactions.  The value $w=1$ corresponds to the trivial
setting where there is no constraint on the colors of neighboring vertices,
while $w=0$ imposes the hard constraint that no neighboring vertices receive the
same color.  For intermediate values $w\in (0,1)$, neighbors with the same color
are penalized by a factor of~$w$.  We establish Theorems~\ref {thm:main} 
and~\ref {thm:main-176} as special cases of the following more general theorem.
\begin{theorem}\label{thm:mainPotts}
  Suppose that the hypotheses of either Theorem\nobreakspace \ref {thm:main} or Theorem\nobreakspace \ref {thm:main-176}
  are satisfied, and fix $w \in [0, 1]$. Then there exists an FPTAS for the
  partition function $Z_G(w)$.
\end{theorem}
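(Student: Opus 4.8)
The plan is to apply Barvinok's polynomial interpolation method~\cite{barvbook} to the polynomial $Z_G(w)$ of~\eqref{eq:partitionfun}. It has degree $m(G) := |E(G)| \le n\Delta/2$ and is pinned down at $w=1$, where $Z_G(1) = q^n$; the target $w\in[0,1]$ is joined to this anchor by the real segment $[w,1]$, on which $Z_G$ is trivially positive (a sum of non-negative terms, with $Z_G(0)\ge 1$ since $q\ge\Delta+1$, by Brooks' theorem~\cite{brooks}). The whole difficulty is to thicken this segment into a \emph{complex} region on which $Z_G$ is still zero-free; granting that, the FPTAS follows from standard machinery, and the case $w=0$ of the resulting statement is exactly \cref{thm:main,thm:main-176}.

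\textbf{Step 1 (the crux): a $G$-independent zero-free region.}
I would establish the existence of a bounded connected open set $\mathcal{U}=\mathcal{U}(q,\Delta)\subseteq\mathbb{C}$ with $[0,1]\subseteq\mathcal{U}$ such that $Z_G(w)\ne 0$ for every $w\in\mathcal{U}$, uniformly over all graphs $G$ of maximum degree at most $\Delta$ when $q\ge 2\Delta$, and over all \emph{triangle-free} such $G$ when $q\ge\alpha\Delta+\beta$ with $\alpha>\alphastar$ (i.e.\ in the settings of \cref{thm:main} and \cref{thm:main-176} respectively). The natural route is induction on $|V(G)|$: delete a vertex $v$ of degree $\le\Delta$ and control the ``occupation ratios''
\begin{equation*}
  R_{v,c}(w) \ := \ Z_G^{\sigma(v)=c}(w) \, / \, Z_{G\setminus v}(w)
  \qquad (c\in[q]),
\end{equation*}
which satisfy a tree-style recursion expressing them via the analogous ratios in $G\setminus v$ (obtained by revealing the colours of $v$'s neighbours one at a time). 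For real $w\in[0,1]$ the $R_{v,c}(w)$ are genuine unnormalized marginals, and $q\ge2\Delta+1$ is precisely the regime where the Dobrushin/path-coupling analysis~\cite{bd} makes this recursion contract; the substantive step is to carry that contraction into a complex neighbourhood, i.e.\ to produce an invariant region in $\mathbb{C}$ (e.g.\ a small disc contained in the open right half-plane) in which the $R_{v,c}(w)$ remain, and on which the recursion stays contractive, for all $w\in\mathcal{U}$. Since $Z_G(w) = \sum_{c\in[q]} Z_G^{\sigma(v)=c}(w) = Z_{G\setminus v}(w)\sum_{c} R_{v,c}(w)$ and a sum of numbers with positive real part cannot vanish, $Z_G(w)\ne0$ follows inductively. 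For \cref{thm:main-176} one feeds the triangle-free strong-spatial-mixing estimates of Gamarnik, Katz and Misra~\cite{Gamarnik2015SSM} into the same scheme, and this is what pushes the leading constant down towards $\alphastar\approx1.7633$.

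\textbf{Step 2: from zero-freeness to an FPTAS.}
Granting Step~1, this is routine~\cite{barvbook,patelregts}. Expand $\log\bigl(Z_G(w)/q^n\bigr)$ in powers of $(w-1)$; from the Whitney/Tutte subgraph expansion
\begin{equation*}
  Z_G(w) \ = \ q^{n}\sum_{F\subseteq E}(w-1)^{|F|}\, q^{\kappa(F)-n},
\end{equation*}
where $\kappa(F)$ is the number of connected components of $(V,F)$, the usual cancellations show that the coefficient of $(w-1)^j$ in the logarithm is a sum over \emph{connected} subgraphs of $G$ with at most $j$ edges; there are at most $n\cdot\Delta^{O(j)}$ of these and each contributes a term evaluable in $O(1)$ arithmetic operations, so the first $k$ coefficients are computable in time $n\cdot\Delta^{O(k)}$. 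Since $Z_G$ is zero-free on $\mathcal{U}$, which contains both $1$ and the target $w$, Barvinok's truncation bound shows that cutting the series at degree $k = O\bigl(\log(m(G)/\delta)\bigr) = O(\log(n/\delta))$ estimates $\log Z_G(w)$ within additive error $\delta$; if $\mathcal{U}$ is only a thin lens around $[0,1]$ rather than a disc, precompose first with a fixed polynomial of degree $O(1)$ (depending only on the width of $\mathcal{U}$, hence only on $q,\Delta$) that maps a disc of radius $>1$ into $\mathcal{U}$ and sends $0\mapsto 1$, $1\mapsto w$, exactly as in~\cite{barvbook} --- this merely scales the degree by a constant. Exponentiating gives the desired $(1\pm\delta)$-multiplicative approximation to $Z_G(w)$; since $q$ and $\Delta$ are fixed, the running time $n\cdot\Delta^{O(\log(n/\delta))}$ is $\poly{n,1/\delta}$.

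\textbf{The main obstacle.}
Everything outside Step~1 is bookkeeping; the real work is the complexification inside it. The Dobrushin and strong-spatial-mixing arguments are written for honest probabilities in $[0,1]$, and their couplings/telescoping estimates do not obviously persist once $w$ leaves the real axis. One must identify the right invariant region for the ratios $R_{v,c}(w)$ and verify that the recursion contracts there for $w$ ranging over a genuine two-dimensional neighbourhood of $[0,1]$ --- which is exactly the point where, as the abstract advertises, ``probabilistic and combinatorial ideas that have been used in the analysis of Markov chains'' get imported into the complex-analytic setting, and also where the two hypotheses ($q\ge2\Delta$ in general, versus $q\ge\alpha\Delta+\beta$ for triangle-free graphs) are handled by different combinatorial inputs.
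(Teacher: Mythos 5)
Your Step 2 is essentially the paper's own (Appendix~\ref{sec:app_algorithm}): once one has a zero-free neighbourhood of $[0,1]$ of width depending only on $q$ and $\Delta$, Barvinok's interpolation together with the Patel--Regts connected-subgraph computation of the low-order Taylor coefficients yields the FPTAS, and your accounting of degrees and running time is consistent with theirs. The gap is Step 1, which you yourself call ``the crux'' but leave as a plan: the claim that the Dobrushin/path-coupling contraction can be ``carried into a complex neighbourhood'' --- i.e.\ that there is an invariant disc in the right half-plane in which the ratios $R_{v,c}(w)$ remain and on which the vertex-deletion recursion contracts, uniformly over all graphs of degree at most $\Delta$ --- is exactly the content of \cref{thm:zeros}, is never established, and is where all the work of the paper lies (Sections~\ref{sec:prop-real-valu}--\ref{sec:induction-interval}). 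Moreover, the specific route you sketch is one the authors argue against: a contraction argument of correlation-decay type must tolerate arbitrary initializations of the recursion, and that requirement is precisely what has kept deterministic bounds at $q\ge 2.58\Delta+1$; the occupation ratios $R_{v,c}(w)=Z_G^{\sigma(v)=c}(w)/Z_{G\setminus v}(w)$ you propose are essentially those used by Bencs et al., whose zero-free region needs the stronger hypothesis $q\ge e\Delta+1$. So it is not clear the invariant-region-plus-contraction statement you posit is even available at $q\ge 2\Delta$.

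What the paper actually does is different in the decisive step. It works with same-vertex marginal ratios $\ratio{G,u}{i,j}(w)=\Z{G,u}{i}(w)/\Z{G,u}{j}(w)$ and the Gamarnik--Katz--Misra-style recurrence (\cref{thm:recurrence}), and proves by induction on the number of unpinned vertices that for complex $w$ near a real $\tw\in[0,1]$ these ratios stay close in modulus and, crucially, in argument to their values at $\tw$; zero-freeness then follows because the restricted partition functions $\Z{G,u}{i}(w)$ point in nearly the same direction (\cref{lem:geometric-proj}), with small-modulus ``bad-color'' terms handled separately near $w=0$. The error control at each inductive step comes not from contraction against arbitrary start points but from properties of the \emph{true} real-valued marginals --- the ``niceness'' bound $\Pr{G,\tw}{c(u)=i}\le \frac{1}{d+2}$ (\cref{cond:nice,obv:good-to-nice}), which is the only place the hypotheses $q\ge 2\Delta$ or the triangle-free condition of \cref{thm:main-176} enter --- combined with a mean value theorem applied separately to the real and imaginary parts of $f_\gamma$ (\cref{thm:signed-mean,obv:f-props-int}), and a delicate case analysis split between $w$ near $0$ (\cref{lem:origin-induction}) and $w$ near $(0,1]$ (\cref{lem:interval-induction}). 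Your architecture (zero-free region $\Rightarrow$ interpolation $\Rightarrow$ FPTAS) matches the paper, but the theorem's substance --- the uniform zero-free region of \cref{thm:zeros} --- is missing from your proposal, and the mechanism you suggest for it is the one the paper was designed to circumvent.
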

Note that Theorems~\ref {thm:main} and~\ref {thm:main-176} follow immediately
as the special case $w=0$ of Theorem~\ref {thm:mainPotts}; however, 
the extension to other values of~$w$ is of independent
interest as the computation of partition functions is a very active area of
study in statistical physics and combinatorics.

Theorem~\ref{thm:mainPotts} is obtained from our 
main result, \MakeUppercase Theorem\nobreakspace \ref {thm:zeros}, by
appealing to a recent algorithmic paradigm of
Barvinok~\cite{barvinok2017combinatorics}. The paradigm (see Lemma~2.2.3
of~\cite{barvinok2017combinatorics}) states that, for a partition function~$Z$
of degree~$m$, if one can identify a simply connected, zero-free
region~$\mathcal{D}$ for~$Z$ in the complex plane that contains a
$\tau$-neighborhood of the interval $[0, 1]$, and a point in that interval where
the evaluation of~$Z$ is easy (in our setting this is the point $w = 1$), then
using the first ${O}\inp{e^{\Theta(1/\tau)}\log\inp{m/\eps}}$ coefficients of
$Z$ one can obtain a $1\pm\eps$ multiplicative approximation of~$Z(x)$ at any
point $x\in{\mathcal{D}}$.  Barvinok's framework is based on exploiting the fact
that the zero-freeness of $Z$ in $\mathcal{D}$ is equivalent to $\log Z$ being analytic
in~$\mathcal{D}$, and then using a carefully chosen transformation to deform~$\mathcal{D}$
into a disk (with the easy point at the center) in order to obtain a convergent Taylor
expansion.  The coefficients of $Z$ are used to compute the coefficients of this
Taylor expansion.

Barvinok's framework in general leads to a {\it quasi-polynomial time\/} algorithm, because
the computation of the ${O}\inp{e^{\Theta(1/\tau)}\log\inp{m/\eps}}$ terms of the
expansion may take time ${O}\bigl({\inp{m/\eps}^{e^{\Theta(1/\tau)}\log m}}\bigr)$
(which is only quasi-polynomial in $m$) for the
partition functions considered here.  However, additional insights provided by
Patel and Regts~\cite{patel2017deterministic} (see, e.g., the proof of Theorem 6.2 in
\cite{patel2017deterministic}) show how to reduce this computation time to
${O}\bigl({\inp{m/\eps}^{e^{\Theta(1/\tau)}\log \Delta}}\bigr)$ for many models on
graphs of degree at most~$\Delta$, including the Potts model with a bounded
number of colors~$q$ at each vertex.  Hence we obtain an FPTAS.  
This (by now standard) reduction is the same path as that followed by
Bencs et al.~\cite[Corollary~1.2]{bencs2018zero}; for completeness, we sketch some
of the details in Section~\ref{sec:app_algorithm}.
We note that, for each fixed~$\Delta$ and~$q$, the running time of our final
algorithm is polynomial in~$n$ (the size of~$G$) and~$\varepsilon^{-1}$, as
required for an FPTAS.  However, as is typical of deterministic algorithms for
approximate counting, the exponent in the polynomial depends on~$\Delta$
(through the quantity $\tau_\Delta$ in Theorem\nobreakspace \ref {thm:zeros}, 
which in the case where all lists are subsets of $[q]$ is inverse polynomial in~$q$).

We conclude this introduction by sketching our approach to proving the
zero-freeness results, which constitute the main technical contribution of
the paper.

\subsection{Technical overview}
\label{sec:technical-overview}
We start with an outline of the proofs of our results for two-spin systems,
including Theorem~\ref{thm:af-ising-intro} and our simplified re-proofs of previous
zero-freeness results.  A standard observation in the
area is that proving $Z_G(\beta, \lambda) \neq 0$ is equivalent to showing
that the \emph{occupation ratio} $R_{v, G}(\beta, \lambda)$ at a fixed vertex $v$,
defined as the ratio of the sum of those terms in the partition function where
the vertex $v$ has spin $+$ to the sum of those terms in the partition function
where the vertex $v$ has spin $-$, is not equal to~$-1$.  In order to analyze this
quantity, another standard step is to use an observation of
Weitz~\cite{Weitz}\footnote{The ideas behind Weitz's reduction first appeared in
  the work of Godsil~\cite{godsil_matchings_1981}, and also later in the work of
  Scott and Sokal~\cite{ScottSokal}.},  which allows one to transfer the question
from general graphs to trees.  More precisely, for any fixed vertex $v$ in the
graph~$G$, Weitz's theorem constructs a finite tree $T = T_{v, G}$ (with carefully
chosen boundary conditions at the leaves), of maximum degree at most the
maximum degree of $G$, such that if $\rho$ is the root of $T$ then
$R_{\rho, T}(\beta, \lambda) = R_{v, G}(\beta, \lambda)$ for all positive real
$\beta, \lambda$.  On a tree, one can easily write down a recurrence for the
occupation ratio, and the problem then reduces to proving that, with initial
conditions corresponding to the boundary conditions of~$T$,
the recurrence never reaches~$-1$.

The convergence properties of such recurrences have been analyzed before, in the
context of proving that \emph{weak spatial mixing\/} (or uniqueness of the Gibbs
measure) on the infinite $\Delta$-regular tree implies \emph{strong spatial
  mixing\/} on all graphs of maximum degree~$\Delta$, for the hard core
model~\cite{Weitz,SSSY15,li_correlation_2011} and the Ising model with and
without field~\cite{sinclair_approximation_2012,zhaliabai09}.  These analyses,
which are restricted to positive, real values of the parameters, often take the
form of showing that the recurrence for an appropriate function $\phi(R)$ of the
occupation ratio is a uniform contraction.  Our main contribution is to show
that the arguments in the above references are in fact robust enough that one
can extend them to a {\it complex\/} neighborhood (independent of the size
of the graph) of the real intervals on which they hold.
Thus the behavior of the recurrence in this neighborhood
remains close to what one sees for positive real parameters, and in particular
the value of the occupation ratio remains away from $-1$, thus establishing
zero-freeness.

The situation is more complicated for the case of the Potts model (where the
number of spins is more than two), since neither the translation to trees, nor
the tight recurrence analyses for tree recurrences is known.  The starting
point for our proof of \MakeUppercase Theorem\nobreakspace \ref {thm:zeros} is a simple geometric observation,
versions of which have been used before for constructing inductive proofs of
zero-freeness of partition functions (see, e.g.,
\cite{barvinok2017combinatorics,bencs2018zero}). Fix a vertex $v$ in the graph
$G$.  Given $w \in \C$ and a color $k \in [q]$, let $\Z{v}{k}(w)$ denote the
\emph{restricted partition function} in which one sums only over those colorings
$\sigma$ in which $\sigma(v) = k$.  Then, since
$Z_G(w) = \sum_{k \in [q]}\Z{v}{k}(w)$, the zero-freeness of $Z_G$ will follow
if the angles between the complex numbers $\Z{v}{k}(w)$, viewed as vectors in
$\R^2$, are all small, and provided that at least one of the $\Z{v}{k}$ is
non-zero.  (In fact, this condition on angles can be relaxed for those
$\Z{v}{k}(w)$ that are sufficiently small in magnitude, and this flexibility will be
important for us when $w$ is a complex number close to $0$.)  Therefore, one is
naturally led to consider the so-called \emph{marginal ratios}:
\begin{displaymath}
  \ratio{G,v}{i,j}(w) \defeq \frac{\Z{v}{i}(w)}{\Z{v}{j}(w)}.
\end{displaymath}
(In the $q$-coloring problem, this ratio is $1$ by symmetry. However, in our recursive approach we have to handle the more general list-coloring problem, in which the ratio becomes non-trivial.)
We then require that, for any two colors $i, j$ for which $\Z{v}{k}(w)$ is large
enough in magnitude, the ratio $\ratio{G,v}{i,j}(w)$ is a complex number with small
argument.
This is what we prove inductively in
Sections\nobreakspace \ref {sec:induction-origin} and\nobreakspace  \ref {sec:induction-interval}.

The broad contours of our approach as outlined so far are quite similar to some
recent work~\cite{barvinok2017combinatorics,bencs2018zero}.
However, it is at the crucial step of how the marginal ratios are analyzed that
we depart from these previous results.  Instead of attacking the restricted
partition functions or the marginal ratios directly for given $w \in \C$, as in
these previous works, we crucially exploit the fact that for any real
$\tw \in [0, 1]$ close to the given $w$, these quantities have natural
probabilistic interpretations, and hence can be much better understood via
probabilistic and combinatorial methods.  For instance, when $\tw \in [0, 1]$,
the marginal ratio $\ratio{G,v}{i,j}(w)$ is in fact a ratio of the marginal
probabilities $\Pr{G,\tw}{\sigma(v) = i}$ and $\Pr{G,\tw}{\sigma(v) = j}$, under
the natural probability distribution on colorings~$\sigma$.
In fact, our analysis cleanly breaks into two separate
parts:
\begin{enumerate}
\item First, understand the behavior of true marginal probabilities of
  the form $\Pr{G,\tw}{\sigma(v) = i}$ for real $\tw \in [0, 1]$.  This is carried
  out in Section\nobreakspace \ref {sec:prop-real-valu}.
\item Second, argue that, for complex $w \approx \tw$,
	the ratios $\ratio{G,v}{i,j}(w)$ remain well-behaved.  This is carried
  out separately for the two cases when $w$ is close to $0$ (in
  Section\nobreakspace \ref {sec:induction-origin}) and when $w$ is bounded away from $0$ but still
  in the vicinity of $[0, 1]$ (in Section\nobreakspace \ref {sec:induction-interval}).
\end{enumerate}

A key technical point in our analysis is the notion of ``\nice{}ness'' of
vertices, which stipulates that the marginal
probability $\Pr{G,\tw}{\sigma(v) = i} \le \frac{1}{\deg_G(v) +2}$ where $\deg_G(v)$ is the degree of~$v$ in~$G$
(see~Definition\nobreakspace \ref {cond:nice}).
Note that this condition
refers only to real non-negative~$\tw$, and hence is amenable to
analysis via standard combinatorial tools.  Indeed, 
our proofs that the conditions on $q$ and $\Delta$
in~Theorems\nobreakspace \ref {thm:main} and\nobreakspace  \ref {thm:main-176} imply this \nice{}ness condition are
similar to probabilistic arguments used by Gamarnik et al.~\cite{Gamarnik2015SSM}
to establish strong spatial mixing (in the special case $\tw = 0$).  
We emphasize that this is the only place in our analysis where the lower bounds
on~$q$ are used.
One can therefore expect
that combinatorial and probabilistic ideas used in the analysis of strong
spatial mixing and the Glauber dynamics with a smaller number of colors in special
classes of graphs can be combined with our analysis to obtain deterministic
algorithms for those settings; indeed, our Theorem~\ref{thm:main-176}
demonstrates ths point for triangle-free graphs, leveraging the strong spatial
mixing argument of~\cite{Gamarnik2015SSM}.

The above ideas are sufficient to understand the real-valued case (part~1 above).
For the complex case in part~2, we start from a recurrence for the marginal
ratios $\ratio{G,v}{i,j}$ that is a generalization
(to the case $w \neq 0$) of a similar recurrence used by 
Gamarnik et al.~\cite{Gamarnik2015SSM}; this recurrence is defined
in \MakeUppercase Lemma\nobreakspace \ref {thm:recurrence}. 
The inductive proofs in Sections\nobreakspace \ref {sec:induction-origin} and\nobreakspace  \ref {sec:induction-interval} use this
recurrence to show that, if $\tw \in [0, 1]$ is close to
$w \in \C$, then all the relevant $\ratio{G,v}{i,j}(w)$ remain close to
$\ratio{G,v}{i,j}({\tw})$ throughout.  The actual induction,
especially in the case when $w$ is close to $0$, requires a delicate choice of
induction hypotheses (see Lemmas\nobreakspace \ref {lem:origin-induction} and\nobreakspace  \ref {lem:interval-induction}).
The key technical idea is to use the ``niceness'' property of vertices established in part~1
to argue that the two recurrences (real and complex) remain close at every
step of the induction.
This in turn depends upon a careful application of the mean value theorem,
\emph{separately} to the real and imaginary parts (see Lemma\nobreakspace \ref {thm:signed-mean}),
of a function $f_\kappa$ that arises naturally in the analysis of the recurrence (see
Lemma\nobreakspace \ref {obv:f-props-int}).

\subsubsection{Comparison with correlation-decay based algorithms} We conclude
this overview with a brief discussion of how we are able to obtain a better
bound on the number of colors than in correlation decay algorithms,
such as~\cite{gamarnik_correlation_2007,lu2013improved} cited earlier.
In these algorithms, one first uses recurrences similar to the one mentioned above to
\emph{compute} the marginal probabilities, and then appeals to self-reducibility
to compute the partition function.  Of course, expanding
the full tree of computations generated by the recurrence will in general give
an exponential time (but exact) algorithm.  The core of the analysis of these
algorithms is to exploit the correlation decay property to show that, even if this
tree of computations is only expanded to
depth about $O(\log(n/\eps))$, and the recurrence at that point is initialized
with \emph{arbitrary} values, the computation still converges to an
$\eps$-approximation of the true value.  However, the requirement that the
analysis be able to deal with arbitrary initializations implies that one cannot
directly use properties of the actual probability distribution (e.g., the
``\nice{}ness'' property alluded to above);  indeed, this issue is also pointed
out by Gamarnik et al.~\cite{Gamarnik2015SSM}.
In contrast, our analysis does not truncate the recurrence,
and thus only has to handle initializations that make sense in the context
of the graph being considered.  Moreover, the exponential size of the 
recursion tree is no longer a barrier for us since, in contrast to correlation
decay algorithms, we are using the tree only as a tool to establish zero-freeness;
the algorithm itself follows from Barvinok's polynomial interpolation paradigm.
Our approach suggests that this paradigm can be viewed as a method for
using (complex-valued generalizations of) strong spatial mixing results
to obtain deterministic approximation algorithms.

\newcommand{\betarange}{\ensuremath{(\frac{\Delta - 2}{\Delta}, \frac{\Delta }{\Delta - 2})}}
\newcommand{\bt}{\ensuremath{{{\beta'}}}}
\newcommand{\fpk}{f_{\beta,k}^{\varphi}}
\newcommand{\fpsk}{f_{\beta,k,s}^{\varphi}}
\newcommand{\Fpsk}{F_{\beta,k,s}^{\varphi}}
\newcommand{\Fpbsk}{F_{\beta',k,s}^{\varphi}}
\newcommand{\fpb}{f_{\beta',k}^{\varphi}}
\section{Correlation decay implies absence of zeros}
In this section, we present a sequence of results relating correlation decay and
the absence of zeros for two-spin systems.\footnote{{The results in this
    section were first reported in JL's PhD thesis~\cite{JingchengThesis}.
    Subsequently, similar results, in a slightly more general context, have
    independently been obtained by Shao and Sun~\cite{shaosun19}.}}  In addition
to their intrinsic interest, these results will also serve as a ``warm-up'' to
our results on the Potts model, which use similar methods in a more complex
setting.  We begin by
re-proving the main result of~\cite{liu2018fisher} on the Fisher zeros of the
Ising model (without external field).  While the proof in~\cite{liu2018fisher}
also implicitly used correlation decay, here we rewrite the argument as a
special case of a more general method for ``lifting'' already known correlation
decay results for various models to the complex plane.  We go on to apply this
generic method to prove new results on the Lee-Yang zeros of the
anti-ferromagnetic Ising model (with field), and to give a new, simpler proof of
the Sokal conjecture (first proved by Peters and Regts~\cite{peters17:_sokal})
on the zeros of the hard core partition function.  The ideas developed here will
be extended to the Potts model in the later sections of the paper.
\subsection{Ising model}
\label{sec:outline-proof}
In this section we show that there are no {\it Fisher zeros\/} of the Ising model
in a complex neighborhood around the correlation
decay interval of the infinite $\Delta$-regular tree (Bethe lattice).  This gives a different proof of the main
result of~\cite{liu2018fisher}, making the role of the correlation decay
arguments in the real domain more explicit.

Recall from~eq.\nobreakspace \textup {(\ref{eq:ising-partition})} 
that, given a graph $G$, an edge activity~$\beta$ and a vertex 
activity~$\lambda$, the Ising partition function is defined as
$Z_G(\beta, \lambda) = \sum_\sigma \beta^{d(\sigma)}\lambda^{p(\sigma)}$,
where $d(\sigma)$ is the number of edges between different spins,
and $p(\sigma)$ is the number of vertices with spin~$+$.
Formally, we view this partition function as a polynomial in $\beta$ for a fixed
$\lambda$, and study the complex zeros in $\beta$; these are known as 
{\it Fisher zeros}.  In fact, in this section we
fix $\lambda=1$, and hence we will simply write $Z_G(\beta)\defeq Z_G(\beta,1)$
for the rest of the section.  The correlation decay interval for the Ising model
has been well studied:
the Gibbs distribution of the
Ising model on any graph of maximum degree~$\Delta$ exhibits decay
of correlations when $\beta$ lies in the interval $\betarange$~\cite{zhaliabai09},
which corresponds exactly to the correlation decay interval for the
$\Delta$-regular tree~\cite{georgii_hans-otto_gibbs_2011}.
The main result of this section will be~Corollary\nobreakspace
\ref {thm:main-fisher}, which says that there is a complex neighborhood of the
correlation decay interval in which there are no Fisher zeros for the Ising model
on any graph of maximum degree~$\Delta$.  This provides a formal link between the 
``decay of correlations'' and ``analyticity of free energy density'' views of phase
transitions.  Further, as discussed in more detail in~\cite{liu2018fisher}, this
zero-freeness result also implies the existence of efficient approximation
algorithms for the partition function $Z_G(\beta)$ via Barvinok's paradigm
discussed in~Section\nobreakspace \ref {sec:app_algorithm}.

We recall some notation and definitions from~\cite{liu2018fisher}.  Let
$G$ be any graph of maximum
degree~$\Delta$.  For any non-isolated vertex~$v$ of~$G$, let $Z_{G,
  v}^+(\beta)$ (respectively, $Z_{G,
  v}^-(\beta)$) be the contribution to
$Z_{G}(\beta)$ from configurations with
$\sigma(v)=+$ (respectively, $\sigma(v)=-$), so that $Z_{G}(\beta) = Z_{G,
  v}^+(\beta) + Z_{G, v}^-(\beta)$.  We also define the ratios $R_{G,v}(\beta)
:= \frac{Z_{G, v}^+(\beta)}{Z_{G, v}^-(\beta)}$.  Note that $Z_{G,
  v}^+(\beta)$ and $Z_{G,
  v}^-(\beta)$ can be seen as Ising partition functions defined on the same
graph $G$ with the vertex
$v$ \emph{pinned} to the appropriate spin.  Without loss of generality, we
assume that every pinned vertex has degree exactly one.\footnote{ Suppose that a
  vertex $v$ of degree $k$ is pinned in a graph $G$, and consider the graph
  $G'$ obtained by replacing $v$ with
  $k$ copies of itself, each pinned to the same spin and connected to exactly
  one of the original neighbors of $v$.  Then $Z_G(\beta) =
  Z_{G'}(\beta)$ for all
  $\beta$.  } We will prove, inductively on the number of unpinned vertices,
that neither $Z_{G,v}^+(\beta)$ nor $Z_{G,
  v}^-(\beta)$ vanishes.  Under this induction hypothesis, the condition
$Z_{G}(\beta) \neq 0 $ is clearly equivalent to $R_{G, v}(\beta) \neq
-1$.  As we will see, for $\beta \in \R$, $R_{G, v}(\beta)
>0$.  Thus it suffices to show that, for complex
$\beta$ sufficiently close to the correlation decay interval on the real line,
$R_{G, v}(\beta) \approx R_{G, v}(\Re \beta) $.

As in~\cite{liu2018fisher}, our development in this section is also based on the formal
recurrences derived by Weitz~\cite{Weitz} for computing ratios such as
$R_{G, v}(\beta)$ in two-state spin systems.  However, instead of following
\cite{liu2018fisher}, where Weitz's reduction to the so-called self-avoiding walk
tree was used directly, we provide here a self-contained description in a form
that is a simplification of the more complicated recurrences for the Potts model
that we study in Sections~\ref{sec:preliminaries} and beyond.

We start with some notation and definitions.  For a vertex $u$ in a graph $G$,
if $u$ has $s^+$ neighbors pinned to spin~$+$, and $s^-$ neighbors pinned to
spin~$-$, then we say that $u$ has $(s^- - s^+)$ \emph{signed pinned neighbors}.

\begin{definition}[\textbf{The graphs $G_i$}]
  \label{def:graphs-gi}
  Given a graph $G$ and an unpinned vertex $u$ in $G$, let $v_1, \cdots, v_k$ be
  the unpinned neighbors of $u$.  We define $G_i$ (the vertex $u$ will be
  understood from the context) to be the graph obtained from $G$ as follows:
  \begin{itemize}
	  \item first, replace vertex $u$ with $u_1, \cdots, u_k$, and
    connect $u_1$ to $v_1$, $u_2$ to $v_2$, and so on;
  \item next, pin vertices $u_1, \cdots, u_{i-1}$ to spin $+$, and vertices
    $u_{i+1}, \cdots, u_{k}$ to spin $-$;
  \item finally, remove vertex $u_i$.
  \end{itemize}
  Note that the graph $G_i$ has one fewer unpinned vertex than $G$.
  Moreover, the number of unpinned neighbors of $v_i$ in $G_i$ is at most $\Delta-1$.
\end{definition}
\begin{lemma}
	Let $\omega$ be a formal variable. Given a graph $G$ and an unpinned vertex $u$, let $k$ be the number of unpinned neighbors of $u$, and $s$ be the number of signed pinned neighbors of $u$. 
	Defining $h_\omega(x) \defeq \frac{\omega + x}{\omega x + 1}$, we have
	\[
		R_{G,u}(\omega) = \omega^s \prod_{i=1}^k h_\omega\inp{ R_{G_i, v_i}(\omega) }.
	\]
	\label{lem:weitz-Fisher}
\end{lemma}
\begin{remark}
  (i)~Note that the above formal equalities becomes valid numerical equalities
  when a numerical value $\beta \in \C$ is substituted for
  $\omega$, provided that (a) $\beta x_i + 1 \neq 0$ for any $\vec{x}$ appearing in
  the computation, and (b) $Z_{G, u}^{-}(\beta) \neq 0$.  \ \ (ii)~Moreover, since the
  number of unpinned neighbors of $v_i$ in $G_i$ is at most $\Delta-1$, the tree
  recurrence will be applied with $k \leq \Delta - 1$ except possibly at the
  root where $k$ may be~$\Delta$.  
  \end{remark}

\begin{proof}
	Let $v_1, v_2, \cdots, v_k$ be the unpinned neighbors of $u$, and $v_{k+1}, \cdots, v_{\deg_G(u)}$ be its pinned neighbors.
  For $0 \leq i \leq \deg_G(u)$, let $H_i$ be the graph obtained from $G$ as follows:
  \begin{itemize}
  \item replace vertex $u$ with $u_1, \cdots, u_{\deg_G(u)}$, and
    connect $u_1$ to $v_1$, $u_2$ to $v_2$, and so on;
  \item pin vertices $u_1, \cdots, u_{i}$ to spin $+$, and vertices
    $u_{i+1}, \cdots, u_{\deg_G(u)}$ to spin $-$.
  \end{itemize}
  Note that $H_i$ is the same as $G_i$, except that the last step of the construction of $G_i$ is skipped, i.e, the vertex $u_i$ is not removed,
  and, further, $u_i$ is pinned to spin $+$.  We can now write
	\begin{align}  \label{eq:astmp1}
		R_{G,u}(\omega)
	=\frac{Z_{G,u}^{+}(\omega)}{Z_{G,u}^{-}(\omega)}
	=\frac{Z_{H_{\deg_G(u)}} (\omega)}{Z_{H_0}(\omega)}
	=\prod_{i=1}^{\deg_G(u)} \frac{Z_{H_i}(\omega)}{Z_{H_{i-1}}(\omega)} 
	=\omega^s \cdot \prod_{i=1}^{k} \frac{Z_{H_i}(\omega)}{Z_{H_{i-1}}(\omega)} ,
	\end{align}
where $k,s$ are the numbers of unpinned neighbors and signed pinned neighbors,
	respectively, of~$u$.
    We observe that
	\begin{align*}
		Z_{H_i}(\omega) = Z_{G_i,v_i}^{+} + \omega \cdot Z_{G_i,v_i}^{-};\\
		Z_{H_{i-1}}(\omega)= \omega \cdot Z_{G_i,v_i}^{+} + Z_{G_i,v_i}^{-}.
	\end{align*}
	Substituting these expressions into~eq.\nobreakspace \textup {(\ref{eq:astmp1})} gives
	\[
		R_{G,u}(\omega) =\omega^s \cdot \prod_{i=1}^{k}\frac{Z_{G_i,v_i}^{+} + \omega \cdot Z_{G_i,v_i}^{-}}{\omega \cdot Z_{G_i,v_i}^{+} + Z_{G_i,v_i}^{-}} 
		= \omega^s \cdot \prod_{i=1}^{k}\frac{\frac{Z_{G_i,v_i}^{+}}{Z_{G_i,v_i}^{-}} + \omega }{\omega \cdot \frac{Z_{G_i,v_i}^{+}}{Z_{G_i,v_i}^{-}} + 1}
		= \omega^s \prod_{i=1}^k h_\omega\inp{ R_{G_i, v_i}(\omega) }.
	\]
This completes the proof.
\end{proof}

Lemma\nobreakspace \ref {lem:weitz-Fisher} leads to the following recurrence
relation on the ratios:
\begin{equation}
  \label{eq:recurrence-tmp}
  F_{\beta, k, s}(\vec{x}) \defeq \beta^s \prod_{i=1}^k h_\beta(x_i),
\end{equation}
where as before $h_\beta(x) \defeq \frac{\beta +x}{\beta x + 1}$.  As in several
previous studies of this recurrence in the literature (see,
e.g.,~\cite{lyons_ising_1989,zhaliabai09,liu2018fisher}), it is useful to re-parameterize it in terms of
logarithms of likelihood ratios as follows.  Let $\varphi(x) \defeq \log x$ and
define
\begin{equation}
  \label{eq:recurrence-potential}
  \Fpsk(\bm x) \defeq \inp{\varphi \circ F_{\beta,k, s} \circ \varphi^{-1}} (\bm x) = s \log \beta + \sum_{i=1}^k \log h_\beta(e^{x_i}).
\end{equation}
One may then derive the correlation decay property in the form of a convenient
step-wise contraction~\cite{zhaliabai09}.  The version here (and its proof,
which we include for completeness) is taken from~\cite{liu2018fisher}.

\begin{proposition}
Fix a degree $\Delta \geq 3$ and integers $k \geq 0$ and $s$. If
$\frac{\Delta - 2}{\Delta} < \beta <\frac{\Delta }{\Delta - 2}$ then
  there exists $\eta > 0$ (depending upon $\beta$ and $\Delta$) such
  that $\norm{\Grad \Fpsk (\vec{x})}{1} \le \frac{k}{\Delta - 1} (1 - \eta)$ for every $\vec{x} \in \R^k$.
\label{lem:correlation-decay-Fisher}
\end{proposition}
\begin{proof}
  A direct calculation of the derivative gives
\begin{displaymath}
\norm{\Grad \Fpsk (\vec{x})}{1} =
	  \sum_{i=1}^k\frac{ \abs{1-\beta^2}}{\beta^2 +
      1 + \beta(e^{x_i} + e^{-x_i})}.
  \end{displaymath}
  Since $e^x + e^{-x} \ge 2$ for every real $x$, the right hand side is at most
  $k \times \frac{\abs{1-\beta}}{1+\beta}$.  The condition on $\beta$ 
  implies that $\frac{\abs{1-\beta}}{1+\beta} \le \frac{1-\eta}{\Delta - 1}$ for some
  fixed $\eta >0$.  Therefore, we have
  $\norm{\Grad \Fpsk (\vec{x})}{1} \le k \times \frac{\abs{1-\beta}}{1+\beta}
  \le \frac{k}{\Delta - 1} (1-\eta)$.
\end{proof}
We recall a few further computations from~\cite{liu2018fisher}.  First, we bound
$R_{G,u}(\beta)$ for real-valued $\beta$.  From \nobreakspace \textup
{(\ref{eq:recurrence-tmp})}, for any integers $k \geq 0$ and $s$ and a positive
real $\beta$, we have
$\beta^{k + \abs{s}} \le F_{\beta,k,s}(\vec{x}) \le \frac{1}{\beta^{k +
    \abs{s}}}$ when $\beta \leq 1$, and
$ \frac{1}{\beta^{k + \abs{s}}}\le F_{\beta,k,s}(\vec{x}) \le \beta^{k +
  \abs{s}}$ when $\beta\ge 1$, for all
$\vec{x} \in \bigl(\R_+\cup\{0,\infty\}\bigr)^k$.
Noting that $k+|s|\le\Delta$, and taking the logarithm of these bounds
motivates the definition of the intervals $I_0(\beta,\Delta)$ as follows:
\begin{equation}
  I_0 = I_0(\beta, \Delta) \defeq \inb{-\Delta\abs{\log\beta}, \Delta\abs{\log\beta}}.
\label{eq:intervalI0}
\end{equation}

Recalling~Lemma\nobreakspace \ref {lem:weitz-Fisher}, we see that the ratios $R_{G,u}(\beta)$ can be obtained by recursively applying the recurrence $F_{\beta,k,s}(\vec{x})$. Therefore, for $\beta \in \R_+$, any graph $G$ and unpinned vertex $u$, we have $\log R_{G,u}(\beta) \in I_0(\beta,\Delta)$.
Our second point of departure from the strategy followed in \cite{liu2018fisher}
is the following corollary of~Proposition\nobreakspace \ref {lem:correlation-decay-Fisher} in the complex plane.
\begin{corollary} 
Fix a degree $\Delta \geq 3$ and integers $k \geq 0$ and $s$. If
$\frac{\Delta - 2}{\Delta} < \beta <\frac{\Delta }{\Delta - 2}$ then there
  exist positive constants $\eta , \eps, \delta$ (depending upon $\beta$ and
  $\Delta$) such that the following is true.  Let $D \defeq D(\beta,\Delta)$ be
  the closed and convex set of points within distance $\eps$ of
  $I_0(\beta,\Delta)$ in $\C$.  Then
  $\norm{\Grad \Fpsk (\vec{x})}{1} \le (1 - \eta/2)$ for all $k, s$ satisfying
  $k + \abs{s} \leq \Delta - 1$ and every $\vec{x} \in
  D^k$.  Moreover, there is a finite constant $M \geq 1$ (depending upon
  $\beta$ and $\Delta$) such that\begin{align}
    \sup_{\vec{x} \in D^k,\; \bt \in \C: \abs{\bt - \beta} < \delta} \abs{
    \Fpsk(\vec{x}) - \Fpbsk(\vec{x})}
    &\leq M \abs{\beta - \bt}\label{eq:45};\\
    \sup_{x,y : \varphi(x), \varphi(y) \in D} \abs{\varphi(x) - \varphi(y)}
    &\leq M \abs{x - y}\label{eq:46};\\
    \sup_{x,y \in D} \abs{\varphi^{-1}(x) - \varphi^{-1}(y)}
    &\leq M \abs{x - y};\text{
      and}\label{eq:47}\\
    \sup_{\vec{x} \in D^k}
    \norm{\Grad \Fpsk (\vec{x})}{1} &\leq M \text{ when $k + \abs{s} = \Delta$}.\label{eq:48}
  \end{align}
\label{cor:correlation-decay-Fisher}
\end{corollary}
\begin{proof}
  Observe that
  $ \norm{\Grad \Fpsk (\vec{x})}{1} = \sum_{i=1}^k\frac{
    \abs{1-\beta^2}}{\beta^2 + 1 + \beta(e^{x_i} + e^{-x_i})} $ is a continuous
  function in $x_i$ for every~$i$.  Since by
  Proposition~\ref{lem:correlation-decay-Fisher} it is uniformly upper bounded
  by $\frac{k}{\Delta - 1}(1-\eta)$ for all $\vec{x}\in {I_0(\beta, \Delta)}^k$,
  for small enough~$\eps$ the expression can be bounded by
  $\frac{k}{\Delta - 1}(1-\eta/2)$ for all $\vec{x} \in D^k$; this in turn is bounded
  by $(1-\eta/2)$ when $k+|s|\le\Delta-1$.  

  Finally, the existence of $M$ follows from the analyticity of 
  $\Fpsk$ on~$D^k$, of $\varphi^{-1}$ on~$D$, and of $\varphi$ on $\varphi^{-1}(D)$, respectively.
\end{proof}
We will also need the following standard consequence of the mean value theorem
for complex functions (also used, e.g., in~\cite{liu2018fisher}).
\begin{lemma}
	Let $K(\vec{x})$ be a holomorphic function on a convex subset $D$ of $\C^k$.
For any $\vec{x}, \vec{x'} \in D$, we have
	\[
      \abs{K(\vec{x}) - K(\vec{x'})} \le \sup_{\vec{\xi} \in D^k}\norm{\Grad K
        (\vec{\xi})}{1} \cdot \norm{\vec{x} - \vec{x'} }{\infty}.
	\]
	\label{lem:mean-value}
\end{lemma}
\begin{proof}
  Consider $g(t)\defeq K\inp{\vec{x} + t(\vec{x'} - \vec{x}) } $ for
  $t \in [0, 1]$. Since $D$ is convex, $\vec{x} + t(\vec{x'} - \vec{x})$ lies in
  $D$ for all $t \in [0, 1]$.  Now, observe that
	\begin{align*}
		g'(t) = \Grad K \inp{\vec{x} + t(\vec{x'} - \vec{x})}^\intercal \inp{\vec{x} - \vec{x'}}. 
	\end{align*}
	Thus, for any $\vec{x}, \vec{x'} \in D$, we have
	\begin{align*}
		\abs{K(\vec{x}) - K(\vec{x'})} 
		=\abs{g(1) - g(0)} 
		=\abs{\int_{0}^1 g'(t) d t}
		\le \sup_{t\in[0,1]} \abs{g'(t)}
		\le \sup_{\vec{\xi} \in D^k}\norm{\Grad K (\vec{\xi})}{1} \cdot
      \norm{\vec{x} - \vec{x'} }{\infty}.
	\end{align*}
\end{proof}

Finally, we are ready to give a proof of the main result of this section (also
the main result of~\cite{liu2018fisher}).  The proof of the theorem below serves
as a template for our arguments for establishing zero-free regions for the Potts
model partition function in Sections~\ref{sec:induction-origin} and
\ref{sec:induction-interval}.
\begin{theorem}
  Fix a degree $\Delta \ge 3$, and let $\beta \in \betarange$.  There exist
  positive constants $\delta_0, \tau$ (both depending on $\beta$ and $\Delta$)
  such that, for any graph $G$ of maximum degree $\Delta$, any vertex $u$ in
  $G$, and any complex $\bt$ with $\abs{\bt - \beta} < \delta_0$, the following are
  true:
	\begin{enumerate}
		\item $|{Z_{G, u}^+(\bt)}|> 0, |{Z_{G, u}^-(\bt)}|> 0$.
		\item $\abs{\varphi\inp{ R_{G,u}(\beta)} - \varphi\inp{ R_{G,u}(\bt)}} <
          \tau$ if $u$ has degree at most $\Delta - 1$ in $G$.
        \item $Z_{G}(\bt) \neq 0$.
	\end{enumerate}
	\label{thm:ind-hyp-fisher}
  \end{theorem}
  We will refer to the two items above as the ``induction hypothesis.''  We
  remark that the assumption $\beta \in \betarange$ is needed only so that we
  may appeal to correlation decay (in the form of~Corollary\nobreakspace \ref
  {cor:correlation-decay-Fisher}).   
  \begin{proof}
	We use induction on the number of unpinned vertices in $G$.  Without loss of
    generality, we assume that the graph $G$ is connected.

    For the base case, if $u$ is the only unpinned vertex in $G$, with $s^+$
    neighbors pinned to spin $+$ and $s^-$ neighbors pinned to spin $-$, then
    $Z_{G, u}^+(\bt) = (\bt)^{s^-}, Z_{G, u}^-(\bt) = (\bt)^{s^+}$.  Item 1 of
    the induction hypothesis is thus satisfied for all small enough positive
    $\delta_0$. For item 2, we note that $R_{G,u}(\beta') = (\beta')^{s^+-s^-}$,
    and, since $0 \leq s^+, s^- \leq \Delta$, also that
    $\varphi(R_{G,u}(\beta)) \in I_0(\beta, \Delta).$ Now, let
    $\eta, \eps, \delta, M$ be the constants and $D$ the closed convex set
    (depending on $\beta$ and $\Delta$) whose existence is guaranteed
    by~Corollary\nobreakspace \ref {cor:correlation-decay-Fisher}.  For all
    small enough positive $\delta_0$, we then also have
    (i)~$\Re\inp{R_{G,u}(\beta')} > 0$ (since $\beta > (\Delta - 2)/\Delta$ and
    $\abs{\beta' - \beta} \leq \delta_0$); and (ii)~$\varphi(R_{G,u}(\beta')) \in D$.
    Combined with eq.~\eqref{eq:46} in the statement of the Corollary,
    inequality (ii) implies item~2, provided $\tau$ is chosen to be
    small enough (in terms of $\eps$ and $M$).  Item~3 follows from item~1
    and inequality~(i), since
    $\Re\inp{R_{G,u}(\beta')} = \Re\bigl({Z_{G,u}^+(\beta')/Z_{G,u}^-(\beta')}\bigr) >
    0$.

    We now proceed to the inductive step.  In this case, $G$ has at least two
    unpinned vertices.  We begin by deriving a useful consequence of the
    induction hypothesis.  Let $u$ be an arbitrary unpinned vertex in $G$, with
    $s$ pinned and $k$ unpinned neighbors.  Let the $k$ unpinned neighbors be
    $v_1, \cdots, v_k$.  We denote
    $B_i(\beta) \defeq \varphi \inp{R_{G_i,v_i}(\beta)}$,
    $\vec{B}(\beta) \defeq \set{B_1(\beta), B_2(\beta), \cdots, B_k(\beta)}$,
    and
    $H_\beta(x_1, x_2, \cdots, x_k) \defeq \Fpsk \inp{x_1, x_2, \cdots, x_k}$,
    where the graphs $G_i$ are as in Definition~\ref{def:graphs-gi}.  Note that the above quantities
    are all well defined: this is because, by construction, each $G_i$ has one
    fewer unpinned vertex than $G$, and also the degree of $v_i$ in $G_i$ is at
    most $\Delta-1$ (since $u$, which is an unpinned neighbor of $v_i$ in $G$,
    is not present in $G_i$), so that items~1 and~2 of the induction hypothesis
    apply at vertex $v_i$ in $G_i$, and also (by item~3) $Z_{G_i}(\bt) \neq 0$.
    These items also imply that $B_i(\beta), B_i(\beta') \in D$, and further that
    $\abs{B_i(\beta) - B_i(\beta')} < \tau$ (provided that $\tau \leq \eps$ and
    $\delta_0$ is small enough).  The triangle inequality then gives (again,
    assuming $\delta_0 \leq \delta$):
	\begin{align*}
      \abs{H_\beta(\vec{B}(\beta)) - H_{\beta'}(\vec{B}(\beta'))}
      &\le \abs{ H_\beta \inp{\vec{B}(\beta)} - H_\beta \inp{\vec{B}(\bt)}} + \abs{ H_\beta \inp{\vec{B}(\bt)} - H_{\bt} \inp{\vec{B}(\bt)}} \\
      &\le\sup \norm{\Grad \Fpsk }{1} \cdot  \max_{i} \abs{B_i(\beta) - B_i(\bt) }
        + M \abs{\beta - \bt},\\
      &\leq \tau \sup \norm{\Grad \Fpsk }{1}  + M \delta_0.
\end{align*}
    Here, in the second line, the first term comes from Lemma~\ref{lem:mean-value}
    (where the supremum is over all $\vec{x} \in D^k$), and the second
    term from eq.~\eqref{eq:45} of Corollary~\ref{cor:correlation-decay-Fisher}. 
    Now let $\delta_0$ be chosen so that it is also smaller than
    $\tau\cdot\min\inbr{1, \eta/2M}$. We have two cases:
    
    \noindent \textbf{Case 1: $k + |s| = \Delta$.}  In this case, we use eq.~\eqref{eq:48} 
    of Corollary~\ref{cor:correlation-decay-Fisher} to get
    \begin{equation}
      \abs{H_\beta(\vec{B}(\beta)) - H_{\beta'}(\vec{B}(\beta'))} \le M(\tau +
      \delta_0) < 2M\tau.\label{eq:49}
    \end{equation}
    \noindent \textbf{Case 2: $k + |s| \leq \Delta - 1$.}  In this case, we use
    the case $k + |s| \leq \Delta -1$ of
    Corollary~\ref{cor:correlation-decay-Fisher} to get
    \begin{equation}
      \abs{H_\beta(\vec{B}(\beta)) - H_{\beta'}(\vec{B}(\beta'))} \le
      (1-\eta/2)\tau + M\delta_0 < \tau.\label{eq:50}
    \end{equation}

    Armed with these consequences of the induction hypothesis, we now proceed to
    establish the inductive step.  Before proceeding, we note that, by
    Lemma~\ref{lem:weitz-Fisher},
    $H_{\beta}(\vec{B}(\beta)) = \varphi(R_{G,u}(\beta))$, and, when
    $Z_{G,u}^+, Z_{G,u}^- \neq 0$,
    $H_{\beta'}(\vec{B}(\beta')) = \varphi(R_{G,u}(\bt))$.

    For item 1, we consider the graph $G'$ where we pin vertex $u$ to spin $+$.
    Note that by definition, $Z_{G, u}^+ (\bt) = Z_{G'}(\bt)$.  Let $v$ be any
    unpinned vertex in $G'$.  Since $G'$ has one fewer unpinned vertex than $G$,
    by the induction hypothesis we have $|{Z_{G',v}^-(\bt)}| > 0$.  Thus,
    $R_{G', v}(\beta')$ is well defined and is in $D$.  The calculations
    leading to eqs.~\eqref{eq:49} and \eqref{eq:50} applied to the vertex $v$ in
    $G'$ imply that
    $\abs{\varphi\inp{ R_{G',v}(\beta)} - \varphi\inp{ R_{G',v}(\bt)}} < 2M\tau$
    (in fact, the upper bound improves to $\tau$ in case $v$ has degree at most
    $\Delta - 1$ in $G'$).  Applying eq.~\eqref{eq:47} of
    Corollary~\ref{cor:correlation-decay-Fisher} then shows that
    $\abs{R_{G',v}(\beta) - R_{G',v}(\bt)} < 2M^2\tau$.  Thus, since
    $R_{G',v}(\beta) > \min\inbr{\beta^\Delta, 1/\beta^\Delta}$, we have, for
    all small enough $\tau$ and $\delta_0$,
    \begin{equation}
      \Re \inp{R_{G',v}(\bt)} > 0.\label{eq:51}
    \end{equation}  We can therefore write
	\begin{align*}
\abs{Z_{G'}(\bt)} = |{Z_{G',v}^+(\bt) + Z_{G',v}^-(\bt)}|
      &= |{Z_{G',v}^-(\bt)}| \cdot \abs{1 + R_{G',v}(\bt)}.
	\end{align*}
    But then, by eq.~\eqref{eq:51},
    $\abs{1 + R_{G',v}(\bt)} \geq \Re\inp{ 1 + R_{G',v}(\bt)} > 1$.  Thus,
	\[
      |{Z_{G, u}^+ (\bt)}| = |{Z_{G'}(\bt)}| \geq |{Z_{G',v}^-(\bt)}| >0.
    \]
    An identical argument also proves that $|{Z_{G, u}^- (\bt)}| >0$,
    completing the verification of item~1 of the induction hypothesis.
    
    Now, since $Z_{G,u}^-(\bt)$ and $Z_{G,u}^+(\bt)$ have both been proved to be
    non-zero, it follows that $R_{G, u}(\bt)$ is well-defined, and, by
    Lemma~\ref{lem:weitz-Fisher}, is equal to $H_{\beta'}(\vec{B}(\beta'))$.
    Item 2 of the induction hypothesis then follows immediately from eq.~\eqref{eq:50}.
    Finally, item 3 follows from item 1 and eq.~\eqref{eq:51}.
\end{proof}

The main result of this section, establishing the absence of Fisher zeros in a complex region
around the correlation decay interval, now follows immediately from item 3 of the
previous theorem.\begin{corollary}
  Fix a degree $\Delta \ge 3$, and let $\beta \in \betarange$.  There exists a
  positive constant $\delta_0$ (depending on $\beta$ and $\Delta$) such that,
  for any graph $G$ of maximum degree $\Delta$, and any complex $\bt$ with
  $\abs{\bt - \beta} < \delta_0$, we have $Z_G(\bt) \neq 0$.
  \label{thm:main-fisher}
\end{corollary}

\subsection{Anti-ferromagnetic Ising model}
\newcommand{\lat}{\ensuremath{\lambda'}} In this section we consider the
anti-ferromagnetic Ising model.  
Recall from the introduction that, for the infinite
$\Delta$-regular tree, weak spatial mixing holds 
when $\beta \in (1, \frac{\Delta}{\Delta-2})$ for all $\lambda > 0$, while for
$\beta > \frac{\Delta}{\Delta-2}$ there exists a
$\lambda_c(\beta, \Delta) > 0$ such that weak spatial mixing holds
if $\abs{\log \lambda} > \log \lambda_c(\beta,
\Delta)$~\cite{georgii_hans-otto_gibbs_2011, sinclair_approximation_2012}.
We will refer to this as the
\emph{correlation decay region} for the anti-ferromagnetic Ising model.  Fix any
$\beta,\lambda$ in the correlation decay region. As claimed in Theorem~\ref{thm:af-ising-intro}
of the introduction, we will show that there exists
$\delta>0$ such that, for any $\lat$ with $\abs{\lat - \lambda} < \delta$, the
partition function $Z_G(\beta,\lat)\neq 0$.  
This is apparently the first result precisely
relating correlation decay to absence of Lee-Yang zeros for the antiferromagnetic Ising model
on general graphs.

As before, for a fixed vertex $v$, we write
$Z_G(\beta,\lambda) = Z_{G,v}^+(\beta,\lambda) + Z_{G,v}^-(\beta,\lambda)$ and let $R_{G,v}(\beta,\lambda)\defeq \frac{Z_G^+(\beta,\lambda)}{  Z_G^-(\beta,\lambda)}$.
Then we can write a formal recurrence relation analogous to that in 
 Lemma\nobreakspace \ref {lem:weitz-Fisher}, as follows.
\begin{lemma}
  Let $\omega_\beta,\omega_\lambda$ be formal variables. Given a graph $G$ and
  an unpinned vertex $u$, let $k$ be the number of unpinned neighbors of $u$,
  and $s$ be the number of signed pinned neighbors of $u$. Denoting
  $h_\omega(x) \defeq \frac{\omega + x}{\omega x + 1}$, we have
	\[
		R_{G,u}(\omega_\beta, \omega_\lambda) = \omega_\lambda \omega_\beta^s \prod_{i=1}^k h_{\omega_\beta}\inp{ R_{G_i, v_i}(\omega_\beta, \omega_\lambda) },
	\]
	where the graphs $G_i$ are defined as in~Definition\nobreakspace \ref {def:graphs-gi}.
	\label{lem:weitz-antiferro}
\end{lemma}

\newcommand{\Fbl}{F_{\beta,\lambda,k,s}}
\newcommand{\Fblp}{F_{\beta,\lambda,k,s}^{\varphi}}

Given integers $k$ and $s$, let
$\Fbl(\vec{x}) \defeq \lambda \beta^s \prod_{i=1}^k h_\beta(x_i)$.  This
recurrence has been studied before in the
literature~\cite{li_approximate_2012,sinclair_approximation_2012}, and as in the
case of the ferromagnetic Ising model, it has been found useful to
reparameterize $\Fbl$ with a ``potential function'' $\varphi$ as follows:
$\Fblp \defeq \varphi \circ \Fbl \circ \varphi^{-1}$. 
In~\cite{sinclair_approximation_2012} the function
$\varphi(x) \defeq \log \frac{x+D}{1-x +D}$ was used, where $D>0$ is a
constant depending on $\beta$ and $\Delta$ (but not on $\lambda$).  (This choice
of $\varphi$ is by no means unique: alternative choices can be found in,
e.g.,~\cite{li_approximate_2012,li_correlation_2011}.)  For this choice of
$\varphi$, the following step-wise correlation decay in the $1$-norm is proved
in~\cite{sinclair_approximation_2012}:

\begin{theorem}[\cite{sinclair_approximation_2012}] 
  Fix a degree $\Delta \geq 3$ and integers $k > 0, s$ such that
  $k + \abs{s} \leq \Delta-1$.  If $(\beta,\lambda)$ is in the correlation decay
  region of the infinite $\Delta$-regular tree, then there exists an $\eta > 0$
  (depending upon $\beta,\lambda$ and $\Delta$) such that
  $\norm{\Grad \Fblp (\vec{x})}{1} < 1 - \eta$ for every
  $\vec{x} \in \R^k$.\footnote{Ref.~\cite{sinclair_approximation_2012} uses a
    different convention for the Ising model, in which $\beta$ 
    corresponds to our $1/\beta$ (see eq.~(1) of~\cite{sinclair_approximation_2012}).} 
\label{lem:correlation-decay-antiferro}
\end{theorem}

We also note that an analog of the calculation leading to~eq.\nobreakspace
\textup {(\ref {eq:intervalI0})} gives the bound
$\frac{\lambda}{\beta^\Delta} \le R_{G,u}(\beta,\lambda)\le \lambda
\beta^\Delta$.  Thus we define the analogous interval
\begin{equation}
  I_0(\beta,\lambda,\Delta) \defeq \inb{\varphi\inp{{\textstyle\frac{\lambda}{\beta^\Delta}}},\varphi\inp{ \lambda \beta^\Delta}}.
\end{equation}
The following corollary is analogous to 
Corollary~\ref{cor:correlation-decay-Fisher}, and is an immediate consequence of 
Theorem~\ref{lem:correlation-decay-antiferro} and the analyticity of $\Fblp$ and
$\varphi^{-1}$ at points close to $I_0(\beta, \lambda, \Delta)$.
\begin{corollary} 
  Fix a degree $\Delta \geq 3$ and integers $k \ge 0$ and $s$.  If
  $(\beta,\lambda)$ is in the correlation decay region of the infinite $\Delta$-regular
  tree, then there exist positive constants $\eta , \eps, \delta$
  (depending upon $\beta,\lambda$ and $\Delta$) such that the following is true.
  Let $D \defeq D(\beta,\lambda,\Delta)$ be the set of points within distance $\eps$
  of $I_0(\beta,\lambda,\Delta)$ in $\C$.  Then
  $\norm{\Grad \Fblp (\vec{x})}{1} < 1 - \eta/2$ for every $\vec{x} \in D^k$
  whenever $k + \abs{s} \leq \Delta - 1$.  Moreover, there is a finite constant
  $M \geq 1$ (depending upon $\beta,\lambda$ and $\Delta$) such that
  \begin{align*}
    \sup_{\vec{x} \in D^k, \;\lat \in \C: \abs{\lat - \lambda} < \delta} \abs{
    \Fblp(\vec{x}) - F_{\beta,\lambda',k,s}^\varphi(\vec{x})}
    &\leq M \abs{\lat - \lambda};\\
    \sup_{x,y : \varphi(x), \varphi(y) \in D} \abs{\varphi(x) - \varphi(y)}
    &\leq M \abs{x - y};\\
    \sup_{x,y \in D} \abs{\varphi^{-1}(x) - \varphi^{-1}(y)}
    &\leq M \abs{x - y};\text{
      and}\\
    \sup_{\vec{x} \in D^k}
    \norm{\Grad \Fblp (\vec{x})}{1} &\leq M \text{ when $k + \abs{s} = \Delta$}.
  \end{align*}
  \label{cor:correlation-decay-antiferro}
\end{corollary}

Finally, given~Lemma\nobreakspace \ref {lem:weitz-antiferro} and\nobreakspace Corollary\nobreakspace \ref {cor:correlation-decay-antiferro}, an identical argument to that in the proof of~Theorem\nobreakspace \ref {thm:ind-hyp-fisher} establishes the following:
\begin{theorem}
  Fix a degree $\Delta \ge 3$, and let $(\beta,\lambda)$ be in the correlation
  decay region for the infinite $\Delta$-regular tree.  There exist positive
  constants $\delta_0, \tau$ (both depending on $\beta,\lambda$ and $\Delta$)
  such that, for any graph $G$ of maximum degree $\Delta$, any unpinned vertex
  $u$ in $G$, and any complex $\lat$ with $\abs{\lat - \lambda} < \delta_0$, the
  following are true:
	\begin{enumerate}
		\item $|{Z_{G, u}^+(\beta,\lat)}|> 0, |{Z_{G, u}^-(\beta,\lat)}|> 0$.
		\item
          $\abs{\varphi\inp{ R_{G,u}(\beta,\lambda)} - \varphi\inp{
              R_{G,u}(\beta,\lat)}} < \tau$ if $u$ has degree at most
          $\Delta - 1$ in $G$.
        \item $Z_G(\beta, \lat) \neq 0$.
	\end{enumerate}
\end{theorem}
The main result of this section, which is a restatement of Theorem~\ref{thm:af-ising-intro}
 in the introduction, is a direct consequence of item~3 of the above theorem:
\begin{corollary}
  Fix a degree $\Delta \ge 3$, and let $(\beta,\lambda)$ be in the correlation
  decay region for the antiferromagnetic Ising model on 
  the infinite $\Delta$-regular tree.  Then, there exists a positive
  constant $\delta_0$ (depending on $\beta,\lambda$ and $\Delta$) such that,
  for any graph $G$ of maximum degree $\Delta$, and any complex $\lat$ with
  $\abs{\lat - \lambda} < \delta_0$, we have $Z_G(\beta,\lat) \neq 0$.
  \label{thm:main-antiferro}
\end{corollary}

\newcommand{\J}{\mathcal{I}}
\subsection{Hard-core model}
\label{sec:hardcore}
In this section we consider the \emph{independence polynomial}, which is the partition function of the hard-core model.
Formally, given a graph $G=(V,E)$ and a \emph{vertex activity} $\lambda>0$, we let $\J(G)$ be the set of independent sets of vertices in $G$. 
Then the independence polynomial is given by
\[
	Z_G(\lambda) = \sum_{I \in \J(G)} \lambda^{\abs{I}}.
	\]
	The hard-core model is a simple model of the ``excluded volume'' phenomenon:
	vertices in the independent set $I$ correspond to particles, each of which prevents neighboring sites from being occupied.
	The parameter $\lambda$ controls the density of particles in the system.

It is known from seminal work of Weitz and Sly that there
is a critical activity $\lambda_c(\Delta)$ such that, when $\lambda < \lambda_c(\Delta)$, 
	the partition function for graphs of maximum degree~$\Delta$ can be
	approximated efficiently~\cite{Weitz},  while for $\lambda > \lambda_c(\Delta)$
	it is NP-hard to approximate the partition function~\cite{Sly2010CompTransition}
	(see also~\cite{Vigoda-hard-core-11,sly12}).
	We will refer to $\lambda<\lambda_c(\Delta)$ as the \emph{correlation decay interval} for the hard-core model.
	In this section, we view $Z_G(\lambda)$ as a polynomial in $\lambda$ and study its complex zeros.
	The main result of this section will again be that there are no zeros in a complex neighborhood of the correlation decay interval $(0,\lambda_c(\Delta))$.

	In similar fashion to the Ising model, for a fixed vertex $v$ we write the partition function as
	$
		Z_G(\lambda) = Z_{G\setminus v}(\lambda) + \lambda \cdot Z_{G\setminus N_G[v]} (\lambda)
		$,
	and let $R_{G,v}(\lambda)\defeq \frac{Z_{G\setminus N_G[v]} (\lambda)}{  Z_{G\setminus v} (\lambda)}$.
	Note that $Z_{G\setminus v}(\lambda)$ corresponds to pinning~$v$ to be ``unoccupied'' (not in the independent set) in~$G$, while $Z_{G\setminus N_G[v]} (\lambda)$ 
	corresponds to pinning~$v$ to be ``occupied'' (in the independent set) in~$G$.
	By analogy with~Lemmas\nobreakspace \ref {lem:weitz-Fisher} and\nobreakspace  \ref {lem:weitz-antiferro}, we have the following formal recurrence relation for $R_{G,u}$~\cite{Weitz}, which is easily verified.
\begin{lemma}
  Let $\omega$ be a formal variable. Given a graph $G$ and a vertex $u$ in $G$,
  let $k$ be the number of neighbors of~$u$. We then have
	\[
	R_{G,u}(\omega) = \lambda \prod_{i=1}^k \frac{1}{1+R_{G_i, v_i}(\omega) } ,
	\]
	where the graphs $G_i\defeq G \setminus \set{u, v_1, \cdots, v_{i-1}}$ are defined in
	analogous fashion to~Definition\nobreakspace \ref {def:graphs-gi}.
	\label{lem:weitz-hardcore}
\end{lemma}

\newcommand{\Fl}{F_{\lambda,k}}
\newcommand{\Flp}{F_{\lambda,k}^{\varphi}}

For a non-negative integer~$k$, let
$\Fl(\vec{x}) \defeq \lambda \prod_{i=1}^k \frac{1}{1+x_i}$.  This recurrence
has been studied before in the literature. As with the Ising model examples
above, it has been found useful to reparameterize $\Fl$ using a ``potential
function'' $\varphi$ in the form
$\Flp \defeq \varphi \circ \Fl \circ \varphi^{-1}$.  As shown by Li, Lu and
Yin~\cite{li_correlation_2011}, using the function
$\varphi(x) = 2 \sinh^{-1} (\sqrt{x}\,)$ leads to the following step-wise
correlation decay in the $1$-norm:\footnote{This is a special case of Lemma 4.4
  of \cite{li_correlation_2011}, taken in combination with item 5 of Lemma 3.1
  of that paper, obtained by setting $\beta = 0$ and $\gamma = 1$ in their
  notation.  Also note that in \cite{li_correlation_2011}, only the derivative
  $\Phi$ of the message $\varphi$ is explicitly mentioned (at the bottom of
  p.~76, at the end of column 1).  The function $\varphi(x) = 2\sinh^{-1}(\sqrt{x})$ is
  obtained by integrating $\Phi(x) = 1/\sqrt{x(1+x)}$.}
\begin{theorem}[\cite{li_correlation_2011}]
  Fix a degree $\Delta \geq 3$, and let $k \leq \Delta -1$ be a positive
  integer.  If $\lambda$ is in the correlation decay interval, then there exists
  an $\eta > 0$ (depending upon $\lambda$ and $\Delta$) such that
  $\norm{\Grad \Flp (\vec{x})}{1} < 1 - \eta$ for every $\vec{x} \in \R^k$.
\label{lem:correlation-decay-hardcore}
\end{theorem}

Again by analogy with the Ising model, we have the bound
$\lambda/\inp{1 + \lambda}^\Delta \le R_{G,u}(\lambda)\le \lambda$,
leading us to define the following analog of the interval in~eq.\nobreakspace
\textup {(\ref {eq:intervalI0})}:
\begin{equation}
  I_0(\lambda,\Delta) \defeq \inb{\varphi\bigl(\lambda/\inp{1 + \lambda}^\Delta\bigr),\varphi\inp{ \lambda }}.
\end{equation}
The following corollary is again a consequence of
Theorem~\ref{lem:correlation-decay-hardcore} and the smoothness properties of
$\Flp$ and $\varphi^{-1}$ at points close to the set $I_0(\lambda, \Delta)$.
\begin{corollary} 
  Fix a degree $\Delta \geq 3$ and let $k \ge 0$.  If $\lambda$ is in the
  correlation decay interval, then there exist positive constants
  $\eta , \eps, \delta$ (depending on $\lambda$ and $\Delta$) such that the
  following is true.  Let $D \defeq D(\lambda,\Delta)$ be the set of points within
  distance $\eps$ of $I_0(\lambda,\Delta)$ in $\C$.  Then, whenever
  $k \leq \Delta - 1$, $\norm{\Grad \Flp (\vec{x})}{1} < 1 - \eta/2$ for every
  $\vec{x} \in D^k$.  Moreover, there is a finite constant $M \geq 1$ (depending
  on $\lambda$ and $\Delta$) such that:
  \begin{align*}
    \sup_{\vec{x} \in D^k, \;\lambda' \in \C: \abs{\lambda' - \lambda} < \delta}
    \abs{ \Flp(\vec{x}) - F_{\lambda',k}^\varphi(\vec{x})} 
    &\leq M \abs{\lambda - \lambda'} ; \\
    \sup_{x,y : \varphi(x), \varphi(y) \in D} \abs{\varphi(x) - \varphi(y)}
    &\leq M \abs{x - y} ; \\
    \sup_{x,y \in D} \abs{\varphi^{-1}(x) - \varphi^{-1}(y)}
    &\leq M \abs{x - y}; \text{
      and}\\
    \sup_{\vec{x} \in D^k}
    \norm{\Grad \Flp (\vec{x})}{1} &\leq M \text{ when $k = \Delta$}.
  \end{align*}
\label{cor:correlation-decay-hardcore}
\end{corollary}

Finally, given~Lemma\nobreakspace \ref {lem:weitz-hardcore} and\nobreakspace Corollary\nobreakspace \ref {cor:correlation-decay-hardcore}, an identical argument to that in the proof of~Theorem\nobreakspace \ref {thm:ind-hyp-fisher} establishes the following:
\begin{theorem}
  Fix a degree $\Delta \ge 3$, and let $\lambda$ be in the correlation decay
  interval.  Then there exist positive constants $\delta_0$ and $\tau$
  (depending on $\lambda$ and $\Delta$) such that, for any graph $G$ of maximum
  degree $\Delta$, any unpinned vertex $u$ in $G$, and any $\lambda'$ with
  $\abs{\lambda' - \lambda} < \delta_0$, the following are true:
	\begin{enumerate}
		\item $\abs{Z_{G}(\lambda')}> 0$.
		\item $\abs{\varphi\inp{ R_{G,u}(\lambda')} - \varphi\inp{ R_{G,u}(\lambda)}} < \tau$.
	\end{enumerate}
\end{theorem}
The main result of this section, establishing a zero-free region containing the correlation
decay interval, now follows as an immediate corollary of the above theorem:
\begin{corollary}
	Fix a degree $\Delta  \ge 3$, and let $\lambda$ lie in the correlation decay interval for 
	the hard-core model on the infinite $\Delta$-regular tree.
	There exist positive constants $\delta, \eps$ (both depending on $\lambda$ and $\Delta$) such that,
	for any graph $G$ of maximum degree $\Delta$, and any $\lambda'$ with $\abs{\lambda' - \lambda} < \delta$, we have $Z_G(\lambda') \neq 0$.
  \label{thm:main-hardcore}
\end{corollary}
The above result was conjectured by Sokal~\cite{sokal2000personal}, and first
proved (with more detailed information about the geometry of the zero-free region)
by Peters and Regts~\cite{peters17:_sokal} via a different argument involving a tailor-made
``potential function''.  Our argument above describes a simpler route to the
result starting from the previously known correlation decay properties for real parameters.

\subsection{Related work and discussion}
There are a few recent papers which use correlation decay-like arguments for
proving absence of complex zeros: Peters and Regts~\cite{peters17:_sokal}
considered the case of the roots of the independence polynomial, while an
earlier paper by the present authors~\cite{liu2018fisher} looked at the Fiser
zeros of the zero-field Ising model.  A recent paper of Peters and
Regts~\cite{peters_location_2018} on the Lee-Yang zeros of the
anti-ferromagnetic Ising model on graphs of maximum degree at most $\Delta$ for
$\beta \in \betarange$ is also in a similar spirit.  The main message of this
section is that the somewhat different arguments used in these results can in
fact be carried out in a unified framework which allows one to ``lift'' known
analyses of Weitz recurrences for the corresponding
models~\cite{Weitz,li_correlation_2011,sinclair_approximation_2012,zhaliabai09}
to show that, in each case, there is a zero-free region of constant width that
contains the entire correlation decay interval.  Thus, as mentioned earlier,
this puts on a more formal footing the observation that Weitz's algorithm can be
seen as a bridge between the ``decay of correlations'' and ``analyticity of free
energy density'' formalisms of phase transitions.  We also note in passing that,
via Barvinok's general paradigm, the results in this section lead to polynomial
time approximation algorithms for the model partition functions in the
respective correlation decay intervals (and indeed in a complex neighborhood of
those intervals).  However, in all of the above cases, these algorithmic
consequences (at least for real-valued parameters) can be derived directly from
correlation
decay~\cite{zhaliabai09,sinclair_approximation_2012,Weitz,li_correlation_2011},
so we do not pursue this direction here.

In the following section we turn to the Potts model, where such tight
correlation decay results are not known.  We show that, with a more careful
analysis, less tight correlation decay arguments can also be lifted to the
complex plane in similar fashion to the results of this section.  Further, in contrast to the
two-spin systems considered in this section, the algorithmic consequences are also novel
and resolve open questions; indeed, it is not yet known how to obtain them directly
from correlation decay without passing to the complex plane.

\section{Potts model: Preliminaries}
\label{sec:preliminaries}
\subsection{Colorings and the Potts model}

Throughout, we assume that the graphs that we consider are augmented with a list
of colors for every vertex.  Formally, a graph is a triple $G=(V,E,L)$, where $V$ is the
vertex set, $E$ is the edge set, and $L: V\to 2^{\N}$ specifies a list of colors
for every vertex.  The partition function as defined in the introduction
generalizes naturally to this setting: the sum is now over all those colorings
$\sigma$ which satisfy $\sigma(v) \in L(v)$.

We also allow graphs to contain \emph{pinned} vertices: a vertex $v$ is said to
be \emph{pinned} \index{pinning} to a color $c$ if only those colorings of $G$
are allowed in which $v$ has color $c$.  Suppose that a vertex $v$ of degree
$d_v$ in a graph $G$ is pinned to a color $c$, and consider the graph $G'$
obtained by replacing $v$ with $d_v$ copies of itself, each of which is pinned
to $c$ and connected to exactly one of the original neighbors of $v$ in $G$.  It
is clear that $Z_{G'}(w) = Z_G(w)$ for all $w$.  We will therefore
  assume that the operation of \emph{pinning} a vertex comprises this operation
  as well; in particular, this means we can assume that all pinned
  vertices in our graphs have degree at most one.  Further, if a pinned
  vertex $u$ has another pinned vertex $v$ as a neighbor, then $u$
  and $v$ must form a connected component consisting of a single edge.
  The \emph{size} of graph~$G$ is defined to be the number of unpinned vertices.  Note that the
above operation of duplicating pinned vertices does not change the size of the
graph.

Let $G$ be a graph and $v$ an unpinned vertex in $G$.  A color~$c$ in the list
of~$v$ is said to be \emph{good}\index{good color} for $v$ if every pinned
neighbor~$u$ of~$v$ is pinned to a color different from~$c$.  The set of good
colors for a vertex $v$ in graph $G$ is denoted $\good{G, v}$.  We sometimes
omit the graph $G$ and write $\good{v}$ when $G$ is clear from the context.  A
color $c$ that is not in $\good{v}$ is called \emph{bad}\index{bad color} for
$v$.  Further, given a graph~$G$ possibly with pinned vertices, we say that the
graph is \emph{unconflicted} \index{unconflicted graphs} if no two neighboring
vertices in $G$ are pinned to the same color.  Note that since all pinned
vertices have degree exactly one, any conflicted graph is the vertex-disjoint union of an
unconflicted graph and a collection of disjoint, conflicted edges.

We will assume throughout that all unconflicted graphs $G$ we consider have at
least one proper coloring: this will be guaranteed in our applications since we
will always have $\abs{L(u)} \geq \deg_G(u) + 1$ for every unpinned vertex $u$ in $G$.

\begin{definition}For a graph $G$, a vertex $v$ and a color $i \in L(v)$, the \emph{restricted
    partition function} $\Z{G,v}{i}(w)$ is the partition function restricted to
  colorings in which vertex~$v$ receives color~$i$.  
\end{definition}

\begin{definition}Let $\omega$ be a formal variable.  For any $G$, a vertex $v$ and colors
  $i,j \in L(v)$, we define the \emph{marginal ratio} of color $i$ to color $j$
  as
  $
	\ratio{G,v}{i,j}(\omega) \defeq
    \frac{\Z{G,v}{i}(\omega)}{\Z{G,v}{j}(\omega)}.
    $
    Similarly we also define
  formally the corresponding \emph{pseudo marginal probability} as $
	\prob{G,\omega}{ c(v) = i } \defeq \frac{\Z{G,v}{i}(\omega)}{Z_G(\omega)}.
	$
\label{def:marginals}
\end{definition}
\begin{remark}\label{rem:conditions}
Note that when a numerical value $w \in \C$ is substituted in place of $\omega$
  in the above formal definition, $\ratio{G,v}{i,j}(w)$ is numerically
  well-defined as long as $\Z{G,v}{j}(w) \neq 0$, and  $\prob{G,w}{ c(v) = i }$ is numerically well-defined as long as
  $Z_G(w) \neq 0$.
  In the proof of the main theorem in
  Sections\nobreakspace \ref {sec:induction-origin} and\nobreakspace  \ref {sec:induction-interval}, we will ensure that the
  above definitions are numerically instantiated only in cases where the above
  conditions for such an instantiation to be well-defined are satisfied.  
  For instance, when $w \in [0,1]$, this is the
  case for the first definition when either (i) $w \neq 0$; or (ii) $w = 0$,
  $G$ is unconflicted and $j \in \good{G, v}$.  And for the second definition,
  this is the case when either (i) $w \neq 0$; or (ii) $w = 0$ and $G$ is
  unconflicted.
\end{remark}

\begin{remark}
	Note also that when $w \in [0,1]$, the pseudo probabilities, if well-defined, are actual marginal probabilities.  
	In this case, we will
  also write $\prob{G, w}{c(v) = i}$ as $\Pr{G,w}{ c(v) = i}$.  For arbitrary
  complex $w$, this interpretation as probabilities is of course not valid
  (since $\prob{G, w}{c(v) = i}$ can be non-real), but provided that
  $Z_G(w) \neq 0$ it is still true that
$
    \sum_{i \in L(v)} \prob{G,w}{c(v) = i} = \frac{1}{Z_G(w)}\sum_{i \in
      L(v)}\Z{G, v}{i}(w) = \frac{Z_G(w)}{Z_G(w)} = 1.
      $
We also note that if $v$ is pinned to color $k$, then $\prob{G,w}{c(v) = i}$
  is $1$ when $k = i$ and $0$ when $k \neq i$.
\end{remark}
\par\noindent
{\textbf{Notation.}} For the case $w=0$ (proper colorings) we will sometimes shorten the
notations $\prob{G, 0}{c(v) = i}$ and $\Pr{G, 0}{c(v) = i}$ to
$\prob{G}{c(v) = i}$ and $\Pr{G}{c(v) = i}$ respectively.

\begin{definition}[\textbf{The graphs $\G{k}{i,j}$}]
  \label{def:graphs-gik}
  Given a graph $G$ and a vertex $u$ in $G$, let $v_1, \cdots, v_{\deg_G(u)}$ be
  the neighbors of $u$.  We define $\G{k}{i,j}$ (the vertex $u$ will be
  understood from the context) to be the graph obtained from $G$ as follows:
  \begin{itemize}
  \item first we replace vertex $u$ with $u_1, \cdots, u_{\deg_G(u)}$, and
    connect $u_1$ to $v_1$, $u_2$ to $v_2$, and so on;
  \item next we pin vertices $u_1, \cdots, u_{k-1}$ to color $i$, and vertices
    $u_{k+1}, \cdots, u_{\deg_G(u)}$ to color $j$;
  \item finally we remove the vertex $u_k$.
  \end{itemize}
  Note that the graph $\G{k}{i,j}$ has one fewer unpinned vertex than $G$.
  Moreover, $u_1, \cdots, u_{\deg_G(u)}$ are of degree $1$, so this construction maintains the property that pinned vertices have degree $1$.
\end{definition}

We now derive a recurrence relation between the marginal ratios of the graph $G$
and pseudo marginal probabilities of the graphs $\G{k}{i,j}$.  This is an
extension to the Potts model of a similar recurrence relation derived by
Gamarnik, Katz and Misra~\cite{Gamarnik2015SSM} for the special case of colorings (that is, $w=0$).
\begin{lemma}
  Let $\omega$ be a formal variable.  For a graph $G$, a vertex $u$ and colors
  $i,j \in L(u)$, we have
  \[
    \ratio{G,u}{i,j}(\omega) = \prod_{k=1}^{\deg_G(u)}\frac{ 1 - \gamma \cdot
        \prob{\G{k}{i,j},\omega}{ c(v_k) = i} }{1 -
        \gamma \cdot \prob{\G{k}{i,j},\omega}{ c(v_k) = j}  },
  \]
  where we define $\gamma\defeq 1-\omega$.  In particular, when a numerical
  value $w \in \C$ is substituted in place of $\omega$, the above recurrence is
  valid as long as the quantities $Z_{\G{k}{i,j}}(w)$ and
  $1 - \gamma \cdot \prob{\G{k}{i,j},w}{ c(v_k) = j}$ for
  $1 \leq k \leq \deg_G(u)$ are all non-zero.
  \label{thm:recurrence}
\end{lemma}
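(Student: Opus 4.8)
The plan is to establish the identity first at the level of polynomials in the formal variable $\omega$ (where no division is needed), and then read off the numerical statement. Write $d \defeq \deg_G(u)$ and let $v_1, \dots, v_d$ be the neighbors of $u$, as in \cref{def:graphs-gik}. For $0 \le k \le d$, let $H_k$ be the graph obtained from $G$ by replacing $u$ with copies $u_1, \dots, u_d$, joining $u_\ell$ to $v_\ell$, pinning $u_1, \dots, u_k$ to $i$, and pinning $u_{k+1}, \dots, u_d$ to $j$. Two observations drive the proof. First, by the duplication fact for pinned vertices recorded in \cref{sec:preliminaries}, the graph $H_d$ has the same partition function as $G$ with $u$ pinned to $i$, so $Z_{H_d}(\omega) = \Z{G,u}{i}(\omega)$; likewise $Z_{H_0}(\omega) = \Z{G,u}{j}(\omega)$. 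Second, deleting $u_k$ from $H_k$ (equivalently, from $H_{k-1}$) yields precisely the graph $\G{k}{i,j}$ of \cref{def:graphs-gik}.

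Next I would compute $Z_{H_k}(\omega)$ in terms of $\G{k}{i,j}$. A coloring of $H_k$ is the same thing as a coloring $\sigma$ of $\G{k}{i,j}$ (the colors of the pinned copies $u_\ell$ are forced), and the only edge present in $H_k$ but not in $\G{k}{i,j}$ is $\{u_k, v_k\}$, which is monochromatic exactly when $\sigma(v_k) = i$ since $u_k$ is pinned to $i$ in $H_k$. Splitting the defining sum of $Z_{H_k}(\omega)$ according to whether or not $\sigma(v_k) = i$ therefore gives
\[
  Z_{H_k}(\omega) \;=\; Z_{\G{k}{i,j}}(\omega) - (1-\omega)\,\Z{\G{k}{i,j}, v_k}{i}(\omega) \;=\; Z_{\G{k}{i,j}}(\omega)\Bigl(1 - \gamma\cdot\prob{\G{k}{i,j},\omega}{c(v_k) = i}\Bigr),
\]
using $\gamma = 1-\omega$ and the formal definition $\prob{\G{k}{i,j},\omega}{c(v_k)=i} = \Z{\G{k}{i,j},v_k}{i}(\omega)/Z_{\G{k}{i,j}}(\omega)$ from \cref{def:marginals}. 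In $H_{k-1}$ the copy $u_k$ is pinned to $j$ rather than $i$, so the same computation gives $Z_{H_{k-1}}(\omega) = Z_{\G{k}{i,j}}(\omega)\bigl(1 - \gamma\cdot\prob{\G{k}{i,j},\omega}{c(v_k) = j}\bigr)$.

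It remains to telescope. Cross-multiplying, the claimed recurrence is equivalent to the polynomial identity
\[
  \Z{G,u}{i}(\omega)\prod_{k=1}^{d}\Bigl(Z_{\G{k}{i,j}}(\omega) - \gamma\,\Z{\G{k}{i,j},v_k}{j}(\omega)\Bigr)
  \;=\;
  \Z{G,u}{j}(\omega)\prod_{k=1}^{d}\Bigl(Z_{\G{k}{i,j}}(\omega) - \gamma\,\Z{\G{k}{i,j},v_k}{i}(\omega)\Bigr),
\]
and by the two displays above the left side equals $Z_{H_d}(\omega)\prod_{k=1}^{d} Z_{H_{k-1}}(\omega)$ while the right side equals $Z_{H_0}(\omega)\prod_{k=1}^{d} Z_{H_k}(\omega)$; both are $\prod_{k=0}^{d} Z_{H_k}(\omega)$, so the identity holds, and hence it continues to hold after substituting any $w \in \C$. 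For the numerical claim, suppose $Z_{\G{k}{i,j}}(w) \neq 0$ and $1 - \gamma\cdot\prob{\G{k}{i,j},w}{c(v_k) = j} \neq 0$ for every $k$. Then every pseudo probability appearing on the right-hand side is well-defined, and $Z_{H_{k-1}}(w) = Z_{\G{k}{i,j}}(w)\bigl(1 - \gamma\cdot\prob{\G{k}{i,j},w}{c(v_k)=j}\bigr) \neq 0$; in particular $Z_{H_0}(w) = \Z{G,u}{j}(w) \neq 0$, so $\ratio{G,u}{i,j}(w)$ is well-defined, and dividing the polynomial identity through by $\Z{G,u}{j}(w)\prod_{k=1}^{d} Z_{H_{k-1}}(w)$ and then cancelling a factor $Z_{\G{k}{i,j}}(w)$ from the numerator and denominator of each term yields the stated recurrence.

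I do not expect any genuine obstacle here: the entire content is bookkeeping. The one point that needs care is keeping the formal-variable identity strictly separate from the numerical evaluation, so that we never divide by a polynomial (such as $Z_{\G{k}{i,j}}$) that could vanish at the substituted value $w$; this is precisely why the argument is phrased through the cross-multiplied polynomial identity. I would also double-check the matching between the pinnings of $H_k$ and $H_{k-1}$ and the definition of $\G{k}{i,j}$, together with the trivial boundary cases $d = 0$ (empty products, ratio $1$) and $i = j$.
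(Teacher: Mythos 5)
Your proposal is correct and follows essentially the same route as the paper: the same interpolating graphs $H_k$, the same identities $Z_{H_k} = Z_{\G{k}{i,j}} - (1-\omega)\Z{\G{k}{i,j},v_k}{i}$ and $Z_{H_{k-1}} = Z_{\G{k}{i,j}} - (1-\omega)\Z{\G{k}{i,j},v_k}{j}$, and the same telescoping from $Z_{H_{\deg_G(u)}}/Z_{H_0}$. Your cross-multiplied polynomial identity and the explicit check that the stated non-vanishing hypotheses make the numerical division legitimate are just a slightly more careful packaging of the paper's argument, not a different one.
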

\begin{proof}
	Let $t \defeq \deg_G(u)$. 
For $0 \leq k \leq t$, let $H_k$ be the graph obtained from $G$ as
  follows:
  \begin{itemize}
  \item first we replace vertex $u$ with $u_1, \cdots, u_{t}$, and
    connect $u_1$ to $v_1$, $u_2$ to $v_2$, and so on;
  \item we then pin vertices $u_1, \cdots, u_{k}$ to color $i$, and vertices
    $u_{k+1}, \cdots, u_{t}$ to color $j$.
  \end{itemize}
  Note that $H_k$ is the same as $\G{k}{i,j}$, except that the last step of the
  construction of $\G{k}{i,j}$ is skipped, i.e, the vertex $u_k$ is not removed,
  and, further, $u_k$ is pinned to color $i$.  
We can now write
	\begin{align*}
	\ratio{G,u}{i,j}(\omega)
	=\frac{\Z{G,u}{i}(\omega)}{\Z{G,u}{j}(\omega)}
	=\frac{Z_{H_{t}} (\omega)}{Z_{H_0}(\omega)}
	=\prod_{k=1}^{t} \frac{Z_{H_k}(\omega)}{Z_{H_{k-1}}(\omega)} .
	\end{align*}
	Next, for $1 \leq k \leq t$, let $\Yk \defeq Z_{\G{k}{i,j}}(\omega)$
    and $\Yki \defeq \Z{\G{k}{i,j}, v_k}{i} (\omega)$.  We observe that
	\begin{align*}
	\prob{\G{k}{i,j},\omega}{ c(v_k) = i} &= \frac{\Yki}{\Yk};\\
		Z_{H_k}(\omega) &= \Yk - (1-\omega) \cdot \Yki;\\
		Z_{H_{k-1}}(\omega)&= \Yk - (1-\omega) \cdot \Ykj.
	\end{align*}
	Therefore we have
	\begin{align*}
      \ratio{G,u}{i,j}(\omega) 
      =\maybealign \prod_{k=1}^{t} \frac{\Yk - (1-\omega) \cdot \Yki}{\Yk - (1-\omega) \cdot \Ykj} \eqbreak
	=\maybealign\prod_{k=1}^{t}\frac{ 1 - \gamma \cdot \prob{\G{k}{i,j},\omega}{ c(v_k) = i} 
      }{ 1 - \gamma \cdot \prob{\G{k}{i,j},\omega}{ c(v_k) = j}  },
	\end{align*}
	where $\gamma = 1-\omega$.  The claim about the validity of the recurrence
    on numerical substitution then follows from the conditions outlined in Remark~\ref{rem:conditions}.
\end{proof}

\subsection{Complex analysis}

In this subsection we collect some tools and observations from complex analysis.
Throughout this paper, we use $\im$ to denote the imaginary unit $\sqrt{-1}$, in
order to avoid confusion with the symbol ``$i$'' used for other purposes.  For a
complex number $z = a + \im b$ with $a, b \in \R$, we denote its real part $a$
as $\Re z$, its imaginary part $b$ as $\Im z$, its \emph{length}
$\sqrt{a^2 + b^2}$ as $\abs{z}$, and, when $z \neq 0$, its \emph{argument}
$\sin^{-1} ({\frac{b}{\abs{z}}}) \in (-\pi, \pi]$ as $\arg z$.  We also
generalize the notation $[x, y]$ used for closed real intervals to the case when
$x, y \in \C$, and use it to denote the closed straight line segment joining $x$
and $y$.

We start with a consequence of the mean value theorem for complex functions,
specifically tailored to our application.
Let $D$ be any domain in $\C$ with the following properties.
  \begin{itemize}
  \item For any $z \in D$, $\Re z \in D$.
  \item For any $z_1, z_2\in D$, there exists a point $z_0 \in D$ such that one
    of the numbers $z_1 - z_0, z_2 - z_0$ has zero real part while the other has
    zero imaginary part.
  \item If $z_1, z_2 \in D$ are such that either $\Im z_1 = \Im z_2$ or
    $\Re z_1 = \Re z_2$, then the segment $[z_1, z_2]$ lies in $D$.
  \end{itemize}
  We remark that a rectangular region symmetric about the real axis will satisfy all of the above properties.
\begin{lemma}[\textbf{Mean value theorem for complex functions}]Let $f$
  be a holomorphic function on a domain~$D$ as above such that, for $z\in D$, $\Im f(z)$ has the
  same sign as $\Im z$.  Suppose further that there exist positive constants $\ifactor$
  and $\rfactor$ such that
  \begin{itemize}
  \item for all $z \in D$,  $\abs{\Im f'(z)} \le \ifactor$;
  \item for all $z \in D$, $\Re f'(z) \in[0, \rfactor]$. 
\end{itemize}
  Then for any $z_1, z_2 \in D$, there exists $C_{z_1, z_2} \in [0, \rfactor]$ such that
  \begin{align*}
	  \abs{\Re\inp{f(z_1) - f(z_2)} - C_{z_1, z_2} \cdot \right. \maybealign \left. \Re \inp{z_1 - z_2}} \eqbreak
    \maybealign\leq  \ifactor \cdot |\Im\inp{z_1 - z_2}|,
  \end{align*}
  and furthermore,
  \begin{align*}
    \maybealign \abs{\Im\inp{f(z_1) - f(z_2)}} \eqbreak
    \maybealign\leq \rfactor \cdot
    \begin{cases}
      |\Im (z_1 - z_2)|,  \text{ when $(\Im z_1)\cdot(\Im z_2) \leq 0$;}\\
      \max\inbr{\abs{\Im z_1}, \abs{\Im {z_2}}},  \text{ otherwise.}
    \end{cases}
\end{align*}
  \label{thm:signed-mean}
\end{lemma}
\begin{proof}
  We write $f = u + \im v$, where $u, v : D \rightarrow \R$ are
  seen as differentiable functions from $\R^2$ to $\R$ satisfying the
  Cauchy-Riemann equations
  \begin{displaymath}
    u^{(1, 0)} = v^{(0,1)} \quad \text{ and } \quad u^{(0, 1)} = -v^{(1,0)}.
  \end{displaymath}
  This implies in particular that $\Re f'(z) = u^{(1, 0)}(z) = v^{(0,1)}(z)$ and
  $\Im f'(z) = v^{(1,0)}(z) = -u^{(0,1)}(z)$.

  Let $z_0$ be a point in $D$ such that $\Re (z_2 - z_0) = 0$ and
  $\Im (z_1 - z_0) = 0$ (by the conditions imposed on $D$, such a $z_0$ exists,
  possibly after interchanging $z_1$ and~$z_2$). Now we have
  \begin{align*}
    \maybealign \Re \inp{f(z_1) - f(z_2)} \eqbreak
    &
      = u(z_1) - u(z_0) + u(z_0) - u(z_2)\\
    &
      = u^{(1, 0)}(z') \cdot \Re(z_1 - z_0) + u(z_0) - u(z_2),
  \end{align*}
  where $z'$ is a point lying on the segment $[z_0, z_1]$, obtained by applying
  the standard mean value theorem to the function $u$ along this segment (note
  that the segment is parallel to the real axis).  On the other hand, since the
  segment $[z_0, z_2]$ is parallel to the imaginary axis, we may apply the standard
  mean value theorem to the real valued function $u$ to get (after recalling
  that $\abs{u^{(0,1)}(z)} = \abs{\Im f'(z)} \leq \rho_I$ for all $z \in D$)
  \begin{displaymath}
    \abs{u(z_0) - u(z_2)} \leq \ifactor \abs{\Im\inp{z_2 - z_0}} = \ifactor
    \abs{\Im (z_2 - z_1)}.
  \end{displaymath}
  This proves the first part, once we set
  $C_{z_1, z_2} = u^{(1, 0)}(z') = \Re f'(z')$, which must lie in
  $[0, \rfactor]$ since $z' \in D$.

  For the second part, we note that since $\Im f(z) = 0$ when $\Im z = 0$, we
  have for $z \in D$,
  \begin{align*}
    \Im f(z) = \Im \inp{f(z) - f(\Re z)}
    &= v(z) - v(\Re z) \\
    &= v^{(0, 1)}(z') \cdot \Im{z} ,\end{align*}
  where $z'$ is a point lying on the segment $[z, \Re z]$, obtained by applying
  the standard mean value theorem to the function $v$ along this segment (note that the
  segment is parallel to the imaginary axis). 

  Since $v^{(0, 1)}(z') =  u^{(1, 0)}(z')\in [0, \rfactor]$ for all $z' \in D$,
  there therefore exist
  $a, b \in [0, \rfactor]$ such that
  \begin{displaymath}
    \abs{\Im(f(z_1) - f(z_2))} = \abs{a\Im z_1 - b \Im z_2},
  \end{displaymath}
  so that we get
  \begin{align*}
  \abs{\Im \right. \maybealign \left. (f(z_1) - f(z_2))}
    = \abs{a\Im z_1 - b \Im z_2} \eqbreak
    \maybealign\leq \rfactor \cdot
    \begin{cases}
      |\Im (z_1 - z_2)|,  \text{ when $(\Im z_1)\cdot(\Im z_2) \leq 0$;}\\
      \max\inbr{\abs{\Im z_1}, \abs{\Im {z_2}}},  \text{ otherwise.}
    \end{cases}
  \end{align*}
This completes the proof.  
\end{proof}

Later, we will apply the above lemma to the function
\begin{equation}
f_\kappa(x) \defeq -\ln(1 - \kappa e^x),\label{eq:28}
\end{equation}
which will play a central role in our proofs.  (We note
that here, and also later in the paper, we use $\ln$ to denote the principal
branch of the complex logarithm; i.e., if $z = re^{\iota \theta}$ with $r > 0$
and $\theta \in (-\pi, \pi)$, then $\ln z = \ln r + \iota \theta$.)  In the following
lemma, we verify that, for real $\kappa\in [0,1]$, $f_\kappa$ indeed satisfies
the hypotheses of  Lemma~\ref{thm:signed-mean} so that such an application
is valid, and also quantify the deviation in~$f_\kappa(z)$ for complex~$z$
close to the real interval.
\begin{lemma}\label{obv:f-props-int}
  Consider the domain $D$ given by
  \begin{displaymath}
    D \defeq \inbr{z \st \Re z \in (-\infty, -\zeta)\text{ and } |\Im z| < \tau},
  \end{displaymath}
  where $\tau < 1/2$ and $\zeta$ are positive real numbers such that
  $\tau^2 + e^{-\zeta} < 1$.  Suppose $\kappa \in [0,1]$ and consider the
  function $f_\kappa$ defined in eq.\nobreakspace \textup {(\ref {eq:28})}.  Then:
  \begin{enumerate}
  \item The function $f_\kappa$ and the domain $D$ satisfy the hypotheses of
    \MakeUppercase Lemma\nobreakspace \ref {thm:signed-mean}, if $\rho_R$ and $\rho_I$ in the statement of the
    theorem are taken to be $\frac{e^{-\zeta}}{1 - e^{-\zeta}}$ and
    $\frac{\tau\cdot e^{-\zeta}}{\inp{1 - e^{-\zeta}}^2}$, respectively.
  \item If $\eps > 0$ and $\kappa'\in\C$ are such that
    $\abs{\kappa' - \kappa} < \eps$ and $(1+\eps) < e^\zeta$, then for any
    $z \in D$,
    \begin{displaymath}
      \abs{f_{\kappa'}(z) - f_\kappa(z)} \leq \frac{\eps}{e^\zeta - 1 - \eps}.
    \end{displaymath}
  \end{enumerate}
\end{lemma}
\begin{proof}
  Note first that the domain $D$ is rectangular and symmetric about the real axis, so it satisfies
  the properties listed before \MakeUppercase Lemma\nobreakspace \ref {thm:signed-mean}.
We also note that since $\kappa \leq 1$, $f_{\kappa}(z)$ is well defined when
  $\Re z < 0$, and maps real numbers in $D$ to real numbers.  Further, a direct
  calculation shows that $\Im f_\kappa(z) = - \arg (1 - \kappa e^z)$ has the
  same sign as $\sin (\Im z)$ when $\Re z < 0$ (since $\kappa \in [0,1]$).
  Since $\abs{\Im z} \leq \tau < \pi$, we see therefore that $\Im f_\kappa(z)$
  has the same sign as $\Im z$, and hence $f_\kappa$ satisfies the first hypothesis of
  \MakeUppercase Lemma\nobreakspace \ref {thm:signed-mean}.

  Note that $f_\kappa'(z) = \frac{\kappa e^z}{1 - \kappa e^z}$.  A direct
  calculation then shows that
  $\Re f_\kappa'(z) = \frac{\kappa\Re e^z - \kappa^2\abs{e^z}^2}{\abs{1 -
      \kappa e^z}^2}$ and
  $\Im f_\kappa'(z) = \frac{\kappa \Im e^z}{\abs{1 - \kappa e^z}^2}$.  Now,
  for $z \in D$, $\abs{\arg e^z} \leq \tau$, so that
  $\Re e^z \geq \abs{e^z}\cos \arg e^z \geq \abs{e^z} (1-\tau^2)$.  Thus, we see
  that
  $\kappa\Re e^z - \kappa^2\abs{e^z}^2 \geq \kappa\abs{e^z}\inp{1 - \tau^2 -
    \kappa\abs{e^z}} \geq \kappa\abs{e^z}\inp{1 - \tau^2 -
    \kappa e^{-\zeta}}$.  Since $\kappa \in [0, 1]$ and
  $\tau^2 + e^{-\zeta} < 1$ by assumption, we therefore have
  $\Re f_\kappa'(z) \geq 0$.  Further,
  $\Re f_\kappa'(z) \leq \abs{f_\kappa'(z)} =
  \frac{\kappa\abs{e^z}}{\abs{1 - \kappa e^z}} \leq \frac{\kappa \abs{e^z}}{1
    - \kappa \abs{e^z}} \leq \frac{\kappa e^{-\zeta}}{1-e^{-\zeta}}$, since
  $\kappa \in [0, 1]$.  Together, these show that
  $\Re f_\kappa'(z) \in \inb{0, \frac{e^{-\zeta}}{1- e^{-\zeta}}}$ for $z \in D$, so that the
  claimed choice of the parameter $\rho_R$ in Lemma\nobreakspace \ref {thm:signed-mean} is
  justified.

  Similarly, for the imaginary part, we have
  $\abs{\Im f_\kappa'(z)} = \frac{\kappa \abs{\Im e^z}}{\abs{1 -\kappa
      e^z}^2}$, which in turn is at most
  $\frac{\kappa\cdot\tau \cdot e^{-\zeta}}{(1-\kappa e^{-\zeta})^2}$ for
  $z \in D$.  Since $\kappa \in [0, 1]$, this justifies the choice of the
  parameter $\rho_I$ and concludes the verification of item~1.

  We now turn to item~2.  The derivative of $f_x(z)$
  with respect to $x$ is $\frac{e^z}{1-x e^z}$, which for $x$ within distance
  $\eps$ (satisfying $(1+\eps) < e^\zeta$) of $\kappa$ and $z \in D$ has length
  at most $\frac{1}{e^\zeta - 1- \eps}$.  Thus, the standard mean value theorem
  applied along the segment $[\kappa, \kappa']$ (which is of length at most
  $\eps$) yields the claim.
\end{proof}

We will also need the following simple geometric lemma, versions of which have
been used in the work of Barvinok~\cite{barvinok2017combinatorics} and also Bencs et al.~\cite{bencs2018zero}. \begin{lemma}Let $z_1, z_2, \dots, z_n$ be complex numbers such that the angle between any
  two non-zero $z_i$ is at most $\alpha \in [0, \pi/2)$.  Then
  $ \abs{\sum_{i=1}^n z_i} \geq \cos(\alpha/2)\sum_{i=1}^n \abs{z_i}$.
  \label{lem:geometric-proj}
\end{lemma}

\begin{proof}
  Fix a non-zero $z_i$, and without loss of generality let $z_1$ and $z_2$ be
  the non-zero elements giving the maximum and minimum values, respectively, of
  the quantity $\arg(z_j/z_i)$, as $z_j$ varies over all the non-zero elements
  (breaking ties arbitrarily).  Consider the ray $z$ bisecting the angle between
  $z_1$ and $z_2$.  Then, by the assumption, the angle made by $z$ and any of
  the non-zero $z_i$ is at most $\alpha/2$, so that the projection of $z_i$ on
  $z$ is of length at least $\abs{z_i}\cos(\alpha/2)$ and is in the same
  direction as $z$.  Thus, denoting by $S'$ the projection of
  $S = \sum_{i=1}^n z_i$ on $z$, we have
  \begin{displaymath}
    |S| \geq |S'| \geq \sum_{i=1}^n\abs{z_i}\cos(\alpha/2).\qedhere
  \end{displaymath}
\end{proof}
\subsection{Sketch of the algorithm}
\label{sec:app_algorithm}
In this subsection we outline how to apply Barvinok's algorithmic paradigm to 
translate our zero-freeness result (Theorem~\ref{thm:zeros}) into the FPTAS
claimed in Theorem~\ref{thm:mainPotts}.  Let $G$ be a graph with $n$ vertices
and $m$ edges and maximum degree~$\Delta$.  Recall that our goal is to obtain a 
$1\pm\varepsilon$ approximation of the Potts model partition function $Z_G(w)$
at any point $w\in[0,1]$.  
Note that $Z_G$ is a polynomial of degree~$m$, and that computing
$Z_G$ at $w=1$ is trivial since $Z_G(1)=q^n$.  
Recall also that Theorem~\ref{thm:zeros} ensures that $Z_G$ has no zeros in 
the region~$\mathcal{D}_\Delta$ of width $\tau_\Delta$ around the real
interval $[0,1]$.  For technical convenience we will actually work with a slightly
smaller zero-free region consisting of the rectangle $$
  \mathcal{D}'_\Delta = \{w\in\mathbb{C}: -\tau'_\Delta \le \Re w \le 1+\tau'_\Delta;\, |\Im w| \le \tau'_\Delta\},$$
where $\tau'_\Delta  = \tau_\Delta / \sqrt{2}$.  Note that
$\mathcal{D}'_\Delta \subset  \mathcal{D}_\Delta$ so $\mathcal{D}'_\Delta$ is also zero-free.
In the rest of this section, we drop the subscript~$\Delta$ from these quantities.

Now let $f(z)$ be a complex polynomial of degree~$d$ for which $f(0)$ is easy to evaluate,
and suppose we wish to approximate~$f(1)$.  Barvinok's basic 
paradigm~\cite[Section 2.2]{barvinok2017combinatorics}
achieves this under the assumption that $f$ has no zeros in the open disk $\mathcal{B}(0,1+\delta)$
of radius $1+\delta$ centered at~$0$: the approximation simply consists of the first 
$k=O(\frac{1}{\delta}\log(\frac{d}{\varepsilon\delta}))$ 
terms of the Taylor expansion of $\log f$ around~0.  (Note that this expansion is absolutely convergent
within $\mathcal{B}(0,1+\delta)$ by the zero-freeness of~$f$.)  These terms can in turn be
expressed as linear combinations of the first $k$ coefficients of~$f$ itself. 
We now sketch how to reduce our computation of $Z_G(w)$ to this situation.  

First, for any fixed $w\in [0,1]$,
define the polynomial $g(z) := Z_G(z(w-1) + 1)$.  Note that $g(0) = Z_G(1)$ is trivial,
while $g(1) = Z_G(w)$ is the value we are trying to compute.  Moreover, plainly $g(z)\ne 0$
for all $z\in\mathcal{D}'$.  Next, define a polynomial~$\phi:\mathbb{C}\to\mathbb{C}$
that maps the disk $\mathcal{B}(0,1+\delta)$ into the rectangle $\mathcal{D}'$, so 
that $\phi(0) = 0$ and $\phi(1) = 1$; Barvinok~\cite[Lemma 2.2.3]{barvinok2017combinatorics} gives an explicit
construction of such a polynomial, with degree $N = \exp(\Theta(\tau^{-1}))$
and with $\delta=\exp(-\Theta(\tau^{-1}))$.   Now we have reduced
the computation of $Z_G(w)$ to that of $f(1)$, where $f(z):=g(\phi(z))$ is a polynomial of degree
$\deg(g)\cdot\deg(\phi) = mN$ that is non-zero on the disk $\mathcal{B}(0,1+\delta)$, 
so the framework of the previous paragraph applies.
Note that the number of terms required in the Taylor expansion of $\log f$ is 
$k=O(\frac{1}{\delta}\log(\frac{mN}{\varepsilon\delta})) = 
\exp(O(\tau^{-1}))\log(\frac{n\Delta}{\varepsilon}) $.

Naive computation of these $k$ terms requires time $n^{\Theta(k)}$, which yields only
a quasi-polynomial algorithm since $k$ contains a factor of $\log n$.  This complexity
comes from the need to enumerate all colorings of subgraphs induced by up to~$k$
edges.  However, a technique of Patel and Regts~\cite{patel2017deterministic},
based on Newton's identities and an observation of Csikvari and Frenkel~\cite{csikvari2016benjamini},
can be used to reduce this computation to an enumeration over subgraphs induced
by {\it connected\/} sets of edges (see \cite[Section~6]{patel2017deterministic} for details).
Since $G$ has bounded degree, this reduces the complexity to
$\Delta^{O(k)} = (\frac{n\Delta}{\varepsilon})^{\log(\Delta)\exp(O(\tau^{-1}))}$.
For any fixed~$\Delta$ this is polynomial in~$(n/\varepsilon)$, thus satisfying
the requirement of a FPTAS.  

Note that the degree of the polynomial is 
exponential in~$\tau^{-1}$; since $\tau^{-1}$ in turn is exponential in~$\Delta$
(see the discussion following the proof of Theorem~\ref{thm:zeros}), the degree of the polynomial
is doubly exponential in~$\Delta$.  The same discussion explains how this can be
improved to singly exponential for the case of uniformly large list sizes.

\newcommand{\impted}{marked}
\section{Properties of the real-valued recurrence}
\label{sec:prop-real-valu}
In this section we prove some basic properties of the real-valued recurrence in 
Lemma\nobreakspace \ref {thm:recurrence},
that is, in the case where $w \in [0,1]$ is real (and hence $\gamma = 1 - w \in[0,1]$).

We remark that in all graphs $G$ appearing in our analysis, we will be able to
assume that for any unpinned vertex $u$ in $G$, $|L(u)| \geq \deg_G(u) + 1$.
Thus, $Z_G(w) \neq 0$ whenever either (i) $w \in (0, 1]$; or (ii) $w = 0$ and
$G$ is unconflicted.  As discussed in the previous section, this implies that the
marginal ratios and the pseudo marginal probabilities are
well-defined, and, further, the latter are actual probabilities.  Note also that,
$G$ is not connected, and $G'$ is the connected component containing $u$, then
we have $\ratio{G,u}{i,j}(w) = \ratio{G',u}{i,j}(w)$
and $\prob{G,w}{c(u)=i} = \prob{G',w}{c(u)=i}$.

As noted in the introduction, we will prove our main theorem about zero-freeness
under a certain abstract condition on list-coloring instances which we call
{\it admissibility}.  In this section, we define admissibililty and then show that
all three classes of instances referred to in
Theorems~\ref{thm:zeros-intro} and \ref {thm:zeros-intro-176}, and Proposition~\ref{thm:zeros-trees}
are admissible.  The last two sections of the paper will be devoted to proving
zero-freeness for all admissible instances.

To define admissibility, we augment our list-coloring instances by {\it marking\/}
certain unpinned vertices; we call the resulting instances {\it marked\/} instances.

The first key property of admissible instances is that they are ``hereditary", in the
following sense.
\begin{definition}[\textbf{Hereditary}]
  A condition on \impted{} list-coloring instances is \emph{hereditary} if it is
  preserved under each of the following operations:
  \begin{enumerate}
    \item Remove a pinned vertex from the graph, without changing the
        set of marked vertices.
    \item Pin a marked vertex $u$ to any color in its list $L(u)$, and mark (if
      they are not already marked) all
    unpinned neighbors of $u$.  (Note that $u$ itself is no longer marked since it is now a
        pinned vertex, while all other marked vertices, if any, remain marked.)
\item Given a graph $G$ and a marked unpinned vertex $u$ in $G$,
      take any neighbor $v_k$ of $u$ and colors $i,j$ in the list
        $L(u)$, and construct the instance $\G{k}{i,j}$ as in Definition~\ref{def:graphs-gik},
        with the set of marked vertices in $\G{k}{i,j}$ consisting of all unpinned
        neighbors of $u$ in $G$ and all other marked vertices $v'$ in $G$.
    \item Take a connected component $H$ of $G$, with the set of marked
        vertices in $H$ being those vertices of $H$ that were marked in~$G$.
	\end{enumerate}
	\label{def:hereditary}
  \end{definition}
  
The second key property of admissible instances is that the marginal distributions
of colors on certain vertices have ``large'' min-entropy. This ``niceness" property is spelled out 
in the following definition.
We emphasize that establishing niceness is the only place in our analysis where
the lower bounds on the list sizes are used.

\begin{definition}[\textbf{Niceness}]
  Given a graph $G$ and an unpinned vertex $u$ in $G$, let $d$ be the number of
  unpinned neighbors of $u$.  We say the vertex $u$ is \emph{\nice{}} in $G$ if
  for any real $w \in [0,1]$ and any color $i \in L(u)$,
  $ \Pr{G,w}{c(u) = i} \le \frac{1}{d + 2}.  $
	\label{cond:nice}
\end{definition}

We are now in a position to define admissible instances, as previously advertised.

\begin{definition}[\textbf{Admissibility}]
	A condition $\mathcal{L}$ on marked list-coloring instances is an 
	\emph{admissible} list condition if it satisfies all of the following properties:
	\begin{enumerate}
    \item[(i)] $\mathcal{L}$ is hereditary;
    \item[(ii)] if a list-coloring instance $G$ satisfies $\mathcal{L}$, then
        for every unpinned vertex $u$ in $G$, $\abs{L(u)} \geq \deg_G(u) + 1$;
    \item[(iii)] if a list-coloring instance $G$ satisfies $\mathcal{L}$, and $G$ has
      at least one unpinned vertex, then $G$ also has at least one marked unpinned
      vertex;
    \item[(iv)] if a list-coloring instance $G$ satisfies $\mathcal{L}$,
          then for any marked unpinned vertex $u$ in $G$, and
            any unpinned neighbor $v_k$ of $u$, $v_k$ is nice in $\G{k}{i,j}$.
	\end{enumerate}
	\label{def:good-list-condition}
\end{definition}

We now recall the three conditions on coloring instances from the introduction,
appropriately generalized to include list-coloring and marking.

\begin{condition}\label{asm:2G}
$\abs{L(v)} \ge \max\inbr{2, 2\cdot \deg_G(v)}$ for every
    unpinned vertex $v$ in $G$, and all unpinned vertices are marked.
\end{condition}

\begin{condition}\label{asm:176} 
The graph $G$ is triangle-free and further, for every unpinned vertex $v$ of $G$,
	\[
		\abs{L(v)} \ge \alpha \cdot \deg_G(v) + \beta,
	\]
   where $\alpha$ is any fixed constant larger than the unique positive
    solution $\alpha^\star$ of the equation $x e^{-\frac{1}{x}} = 1$ and
    $\beta = \beta(\alpha) \ge 2\alpha$ is a constant chosen so that
    $ \alpha \cdot e^{- \frac{1}{\alpha} (1+\frac{1}{\beta})} \ge 1 $.
    We note that $\alpha^\star$ lies in the interval $[1.763, 1.764]$, and
    $\beta$ as chosen above is at least $7/2$.  Further, all unpinned vertices
    are marked.
\end{condition}

\begin{condition}\label{asm:tree} 
$G$ is a forest of maximum degree $\Delta$, with
  the {\it same\/} list of $q \ge \Delta + 1$ colors for every unpinned vertex.
  Further, each connected component of the forest that does not consist
    entirely of pinned vertices has exactly one marked unpinned vertex,
    and all unpinned vertices with pinned vertices as neighbors are marked.
\end{condition}

\begin{remark}
  Note that the condition $\abs{L(v)} \geq 2$ imposed in Condition~\ref{asm:2G} above is
  without loss of generality, since any vertex with $\abs{L(v)} = 1$ can be
  removed from $G$ after removing the unique color in its list from the lists of
  its neighbors without changing the number of colorings of~$G$.
\end{remark}

\begin{remark}
  Condition~\ref{asm:176} is essentially identical to Assumption 1 of Gamarnik,
  Katz, and Misra~\cite{Gamarnik2015SSM}.  Indeed, an important technical
  calculation for us, which appears in Lemma~\ref{lem:real-estimate176}, is
  essentially identical to a similar calculation in~\cite{Gamarnik2015SSM}.  The
  differences between Condition~\ref{asm:176} and Assumption~1 
  of~\cite{Gamarnik2015SSM} are of a technical nature, and are driven by the form
  of the upper bound we require in Lemma~\ref{lem:real-estimate176}.  In
  particular, Assumption~1 of~\cite{Gamarnik2015SSM} puts a somewhat weaker
  restriction on $\beta$ ($\beta \geq 2 + \sqrt{2}$), but then requires the
  stronger condition
  $(1- 1/\beta)\cdot\alpha\cdot\exp(-1/\alpha \cdot ( 1 + 1/\beta) > 1$ on
  $\alpha$ and $\beta$ together.
\end{remark}

Our goal in the remainder of this section is to prove that all three of the above
conditions are admissible.

\begin{lemma}
Conditions~\ref{asm:2G}, \ref{asm:176} and~\ref{asm:tree} above are all admissible.
  \label{obv:good-to-nice}
\end{lemma}

To prove this lemma, we first verify the easy fact that all three list conditions are hereditary.

\begin{proposition}\label{prop:hereditary}
Conditions~\ref{asm:2G}, \ref{asm:176} and~\ref{asm:tree} above are all hereditary.
\end{proposition}
\begin{proof}
  Recall that hereditary conditions must be preserved under the four operations
  listed in Definition~\ref{def:hereditary}.

  For the first operation, observe that removing any number of pinned vertices
  does not increase the degree or change the lists at any unpinned vertices.  
  Further, if the graph is triangle-free, it remains so after such a removal.  
  Finally, this operation does not change which vertices are marked. Hence the
  first operation preserves all three conditions.
 
  For the second operation, we note that pinning a vertex does not change the degree
  or the list at any unpinned vertex.  Further, if the graph is
  either triangle-free or a tree, it remains so after the operation of pinning a
  vertex.  This already establishes that the second operation
  preserves Conditions~\ref{asm:2G} and~\ref{asm:176}, as all unpinned vertices
  remain marked.
  For Condition~\ref{asm:tree}, we note that on pinning a marked vertex~$u$ in the
  forest, the component in which $u$ lies breaks into connected
  components (trees) indexed by the neighbors of $u$, none of which are marked
  in $G$ (since, by the hypothesis, $u$ is the unique marked vertex
  in its connected component).  Further, the components indexed by the pinned
  neighbors of $u$ are just single edges with both endpoints pinned, while those indexed
  by an unpinned neighbor $v$ of $u$ get $v$ as their unique marked vertex.
  Thus, Condition~\ref{asm:tree} is also preserved under the second operation.

  We now turn to the third operation.  Again, as in the second operation, none
  of the lists at the unpinned vertices change, while the degree of $v_k$ drops
  by one.  As all unpinned vertices remain marked, this already establishes that
  this operation preserves Conditions~\ref{asm:2G} and~\ref{asm:176}.  For the
  case when $G$ is a forest (Condition~\ref{asm:tree}), we note that
  in $\G{k}{i,j}$, the component of $G$ containing~$u$ breaks into
  connected components (trees) indexed by the neighbors of $u$ in $G$.
  Further, since only $u$ was marked in its connected component in $G$,
    and only the unpinned neighbors of $u$ get marked in the new connected
    components created in $\G{k}{i,j}$, the condition that each connected
    component not consisting entirely of pinned vertices must have exactly one
    marked vertex is satisfied.  Finally, we observe that the only pinned
    vertices in the newly created connected components in $\G{k}{i,j}$ must
    correspond to either (i) pinned neighbors of $u$ in $G$; or (ii) pinned
    copies of $u$ that are now neighbors of (marked) vertices that were the
    unpinned neighbors of $u$ in $G$.  All these pinned vertices have either a
    pinned vertex or a marked vertex as their (unique) neighbor.  This
    establishes that Condition~\ref{asm:tree} is also preserved by the third
    operation.
  
  Finally, the fourth operation of passing to a connected
  component trivially maintains all three conditions.
\end{proof}

Continuing with our proof of Lemma~\ref{obv:good-to-nice},
we note next that property~(ii) is trivially true for all three of 
Conditions~\ref{asm:2G}, \ref{asm:176} and~\ref{asm:tree},
while property~(iii) is also easily verified in all three cases.
To conclude the proof, it therefore remains only to prove the niceness property~(iv).
We do this separately for each of the three Conditions in the following
subsections.

\begin{remark} In the remainder of this section, we adopt the convention that if
  $G$ is a conflicted graph (so that it has no proper colorings) and $w = 0$,
  then $\Pr{G,w}{c(u) = i} = 0$ for every color $i$ and every unpinned vertex
  $u$ in $G$. This is just to simplify the presentation in this section by
  avoiding the need to explicitly exclude this case from the lemmas below.  In
  the proof of our main result in
  Sections\nobreakspace \ref {sec:induction-origin} and\nobreakspace  \ref {sec:induction-interval}, we will never consider
  conflicted graphs in a situation where $w$ could be $0$, so that this
  convention will then be rendered moot.
  \label{rem:convention}
\end{remark}

\subsection{Analysis for Condition~\ref {asm:2G}}

\begin{lemma}
  Let $G$ be a graph that satisfies Condition~\ref {asm:2G}.
  Then for any unpinned vertex $u$ in $G$, and any unpinned neighbor $v_k$ of
  $u$, we have that $v_k$ is \nice{} in $\G{k}{i,j}$.
	\label{lem:real-estimate2}
\end{lemma}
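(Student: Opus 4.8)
The plan is to reduce the statement to a single universal bound on marginal probabilities and then read off the conclusion by bookkeeping the surgery that defines $\G{k}{i,j}$.

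First I would prove a general fact, valid for an \emph{arbitrary} graph $H$: if $v$ is an unpinned vertex of $H$, $\ell\in L(v)$, and $w\in[0,1]$, then $\Pr{H,w}{c(v)=\ell}\le \frac{1}{|L(v)|-\deg_H(v)}$ whenever $|L(v)|>\deg_H(v)$. The argument is the standard ``recolouring'' counting: fixing a colouring $\tau$ of $H-v$ that respects the lists off $v$, and writing $m_{-v}(\tau)$ for the number of monochromatic edges of $\tau$ not incident to $v$ and $t_c(\tau)$ for the number of neighbours of $v$ coloured $c$ by $\tau$, we have $\Z{H,v}{c}(w)=\sum_{\tau}w^{m_{-v}(\tau)+t_c(\tau)}$ for each $c\in L(v)$. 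For a fixed $\tau$, at most $\deg_H(v)$ of the colours of $L(v)$ can occur on a neighbour of $v$, so at least $|L(v)|-\deg_H(v)$ colours $c\in L(v)$ have $t_c(\tau)=0$, and for each of those $w^{t_c(\tau)}=1\ge w^{t_\ell(\tau)}$ since $w\le1$. Summing over these colours and then over $\tau$ gives $Z_H(w)=\sum_{c\in L(v)}\Z{H,v}{c}(w)\ge\bigl(|L(v)|-\deg_H(v)\bigr)\,\Z{H,v}{\ell}(w)$, which is the asserted bound. (If $H$ is conflicted and $w=0$ the bound holds trivially by the convention $\Pr{H,0}{c(v)=\ell}=0$.)

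Then I would apply this with $H=\G{k}{i,j}$ and $v=v_k$. Since $u$ is unpinned and $v_k$ is a neighbour of $u$, we have $\deg_G(v_k)\ge1$. Passing from $G$ to $\G{k}{i,j}$ affects $v_k$ only by deleting the edge $\{u,v_k\}$ (the copies $u_m$ with $m\ne k$ are not adjacent to $v_k$, and $u_k$ is deleted), and it leaves $L(v_k)$ unchanged; hence $\deg_{\G{k}{i,j}}(v_k)=\deg_G(v_k)-1$ while $|L(v_k)|$ is as in $G$. Moreover, because $u$ was itself an unpinned neighbour of $v_k$, the number $d$ of unpinned neighbours of $v_k$ in $\G{k}{i,j}$ is exactly one less than in $G$; in particular $d\le\deg_G(v_k)-1$. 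By \cref{asm:2G} of \cref{cond:large-lists}, $|L(v_k)|\ge 2\deg_G(v_k)\ge\deg_G(v_k)+d+1$, so $|L(v_k)|-\deg_{\G{k}{i,j}}(v_k)=|L(v_k)|-\deg_G(v_k)+1\ge d+2$. The general bound now gives $\Pr{\G{k}{i,j},w}{c(v_k)=\ell}\le\frac{1}{d+2}$ for every $w\in[0,1]$ and every $\ell\in L(v_k)$, which is precisely the statement that $v_k$ is \nice{} in $\G{k}{i,j}$. Along the way one checks that \cref{cond:large-lists} also guarantees $|L(z)|\ge\deg_{\G{k}{i,j}}(z)+1$ for every unpinned $z$ in $\G{k}{i,j}$, so the relevant partition functions are nonzero and the marginal probabilities above are well-defined.

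There is no substantive obstacle here: the universal bound is essentially a one-line counting argument. The only care needed is in the second step --- tracking precisely how $\deg$, $|L|$, and the count of unpinned neighbours of $v_k$ transform under the surgery producing $\G{k}{i,j}$ --- together with handling the degenerate conflicted/$w=0$ case via the paper's convention.
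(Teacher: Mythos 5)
Your proposal is correct and follows essentially the same route as the paper: a one-step counting/conditioning bound giving $\Pr{H,w}{c(v)=\ell}\le \frac{1}{|L(v)|-\deg_H(v)}$ (you sum over colorings of $H-v$, the paper conditions on the neighbors' colors, which is the same estimate), followed by the same bookkeeping $\deg_{\G{k}{i,j}}(v_k)=\deg_G(v_k)-1$, $|L(v_k)|$ unchanged, and $d\le\deg_{\G{k}{i,j}}(v_k)$ to conclude $\Pr{\G{k}{i,j},w}{c(v_k)=\ell}\le\frac{1}{d+2}$. No gaps.
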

\begin{proof}
  For ease of notation, we denote $\G{k}{i,j}$ by $H$ and $v_k$ by $v$.  Since
  $G$ satisfies Condition~\ref{asm:2G}, and
  $\deg_{H} (v) =\deg_G(v_k) - 1$ (since the neighbor $u$ of $v_k$ in $G$ is
  dropped in the construction of $H = \G{k}{i,j}$), we have
  $
	|L_{H} (v)| = \abs{ L_G(v_k)} \ge 2 \deg_G(v_k) \ge 2\cdot \deg_{H} (v) + 2.
	$

  Consider any valid coloring\footnote{Here, we say that a coloring $\sigma$ is
    \emph{valid} if the color $\sigma$ assigns to any vertex $v$ is from $L(v)$,
    and further, in case $w = 0$, no two neighbors are assigned the same color
    by $\sigma$.}  $\sigma'$ of the neighbors of $v$ in $H$.  For
  $k \in L_H(v)$, let $n_k$ denote the number of neighbors of $v$ that are
  colored $k$ in $\sigma'$.  Then for any $w \in [0, 1]$ and $i \in L_H(v)$,
  \begin{align*}
    \Pr{H,w}{c(v) = i | \sigma'} \maybealign = \frac{w^{n_i}}{\sum_{j \in L_H(v)}w^{n_j}} \eqbreak
    \maybealign \leq \frac{1}{\abs{L_H(v)} - \deg_H(v)},
  \end{align*}
  since at most $\deg_H(v)$ of the $n_j$ can be positive.  Note in particular
  that if $i$ is not a good color for $v$ in $H$, then the probability is $0$.
  Since this holds for any coloring $\sigma'$, we have
  $\Pr{H,w}{c(v) = i} \le \frac{1}{\abs{L_{H}(v)} - \deg_{H}(v)}.$ 
Now, let $d$ be the number of unpinned neighbors of $v$ in $H$. Noting that
  $\deg_{H}(v) \geq d$, and recalling the observation above that
  $\abs{L_H(v)} \geq 2\deg_H(v) + 2$, we thus have
  \begin{align*}
    \Pr{\G{k}{i,j},w}{c(v_k) = i } \maybealign = \Pr{H,w}{c(v) = i } \eqbreak
    \maybealign \le \frac{1}{\abs{L_{H}(v)} - \deg_{H}(v)} \leq \frac{1}{d + 2}.
\end{align*}
  Thus $v_k$ is \nice{} in $\G{k}{i,j}$.
\end{proof}

\subsection{Analysis for Condition~\ref {asm:176}}

Note that, as established in Proposition~\ref{prop:hereditary}, if $G$
satisfies Condition~\ref{asm:176} then so does $\G{k}{i,j}$.  Thus
in order to show that $v_k$ is \nice{} in $\G{k}{i,j}$, it suffices to show the
following more general fact.

\begin{lemma}Let $G$ be any graph that satisfies Condition~\ref{asm:176},
  and let $u$ be any unpinned vertex in $G$.  Then $u$ is \nice{} in $G$.
\label{lem:real-estimate176}
\end{lemma}
  The proof of this lemma is almost identical to arguments that appear in the
  work of Gamarnik, Katz and Misra~\cite{Gamarnik2015SSM} on strong spatial
  mixing; we include a proof here for completeness.
\par\smallskip  
\begin{proof}
  We show first that $\Pr{G,w}{c(u) = i} \le \frac{1}{\beta}$ whenever
  $L_G(u) \geq \deg_G(u) + \beta$; this will be required later in the proof.  To
  do so, we repeat the arguments in the proof of \MakeUppercase Lemma\nobreakspace \ref {lem:real-estimate2} to
  see that $\Pr{G,w}{c(u) = i} \leq \frac{1}{\abs{L(u)} - \deg_G(u)}$.  The
  claimed bound then follows since $\abs{L(u)} - \deg_G(u) \ge \beta$.

  Next we show that the upper bound of $\frac{1}{d+2}$, where $d$ is the number
  of unpinned neighbors of $u$ in $G$, holds conditioned on every coloring of
  the neighbors of the (unpinned) neighbors of $u$, by following a similar path
  as in~\cite{Gamarnik2015SSM}.  Consider any valid coloring~$\sigma'$ 
  (defined as in the proof of the previous lemma) of the
  vertices at distance \emph{two} from $u$.  Since $G$ is triangle free, we
  claim there is a tree $T$ of depth two rooted at
  $u$, with all the leaves pinned according to $\sigma'$, such that
  \begin{equation}
    \Pr{G,w}{c(u) = i | \sigma'} = \Pr{T,w}{c(u)=i}.
    \label{eq:31}
  \end{equation}
To see this, notice that once we condition on
  the coloring of the vertices at distance $2$ from $u$, the distribution of the
  color at $u$ becomes independent of the distribution of colors of vertices at
  distance $3$ or more. Further, because of triangle freeness, no two
  neighbors of $u$ have an edge between them, and hence any cycle in the
  distance-$2$ neighborhood, if one exists, must go through at least one pinned
  vertex.  We then observe that such a cycle can be broken by replacing any
  pinned vertex $v'$ in it with $\deg(v')$ copies, one for each of its neighbors: as
  discussed earlier, this operation cannot change the partition function or
  probabilities.  This operation therefore ensures that every pinned vertex in
  the resulting graph is now a leaf of a tree $T$ of depth $2$ rooted at
  $u$. Further, in $T$, the root $u$ has $d$ unpinned children, and all vertices
  at depth $2$ are pinned according to $\sigma'$.

  Let $v_1, \cdots, v_d$ be the $d$ unpinned neighbors of $u$ in $T$, and let
  $T_1, \cdots, T_d$ be the subtrees rooted at $v_1, \cdots, v_d$ respectively.
  For each $k \in L_G(u)$, let $n_k$ be the number of neighbors of $u$ that are
  pinned to color $k$.  Then by~Lemma\nobreakspace \ref {thm:recurrence},
  \[
    \ratio{T,u}{j,i}(w) =\frac{w^{n_j}\cdot \prod_{k=1}^d \inp{1 - \gamma \cdot
        \prob{T_k,w}{ c(v_k) = j} } }{w^{n_i} \cdot \prod_{k=1}^d \inp{1 -
        \gamma \cdot \prob{T_k,w}{ c(v_k) = i} } }.
  \]
  Define $t_{k j} \defeq \gamma \cdot \Pr{T_k, w}{c(v_k) = j }$, and note that
  from the calculation at the beginning of the proof, we have
  $0 \le t_{k j} \le \frac{\gamma}{\beta} \leq \frac{1}{\beta} \leq 1/2$.  Note
  also that $t_{k j} = 0$ if $j \not\in L(v_k)$.  Thus, we have
  \begin{equation}
    \sum_{j \in \good{u}} t_{k j} = \gamma \sum_{j \in \good{u} \cap
      L(v_k)} \Pr{T_k, w}{c(v_k) = j} \le \gamma \le 1.\label{eq:29}
  \end{equation}
  Therefore,
  \begin{align}
    \Pr{T,w}{ c(u) = i }
    \maybealign =\frac{1}{\sum_{j \in L(v)} \ratio{T,v}{j,i}(w)} \eqbreak
    \maybealign =\frac{w^{n_i}\cdot\prod_{k=1}^d \inp{1 - t_{k i} } }{\sum_{j \in L(u)} w^{n_j} \prod_{k=1}^d \inp{1 - t_{k j} }}\eqbreak
    \maybealign \le \frac{1}{\sum_{j\in \good{u}}\prod_{k=1}^d \inp{1 - t_{k j}} }, \label{eq:32}
  \end{align}
  where, in the last inequality we use that $n_j = 0$ when $j$ is good for $u$
  in $G$, and also that $w \in [0,1]$.

  Since $ \Pr{G,w}{c(u) = i | \sigma' } =\Pr{T,w}{c(u) = i }$, it remains to
  lower bound the denominator term $\sum_{j\in \good{u}}\prod_{k=1}^d \inp{1 - t_{k j}}$.  We begin
  by recalling the following standard consequence of the Taylor expansion of
  $\ln(1-x)$ around $0$: when $0 \leq x \leq \frac{1}{\beta} < 1$, and $\beta$
  is such that $(1-1/\beta)^2 \geq 1/2$,
  \begin{align}
    \ln(1 - x) \maybealign\geq {-x - \frac{x^2}{2(1-1/\beta)^2}} \eqbreak
    \maybealign\geq {-x - x^2} \eqbreak
    \maybealign\geq {-\inp{1+\frac{1}{\beta}}x}.
    \label{eq:30}
  \end{align}
  Note that the condition required of $\beta$ is satisfied since
  $\beta \geq 2\alpha \geq 7/2$, as stipulated in Condition~\ref{asm:176}.
  Since $0 \leq t_{k j} \leq 1/\beta$, we therefore obtain, for every $j \in \good{u}$,

\begin{align}
    \prod_{k=1}^d (1 - t_{k j}) \maybealign\ge \prod_{k=1}^d \exp\inp{-\inp{1+\frac{1}{\beta}}t_{k j}} \eqbreak
    \maybealign = \exp\inp{-\inp{1+\frac{1}{\beta}} \sum_{k=1}^d t_{k j}}.
    \label{eq:taylor-ln-exp}
  \end{align}
  For convenience of notation, we denote $\abs{\good{u}}$ by $q_u$.  Note that
  since $\abs{L(u)} \geq \alpha\deg(u) + \beta$, and $u$ has $\deg(u) - d$
  pinned neighbors, we have
  \begin{align}
    q_u \maybealign\geq \abs{L(u)} - (\deg(u) - d) \eqbreak
    \maybealign\geq \abs{L(u)} - \alpha (\deg(u) - d) \eqbreak
    \maybealign\geq \alpha d + \beta,
    \label{eq:qu-bound}
  \end{align}
  where in the second inequality we use $\alpha \geq 1$. Now, by the AM-GM
  inequality, we get

  \begin{align*}
	\maybealign\sum_{j\in \good{u}}\prod_{k=1}^d \inp{1 - t_{k j}}\eqbreak
	&\ge q_u \inp{\prod_{j \in \good{u}} \prod_{k=1}^d \inp{1 - t_{k j}}}^{\frac{1}{q_u}} \\
	&\ge q_u \exp\inp{-\frac{1+1/\beta}{q_u}  \cdot  \sum_{k=1}^d \sum_{j \in \good{u}} t_{k j}}, & \hbox{  using~eq.\nobreakspace \textup {(\ref {eq:taylor-ln-exp})}} \\
	&\ge (\alpha d + \beta) \exp\inp{-\frac{d(1+1/\beta)}{\alpha d + \beta}  } , & \hbox{  by eqs.\nobreakspace \textup {(\ref{eq:29}) and \textup{(\ref{eq:qu-bound})} }}\\
	&\ge (d+2) \alpha \cdot  \exp\inp{-\frac{(1+1/\beta)}{\alpha}  } , & \hbox{  using $\beta\ge 2\alpha$}\\
	&\ge (d+2),
  \end{align*}
  where the last line uses the stipulation in Condition~\ref{asm:176} that $\alpha$ and $\beta$ satisfy
  $\alpha \cdot  \exp\inp{-\frac{(1+1/\beta)}{\alpha}} \geq 1$.  From eqs.\nobreakspace \textup {(\ref {eq:31})} and\nobreakspace  \textup {(\ref {eq:32})} we
  therefore get
  \begin{displaymath}
    \Pr{G,w}{c(u) = i|\sigma'} \le \frac{1}{d + 2}.
\end{displaymath}
  Since this holds for any conditioning $\sigma'$ of the colors of the neighbors
  of the neighbors of $u$ in $G$, we then have
  \begin{displaymath}
    \Pr{G,w}{c(u) = i} \le \frac{1}{d + 2},
\end{displaymath}
  which concludes the proof.
\end{proof}
\subsection{Analysis for Condition~\ref{asm:tree}}
\begin{lemma}
  Let $G$ be a list-coloring instance that satisfies Condition~\ref{asm:tree}
  (in particular, $G$ is a forest), and let $u$ be a marked unpinned
    vertex in $G$.  Then any unpinned neighbor $v_k$ of $u$
  is \nice{} in $\G{k}{i,j}$.
	\label{lem:real-estimate-tree}
\end{lemma}
\begin{proof}
Since $G$ is a forest, and all pinned vertices in the connected
    component of $u$ in $G$ must be neighbors of $u$ (since $u$ is, by
    Condition~\ref{asm:tree}, the unique marked vertex
    in its component), we see that the connected
    component of $v_k$ in $\G{k}{i,j}$ contains no pinned vertices.  Since all
    unpinned vertices in $G$ have the same list, which is of size $q \geq \Delta + 1$
    (where $\Delta$ is the maximum degree of $G$), it follows by symmetry that
    the marginal distribution of the color of $v_k$ is uniform.  Further, since
  the neighbor $u$ of $v_k$ in $G$ is not present in $\G{k}{i,j}$, we know that
  $v_k$ has $d \leq \Delta - 1$ unpinned neighbors in $\G{k}{i,j}$.  Thus, for
  each $i \in L(v_k)$,
  \[\Pr{\G{k}{i,j},w}{c(v_k) = i } = \frac{1}{q} \le \frac{1}{\Delta+1} \le
    \frac{1}{d+2},\] which establishes that $v_k$ in nice in $\G{k}{i,j}$.
\end{proof}

\begin{proof}[Proof of Lemma~\ref{obv:good-to-nice}]
  The proof of Lemma~\ref{obv:good-to-nice} now follows by combining
  Proposition~\ref{prop:hereditary} and Lemmas~\ref{lem:real-estimate2},
  \ref{lem:real-estimate176} and~\ref{lem:real-estimate-tree}, along with the
  simple observations about properties~(ii) and~(iii) preceding
  Remark~\ref{rem:convention}.
\end{proof}

We conclude this section by noting that the niceness condition can be
strengthened in the case when all the list sizes are uniformly large (e.g., as
in the case of standard $q$-colorings).
\begin{numremark} \label{rem:nicer} In Conditions~\ref{asm:2G}
  and~\ref{asm:176}, if we replace the degree of a vertex by the maximum degree
  $\Delta$ (i.e., in Condition~\ref{asm:2G}, if we assume
  $\abs{L(v)} \geq 2\Delta$, and in Condition~\ref{asm:176}, we assume
  $\abs{L(v)} \geq \alpha\Delta + \beta$, for each $v$), then for every vertex
  $v$ in the graph $G$ we also have
  $\Pr{G,w}{c(v) = i} < \min\set{\frac{4}{3\Delta}, 1}$.

  To see this, notice that the same calculation as in the proof of
  Lemma\nobreakspace \ref {lem:real-estimate2} above gives
  \[\Pr{G,w}{c(v) = i} \le \frac{1}{\abs{L(v)} - \Delta } < \frac{4}{3\Delta},
  \]
  under the maximum degree versions of both Conditions~\ref{asm:2G} and \ref{asm:176}.
We will refer to this stronger condition on list sizes as the \emph{uniformly large list
size} condition.  Note that the maximum degree versions of the conditions are also admissible by
the same arguments as those for Conditions~\ref{asm:2G} and~\ref{asm:176}.
\end{numremark}

\section{Zero-free region for small $\abs{w}$}
\label{sec:induction-origin}
As explained in the introduction, all our algorithmic results follow from
Theorem\nobreakspace \ref {thm:zeros}, which establishes a zero-free region for the partition
function $Z_G(w)$ around the interval $[0,1]$ in the complex plane.  We split
the proof of Theorem\nobreakspace \ref {thm:zeros} into two parts: in this section, we establish the
existence of a zero-free disk around the endpoint $w=0$ (see
Theorem\nobreakspace \ref {thm:main-origin-restated}): this is the most delicate case because $w=0$
corresponds to proper colorings.  Then in Section\nobreakspace \ref {sec:induction-interval} (see
Theorem\nobreakspace \ref {thm:main-interval-restated}) we derive a zero-free region around the
remainder of the interval, using a similar but less delicate approach.  Taken
together, Theorems\nobreakspace \ref {thm:main-origin-restated} and\nobreakspace  \ref {thm:main-interval-restated} immediately
imply Theorem\nobreakspace \ref {thm:zeros}, so this will conclude our analysis.
\begin{theorem}
  Fix a positive integer $\Delta$, and let $\mathcal{L}$ be an
    admissible list condition.  There exists a $\nu_w = \nu_w(\Delta)$ such
  that the following is true.  Let $G$ be a graph of maximum degree $\Delta$
  satisfying the admissible list condition $\mathcal{L}$, and having no
  pinned vertices.  Then, $Z_G(w) \neq 0$ for any $w$ satisfying
  $\abs{w} \leq \nu_w$.
  \label{thm:main-origin-restated}
\end{theorem}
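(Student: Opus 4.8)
The plan is to reduce \cref{thm:main-origin-restated} to an inductive claim about the marginal ratios $\ratio{G,v}{i,j}(w)$, in the spirit of the geometric observation from the technical overview. Fix an arbitrary unpinned vertex $v$ of $G$. Since $G$ has no pinned vertices and satisfies \cref{cond:large-lists} (so $\abs{L(u)} \ge \deg_G(u)+1$ for every unpinned $u$), a greedy colouring argument shows that $\Z{G,v}{c}(0) > 0$ for \emph{every} $c \in L(v)$; thus at $w=0$ all colours ``survive'' at $v$. I would then prove, for every $w$ with $\abs{w} \le \nu_w$, that each $\ratio{G,v}{i,j}(w)$ is well-defined and satisfies $\abs{\arg \ratio{G,v}{i,j}(w)} \le \theta$ for a constant $\theta = \theta(\Delta) < \pi/4$, and that $Z_G(w) \neq 0$. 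Granting the argument bound at the root, $Z_G(w) = \Z{G,v}{j}(w)\sum_{i \in L(v)}\ratio{G,v}{i,j}(w)$ cannot vanish: all summands lie in a common cone of half-angle $\theta$, so by \cref{lem:geometric-proj} their sum has modulus bounded below by a positive multiple of $\sum_i\abs{\ratio{G,v}{i,j}(w)}$, and $\Z{G,v}{j}(w) \ne 0$ follows by applying the same argument to the (pinned but unconflicted) graph obtained by splitting $v$ and pinning its copies to $j$.

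The inductive claim is proved on $\abs{G}$ using the recurrence of \cref{thm:recurrence}, which writes $\ratio{G,u}{i,j}(w)$ as a product over the $\deg_G(u) \le \Delta$ neighbours $v_k$ of $u$, the $k$-th factor being built from the pseudo-marginal probabilities $\prob{\G{k}{i,j},w}{c(v_k)=i}$ and $\prob{\G{k}{i,j},w}{c(v_k)=j}$ of the strictly smaller graphs $\G{k}{i,j}$ (\cref{def:graphs-gik}); these pseudo-probabilities are in turn rational functions of the marginal ratios of $\G{k}{i,j}$, to which the induction hypothesis applies. Taking the principal logarithm turns the product into a sum, and the goal is to compare this sum against the corresponding identity at $\tw = 0$, where every quantity is an honest colouring count or probability. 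After reparametrizing each factor of the form $1 - (1-w)\prob{\G{k}{i,j},w}{c(v_k)=\cdot}$ into the shape $1 - \kappa e^{x}$ handled by $f_\kappa$ in \cref{eq:28} — with $\kappa \in [0,1]$ recording the real value at $\tw = 0$ and the small complex perturbation pushed into $x$ — I would invoke \cref{obv:f-props-int} to certify that $f_\kappa$ and the resulting thin rectangle $D$ meet the hypotheses of \cref{thm:signed-mean}, and then use \cref{thm:signed-mean} to control, separately for real and imaginary parts, how far $f_\kappa(x)$ can move as $x$ leaves the real axis. The \emph{\nice{}ness} of $v_k$ in $\G{k}{i,j}$, guaranteed by \cref{obv:good-to-nice} and \cref{cond:nice}, is the pivotal ingredient: it bounds the real child-probability $\Pr{\G{k}{i,j}}{c(v_k)=i} \le \frac{1}{d+2}$ by the number $d$ of unpinned neighbours of $v_k$, which (i) keeps each base value below $\tfrac12$, thereby placing $x$ inside the domain $D$ where $\Re x < -\zeta$, and (ii) makes the per-neighbour error contributions telescope across the $\le \Delta$ factors, so that $\arg\ratio{G,u}{i,j}(w)$ stays below $\theta$ rather than compounding with the depth of the recursion.

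The main obstacle — and the reason the induction hypothesis (this will be \cref{lem:origin-induction}) must be chosen so carefully — is the behaviour at $w$ very close to $0$: because $w=0$ encodes the hard colouring constraint, a child graph $\G{k}{i,j}$ may be conflicted, or may carry colours that are bad for $v_k$, so some $\Z{\G{k}{i,j},v_k}{c}(0)$ can vanish and the associated pseudo-probability need not be bounded away from $1$ for complex $w$. The hypothesis must therefore simultaneously maintain (i) an argument bound for ratios between surviving colours; (ii) lower bounds on $\abs{Z_H(w)}$ and on the magnitudes of the surviving $\Z{H,v}{c}(w)$, so that the logarithms, the recurrence (recall the validity conditions attached to \cref{thm:recurrence}), and the reparametrization are all legitimately defined; and (iii) an $O(\abs{w})$ upper bound on the non-surviving restricted partition functions, so that those may be discarded in the final application of \cref{lem:geometric-proj} without destroying non-vanishing. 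Propagating all three invariants through the recurrence while keeping the parameters $\tau,\zeta$ of \cref{obv:f-props-int} mutually consistent (so that $\tau^2 + e^{-\zeta} < 1$ and the smallness conditions there hold) is exactly what forces $\nu_w$ to be a small constant depending on $\Delta$; once such a $\nu_w$ is fixed, the base case $\abs{G}=1$ is immediate and the inductive step is the computation sketched above. Together with the analogous statement around the remainder of $[0,1]$ proved in \cref{sec:induction-interval}, this yields \cref{thm:zeros}.
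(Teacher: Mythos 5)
Your plan is essentially the paper's own proof: an induction on the number of unpinned vertices (the paper's \cref{lem:origin-induction}) that propagates argument bounds for ratios of good colors, non-vanishing of the surviving restricted partition functions, and an $O(\abs{w})$ bound for bad colors, driven by the recurrence of \cref{thm:recurrence}, the \nice{}ness guarantee of \cref{obv:good-to-nice}, the $f_\kappa$ reparametrization with \cref{obv:f-props-int} and \cref{thm:signed-mean}, and concluded at the root via \cref{lem:geometric-proj}. The only (harmless) deviations are cosmetic, e.g.\ invoking a greedy-coloring argument for $\Z{G,v}{c}(0)>0$ where the paper gets non-vanishing inductively, so I consider the proposal correct and the same approach.
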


In the proof, we will encounter several constants which we now fix.  
Given the degree bound $\Delta \geq 1$, we define
\begin{equation}
  \eps_R \defeq \frac{0.01}{\Delta^2},\,
  \eps_I \defeq \eps_R\cdot\frac{0.01}{\Delta^2},\text{ and }
  \eps_w \defeq \eps_I\cdot\frac{0.01}{\Delta^3}.\label{eq:def-eps}
\end{equation}
We will then see that the quantity $\nu_w$ in the statement of the theorem can
be chosen to be $0.2\eps_w/2^\Delta$.  (In fact, we will show that if one has
the slightly stronger assumption of uniformly large list sizes, as considered in
\protect \MakeUppercase {R}emark\nobreakspace \ref {rem:nicer}, then $\nu_w$ can be chosen to be $\eps_w/(300\Delta)$.)

Throughout the rest of this section, we fix $\Delta$ to be the maximum degree of
the graphs, and let $\eps_w,\eps_I,\eps_R$ be as above.

We now briefly outline our strategy for the proof.  Recall that, for a vertex
$u$ and colors $i,j$, the marginal ratio is given by
$ \ratio{G,u}{i,j}(w) = \frac{\Z{G,u}{i}(w)}{\Z{G,u}{j}(w)}.  $ When $G$ is an
unconflicted graph, $\ratio{G,u}{i,j}(0)$ is always a well-defined non-negative
real number.  Intuitively, we would like to show that
$\ratio{G,u}{i,j}(w) \approx \ratio{G,u}{i,j}(0)$, independent of the size of
$G$, when $w \in \C$ is close to $0$.  Given such an approximation one can use a
simple geometric argument (see~Consequence\nobreakspace \ref
{lem:sum-lowerbounds}) to conclude that the partition function does not vanish
for such $w$.  In order to prove the above approximate equality inductively for
a given graph $G$, we take an approach that exploits the properties of the
``real'' case (i.e., of $\ratio{G,u}{i,j}(0)$) and then uses the notion of
``\nice{}ness'' of certain vertices described earlier to control the
accumulation of errors.  To this end, we will prove the following lemma via
induction on the number of unpinned vertices in $G$.  \MakeUppercase
Theorem\nobreakspace \ref {thm:main-origin-restated} will follow almost
immediately from the lemma; see the end of this section for the details.
Throughout the section, we fix an admissible list condition $\mathcal{L}$, and a
$w \in \C$ satisfying $|w| \leq \nu_w$ (as in the statement of
Theorem~\ref{thm:main-origin-restated}).

\begin{lemma}
  \label{lem:origin-induction}
  Let $G$ be an unconflicted graph of maximum degree $\Delta$ satisfying
  an admissible list condition $\mathcal{L}$, and let $u$ be any
  marked unpinned vertex in $G$.  Then, the following are
  true (with $\eps_w, \eps_I,$ and $\eps_I$ as defined in eq.\nobreakspace
  \textup {(\ref {eq:def-eps})}):
  \begin{enumerate}\item \label{ind-0} For $i \in \good{u}$, $\abs{ \Z{G,u}{i} (w)} > 0 $.
  \item \label{ind-1} For $i,j\in \good{u}$, if $u$ has all neighbors pinned,
    then $\ratio{G,u}{i,j}(w) = \ratio{G,u}{i,j}(0) = 1$. \item \label{ind-2} For $i,j\in \good{u}$, if $u$ has $d \geq 1$ unpinned
    neighbors, then
    \[
      \frac{1}{d}\abs{\Re \ln \ratio{G,u}{i,j}(w) - \Re \ln
        \ratio{G,u}{i,j}(0)} < \eps_R.
    \]
  \item \label{ind-3} For any $i,j\in \good{u}$, if $u$ has $d \geq 1$ unpinned
    neighbors, we have
    $\frac{1}{d} \abs{\Im \ln \ratio{G,u}{i,j}(w)} < \eps_I$.
  \item \label{ind-4} For any $i \not\in \good{u}, j \in \good{u}$, then
    $\abs{\ratio{G,u}{i,j}(w)} \le \eps_w$.
  \end{enumerate}
\end{lemma}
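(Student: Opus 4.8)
The plan is to prove Lemma~\ref{lem:origin-induction} by induction on the number of unpinned vertices in $G$, using the recurrence of \Cref{thm:recurrence}. The base case is when $u$ has all its neighbors pinned: then $\ratio{G,u}{i,j}(w)$ is a ratio of monomials in $w$, and for good colors $i,j$ it equals $1$ identically, giving items~\ref{ind-1}, and item~\ref{ind-0} since the relevant restricted partition function is a nonzero polynomial (here we use $|L(u)|\ge \deg_G(u)+1$). Item~\ref{ind-4} in the base case: if $i\notin\good u$ then $u$ has a pinned neighbor of color $i$, so every term of $\Z{G,u}{i}(w)$ carries at least one factor of $w$, whence $|\ratio{G,u}{i,j}(w)|\le C|w|\le\eps_w$ for $|w|$ small enough; this is where the $2^\Delta$ factor in $\nu_w$ enters, to absorb the worst case over the at most $\Delta$ pinned neighbors.

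**The inductive step for items~\ref{ind-2},~\ref{ind-3}.** For the inductive step with $u$ having $d\ge1$ unpinned neighbors, apply \Cref{thm:recurrence}. Taking $\ln$ of the recurrence, $\ln\ratio{G,u}{i,j}(w)$ becomes a sum over $k$ of terms of the form $f_\kappa$-type expressions $\ln(1-\gamma\cdot\prob{\G{k}{i,j},w}{c(v_k)=i})$ minus the analogous term with $j$. The key point is that $\G{k}{i,j}$ has one fewer unpinned vertex, so the inductive hypothesis applies to it — in particular $v_k$ is \nice{} in $\G{k}{i,j}$ by \Cref{obv:good-to-nice}, which bounds $|\prob{\G{k}{i,j},\tw}{c(v_k)=\cdot}|$ at the real point $\tw$, and the inductive items~\ref{ind-2}–\ref{ind-4} control how far the complex pseudo-marginal at $w$ strays from its real counterpart at $\tw=0$. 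I would rewrite each pseudo-marginal $\prob{\G{k}{i,j},w}{c(v_k)=i}$ as $\tfrac{1}{1+\sum_{i'\ne i}\ratio{\G{k}{i,j},v_k}{i',i}(w)}$ and substitute $\ratio{}{}{}(w)=\exp(\ln\ratio{}{}{}(w))$, so that the recurrence becomes an explicit function of the quantities $\Re\ln$, $\Im\ln$ controlled by the inductive hypothesis. Then I would invoke \Cref{obv:f-props-int} and \Cref{thm:signed-mean} to pass from the real point to the complex point: the mean-value bound controls $|\Re(f(z_1)-f(z_2))|$ by $\rho_R|\Re(z_1-z_2)|+\rho_I|\Im(z_1-z_2)|$ and $|\Im(f(z_1)-f(z_2))|$ by $\rho_R$ times the imaginary perturbation, and the \nice{}ness bound is exactly what makes $\zeta$ (hence $\rho_R=e^{-\zeta}/(1-e^{-\zeta})$) bounded away from the danger zone. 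Summing the $d$ contributions and dividing by $d$, the per-neighbor errors telescope into the desired $<\eps_R$ and $<\eps_I$ bounds, provided the constants in \cref{eq:def-eps} are chosen with enough slack — the point of the $\Delta^{-2}$, $\Delta^{-4}$, $\Delta^{-3\cdot2}$ hierarchy is precisely that the error amplification per level is a fixed polynomial in $\Delta$.

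**The inductive step for items~\ref{ind-0},~\ref{ind-4}.** Item~\ref{ind-0}, $|\Z{G,u}{i}(w)|>0$ for $i\in\good u$, follows once we know the marginal ratios out of $u$ (to other good colors) have small argument and the bad-color ratios are tiny: write $Z_{\G{k}{i,j}}(w)=\Z{\G{k}{i,j},v_k}{i}(w)/\prob{}{}$ — more directly, $\Z{G,u}{i}(w)=\prod_k Z_{H_k}$-type products from the proof of \Cref{thm:recurrence}, and each factor $1-\gamma\cdot\prob{}{}$ is nonzero because its real part is close to $1-\prob{}{}$ which is bounded below by \nice{}ness and item~\ref{ind-3}'s control on the imaginary part. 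For item~\ref{ind-4} in the inductive step, $i\notin\good u$ means $u$ has a pinned neighbor $v_\ell$ of color $i$; in the product formula for $\ratio{G,u}{i,j}(w)$ the $\ell$-th numerator factor is $1-\gamma\cdot\prob{}{c(v_\ell)=i}=w$ (since $v_\ell$ is pinned to $i$, that pseudo-marginal is $1$), contributing a factor of $|w|\le\nu_w$, while the other factors are $O(1)$ by the same estimates as above; so $|\ratio{G,u}{i,j}(w)|\le \text{(const depending on }\Delta)\cdot|w|\le\eps_w$.

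**Main obstacle.** The hard part will be the bookkeeping in the inductive step for items~\ref{ind-2} and~\ref{ind-3}: one must show that the $\log$ of the recurrence, viewed as a function of the collection of real/imaginary logarithm-data of the child graphs, is Lipschitz with a constant that does \emph{not} blow up with the recursion depth. This requires (i) checking that the domain $D$ in \Cref{obv:f-props-int} genuinely contains all the relevant arguments $x = \ln\bigl(\text{something}\bigr)$ — i.e.\ that $\Re x$ is bounded away from $0$ (equivalently the pseudo-marginal is bounded away from $1$, which is \nice{}ness) and $|\Im x|<\tau$ (which is item~\ref{ind-3} applied inductively, together with the tiny-ratio item~\ref{ind-4} to handle bad colors appearing in the sum defining the marginal); and (ii) a careful split according to whether the imaginary parts of the real-point and complex-point arguments have the same sign, matching the case distinction in the conclusion of \Cref{thm:signed-mean}. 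The delicate choice of the three-level constant hierarchy $\eps_R \gg \eps_I \gg \eps_w$ in \cref{eq:def-eps}, and of $\nu_w$ proportional to $\eps_w/2^\Delta$, is exactly what is needed to close this contraction; getting these inequalities to balance is the technical heart of the argument.
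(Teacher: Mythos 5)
Your overall route is the same as the paper's (induction on the number of unpinned vertices, the recurrence of \cref{thm:recurrence}, \nice{}ness via \cref{obv:good-to-nice}, and passage from the real point to the complex point via \cref{obv:f-props-int} and \cref{thm:signed-mean}), but the heart of the matter --- why the induction closes with the \emph{same} constants $\eps_R,\eps_I$ at every level --- is missing, and the mechanism you state is wrong. You assert that the ``error amplification per level is a fixed polynomial in $\Delta$'' and that the hierarchy $\eps_R\gg\eps_I\gg\eps_w$ of \cref{eq:def-eps} absorbs it. But the recursion depth here is the number of unpinned vertices, which is unbounded in terms of $\Delta$; any per-level amplification factor exceeding $1$ is therefore fatal, and no fixed choice of constants can repair it, because \cref{ind-2,ind-3} must be reproduced verbatim for the parent graph. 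What actually closes the induction is a strict contraction: \nice{}ness of $v_k$ in $\G{k}{i,j}$ forces the derivative bound $\rho_R\le\frac{1}{d_k+\eta}$ with $\eta\ge 0.9$, while the induction hypothesis at $\G{k}{i,j}$ delivers errors proportional to $d_k$ (namely $d_k\eps_R$, $d_k\eps_I$, because the hypothesis is normalized per unpinned neighbor); the product $\frac{d_k}{d_k+\eta}\eps_R$ is strictly below $\eps_R$, and the leftover margin $\eta\eps_R$ is what must dominate the additive losses of order $\eps_I+\Delta\eps_w$ --- that absorption, not per-level amplification, is the only role of the hierarchy in \cref{eq:def-eps}. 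You need to state and exploit this contraction explicitly; ``the per-neighbor errors telescope'' does not do it.

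Even granting the contraction, your plan of applying the mean-value bound separately to the two $f_\gamma$-terms of each factor fails for \cref{ind-2}: when both colors are good for $v_k$, the two mean-value constants $C_i,C_j\in[0,\frac{1}{d_k+\eta}]$ are in general different, and bounding the two deviations independently gives roughly $\frac{2d_k}{d_k+\eta}(\eps_R+\eps_I)$, i.e.\ about $2\eps_R$, which does not contract. The paper needs the additional device of taking the argmax/argmin over good colors of $C_{i'}\Re\xi_{i'}$ and splitting on the signs of the \emph{real} parts $\Re\xi_s,\Re\xi_t$ (not the imaginary parts, which is the split you anticipate): in the mixed-sign case one rewrites $\Re\xi_s-\Re\xi_t$ as $\Re\ln\ratio{G_k,v_k}{s,t}(w)-\ln\ratio{G_k,v_k}{s,t}(0)$ and invokes \cref{ind-1,ind-2} at the child to get the clean bound $d_k\eps_R$, while in the same-sign case the two terms partially cancel and one is left with $\frac{1}{d_k+\eta}\max\{\abs{\Re\xi_s},\abs{\Re\xi_t}\}$. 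Without some such argument the inductive step for \cref{ind-2,ind-3} does not go through. Two smaller points: the factor $2^\Delta$ in $\nu_w$ does not come from the base case but from the inductive step for \cref{ind-4}, via the bound $\ratio{H,u}{i,j}(0)\le\prod_k\frac{d_k+2}{d_k+1}\le 2^\Delta$ after deleting the neighbors of $u$ pinned to $i$; and your direct use of the recurrence for \cref{ind-4} produces child graphs $\G{k}{i,j}$ that are conflicted (a copy of $u$ pinned to $i$ adjacent to a neighbor already pinned to $i$), to which the unconflicted-graph induction hypothesis does not apply --- the paper avoids this by first passing to the graph $H$ with those pinned neighbors removed, which has the same number of unpinned vertices and for which $i,j$ are both good.
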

We will refer to items\nobreakspace  \ref {ind-0} to\nobreakspace  \ref {ind-4}  as ``items of the
induction hypothesis''.  The rest of this section is devoted to the proof of
this lemma via induction on the number of unpinned vertices in $G$.

We begin by verifying that the induction hypothesis holds in the base case when
$u$ is the only unpinned vertex in an unconflicted graph $G$.  In this case,
items\nobreakspace \ref {ind-2} and\nobreakspace  \ref {ind-3} are vacuously true since $u$ has no unpinned neighbors.
Since all neighbors of $u$ in $G$ are pinned, the fact that all pinned vertices
have degree at most one implies that $G$ can be decomposed into two disjoint
components $G_1$ and $G_2$, where $G_1$ consists of $u$ and its pinned
neighbors, while $G_2$ consists of a disjoint union of unconflicted edges (since
$G$ is unconflicted).  Now, since $G_1$ and $G_2$ are disjoint components, we
have $\Z{G, u}{i}(w) = Z_{G_2}(w) = 1$ for all $i \in \good{G, u}$ and all
$w \in \C$.  This proves items\nobreakspace \ref {ind-0} and\nobreakspace  \ref {ind-1}.  Similarly, when
$i \not\in \good{G, u}$, we have $\Z{G, u}{i}(w) = w^{n_i}$, where $n_i \geq 1$
is the number of neighbors of $u$ pinned to color $i$.  This gives
\begin{displaymath}
  {\abs{\ratio{G,u}{i,j}(w)}} \leq \abs{w}^{n_i} \leq \eps_w,
\end{displaymath}
since $\abs{w} \leq \eps_w \leq 1$, and proves item\nobreakspace \ref {ind-4}.

We now derive some consequences of the above induction hypothesis that will be
helpful in carrying out the induction.  Throughout, we assume that $G$ is an
unconflicted graph satisfying an admissible list condition
  $\mathcal{L}$, and $u$ is a marked unpinned vertex in $G$.
\begin{consequence}
  \label{lem:sum-lowerbounds}
    $\abs{Z_G(w)} \ge 0.9\min_{i \in \good{u}} \abs{\Z{G,v}{i}(w)} >
    0.$
\end{consequence}
\begin{proof}
  Note that $Z_G(w) = \sum_{i \in L(u)} \Z{G,u}{i} (w)$.  Recall also that since
  $u$ is an unpinned vertex in $G$ and $G$ satisfies an admissible list
  condition $\mathcal{L}$, we have
  \[\abs{L(u)} \geq \deg_{G}(u) + 1.\]
  Now, from item\nobreakspace \ref {ind-3}, we
  see that the angle between the complex numbers $\Z{G,u}{i} (w)$ and
  $\Z{G,u}{j} (w)$, when $i, j \in \good{u}$, is at most $d\eps_I$.  Applying
  \MakeUppercase Lemma\nobreakspace \ref {lem:geometric-proj} to the terms corresponding to the good colors and
  item\nobreakspace \ref {ind-4} to the terms corresponding to the bad colors, we then have
  \begin{align*}
    \maybealign\biggl|{\sum_{i \in L(u)} \Z{G,u}{i} (w)}\biggr|\eqbreak
    &\ge \inp{ \abs{\good{u}} \cos \frac{d\eps_I}{2} -
      \abs{L(u) \setminus \good{u}} \eps_w}
      \min_{i \in \good{u}}
      \abs{\Z{G,u}{i}(w)}\\
&\geq \inp{\cos \frac{d\eps_I}{2} -
      \deg_G(u) \cdot \eps_w}
      \min_{i \in \good{u}}
      \abs{\Z{G,u}{i}(w)},
  \end{align*}
  where we use the fact that $\abs{L(u) \setminus \good{u}} \leq \deg_G(u)$ and $|L(u)| \geq \deg_G(u) + 1$ in the last
  inequality. Since $d\eps_I \leq 0.01$ and $\eps_w \leq 0.01/\Delta$, we then
  have
  $\abs{\sum_{i \in L(u)} \Z{G,u}{i}(w)} \ge 0.9\min_{i \in \good{u}}
  \abs{\Z{G,v}{i}(w)}$, which in turn is positive from
  item\nobreakspace \ref {ind-0}.  \end{proof}

\begin{consequence}
The pseudo-probabilities approximate the real probabilities in the following
  sense:
  \begin{enumerate}
  \item for any $i \not \in \good{u}$, $\abs{\prob{G,w}{c(u) = i}} \le 1.2\eps_w$.
  \item for any $j\in \good{u}$,
    \begin{align*}
      \abs{\Im \ln
      \frac{\prob{G,w}{c(u) = j}}{\prob{G}{c(u) = j}}} 
      &= \abs{\Im \ln \prob{G,w}{c(u) = j}}\eqbreak
      \maybealign\leq d\eps_I + 2\Delta\eps_w,
    \end{align*}
    and,
    \begin{align*}
      \abs{\Re \ln
      \frac{\prob{G,w}{c(u) = j}}{\prob{G}{c(u) = j}}}
      &\leq d \eps_R + d\eps_I + 2\Delta\eps_w,
    \end{align*}
  \end{enumerate}
  where $d$ is the number of unpinned neighbors of $u$ in $G$.
  \label{lem:approx-prob}
\end{consequence}
\begin{proof}
	For part (1), by~Consequence\nobreakspace \ref {lem:sum-lowerbounds} we have 
	\begin{align*}
      \abs{\prob{G,w}{c(u) = i}} &= \frac{\abs{\Z{G,u}{i}(w)} }{\abs{Z_G(w)}}\\
      &\le \frac{\abs{\Z{G,u}{i}(w)} }{0.9\min_{j \in \good{u}} \abs{\Z{G,u}{j}(w)}} 
         \le 1.2\eps_w,
	\end{align*}
		where the last inequality follows from induction hypothesis~item\nobreakspace \ref {ind-4}.

		For part (2), by items\nobreakspace  \ref {ind-1} to\nobreakspace  \ref {ind-3}  of the induction hypothesis,
        there exist complex numbers $\xi_i$ (for all $i \in \good{u}$)
        satisfying $\abs{ \Re \xi_i} \le d \eps_R$ and
        $\abs{\Im \xi_i} \le d\eps_I$ such that
		\begin{align*}
          \maybealign\frac{1}{\prob{G,w}{c(u) = j}} \eqbreak
	  &= \sum_{i \in L(u)} \frac{\Z{G,u}{i} (w)}{\Z{G,u}{j} (w)} \\
          &=  \underbrace{\sum_{i \in \good{u}} \frac{\Z{G,u}{i}
             (0)}{\Z{G,u}{j} (0)} e^{\xi_i}}_{\defeq A} + \;\;\;\;
             \underbrace{\sum_{i \in L(u) \setminus \good{u}} \frac{\Z{G,u}{i} (w)}{\Z{G,u}{j} (w)}}_{\defeq B}.
		\end{align*}
Next we show that $A \approx \frac{1}{\prob{G}{c(u) = j}}$ and $B$ is negligible.
	From item\nobreakspace \ref {ind-4} of the
        induction hypothesis we have
        \begin{equation}
          \label{eq:2}
          \prob{G}{c(u) = j}\cdot \abs{B} \leq \Delta\eps_w.
        \end{equation}
        Now, note that
        $\sum_{i \in\good{u}} \frac{\Z{G,u}{i} (0)}{\Z{G,u}{j} (0)} =
        \frac{1}{\prob{G}{c(u) = j}}$.  Further, when $\eps_I \leq 0.1/\Delta$,
        we also have\footnote{Here, we also use the elementary facts that if $z$
          is a complex number satisfying $\Re z = r$ and
          $\abs{\Im z} = \theta \leq 0.1$ then
          $\abs{\arg e^z} = \abs{\Im z} = \theta$, and
          $e^r \geq \Re e^z = e^r\cos \theta = \exp(r+ \ln \cos \theta) \geq
	\exp(r - \theta^2) \ge e^r - e^r\theta^2$. Hence if $r<0$, we have $\Re e^z \ge e^r - \theta^2$.
	}
        \begin{equation}
		\Re{e^{\xi_i}} \in \inp{e^{-d\eps_R} - d^2\eps_I^2,\;
          e^{d\eps_R}}\text{, and } |\arg {e^{\xi_i}}| \leq
          d\eps_I.\label{eq:ind-origin-1}
        \end{equation}
        The above will therefore be true also for any convex combination of the
        $e^{\xi_i}$.  Noting that ${\prob{G}{c(u) = j}} \cdot A$ is just such a
        convex combination (as the coefficients of the $e^{\xi_i}$ are
        non-negative reals summing to $1$), we have
        \begin{align}
          \label{eq:3}
          {\prob{G}{c(u) = j}} \cdot \Re A
          \in (e^{-d\eps_R} - d^2\eps_I^2,\; e^{d\eps_R}), \\ \label{eq:4} |\arg \inp{{\prob{G}{c(u) = j}} \cdot  A}|
          \leq d\eps_I.
        \end{align}
        Together, eqs.\nobreakspace \textup {(\ref {eq:2})},  \textup {(\ref {eq:3})} and\nobreakspace  \textup {(\ref {eq:4})} imply that if
        $C \defeq \frac{\prob{G}{c(u) = j}}{\prob{G,w}{c(u) = j}}$ then (using
        the values of $\eps_R, \eps_I$, and $\eps_w$)\footnote{Here, for the
          second inclusion, we use the following elementary computation.  Let
          $z, s$ be complex numbers such that $\Re z = r \in [0.9, 1.1]$,
          $\abs{\arg z} = \theta \leq 0.1$ and $\abs{s} \leq 0.1$.  Then, we
	  have $\Re(z + s) \geq r - \abs{s}$ and $\abs{\Im (z + s)} \leq r\theta + \abs{s}$.
          Thus,
          $\abs{\arg (z+s)} \leq \frac{\abs{\Im(z + s)}}{\abs{\Re(z + s)}} \leq
	  \frac{r\theta + \abs{s}}{r - \abs{s}} = \theta + \abs{s}\cdot\frac{1+\theta}{r -\abs{s}} \leq
	  \theta + 2\abs{s}$.}
        \begin{align*}
          \Re C &\in \inp{e^{-d\eps_R} - d^2\eps_I^2 - \Delta\eps_w,\; e^{d\eps_R} +
                  \Delta\eps_w},\text{ and }\\
	  \arg C &\in \inp{-d\eps_I - 2\Delta\eps_w, d\eps_I + 2\Delta\eps_w}.
        \end{align*}
        Thus, since $\eps_I, \eps_R$ are small enough and
        $\eps_w \leq 0.01\min\inbr{\eps_I, \eps_R}$, we have
        \begin{align*}
          |\Re \ln C| &\leq d\eps_R + d\eps_I + 2\Delta\eps_w,
                         \text{ and }\\
          |\Im \ln C| &\leq d\eps_I + 2\Delta\eps_w.
        \end{align*}
	Here we use the elementary fact that for $z \in \C$,
          $\Re \ln z = \ln \abs{z}$ and $\Im \ln z = \arg z$.  
	  Further, for $z$ satisfying $\Re z = r\in [0.9,1.1]$ and $\abs{\arg z} = \theta \le 0.1$, 
	  we also have
          $\ln r \leq \Re \ln z \leq \ln r + \ln \sec \theta \leq \ln r +
          \theta^2$.
\end{proof}

In the next consequence, we show that the error contracts during the induction.
We first set up some notation.  For a graph $G$, a
vertex $u$, and a color $i \in \good{u}$, we let
$\as{G,u}{i}(w) = \ln \prob{G,w}{c(u)=i}$.  We also recall that $\gamma\defeq 1 - w$, 
and the definition of the function $f_\gamma(x) \defeq - \ln ( 1 - \gamma e^x)$ from eq.\nobreakspace \textup {(\ref {eq:28})}.
\begin{consequence}
  \label{lem:real-part-ind} There exists a positive constant
  $\eta \in [0.9, 1)$ so that the following is true. Let $d$ be the number of
  unpinned neighbors of $u$.  Assume further that $u$ is \nice{} in $G$. Then,
  for any colors $i, j \in \good{u}$, there exists a real number
  $C = C_{G,u,i} \in [0, \frac{1}{d + \eta}]$ such that
  \begin{align}
    \Big| \Re f_\gamma( \maybealign  \as{G,u}{i}(w)) -  f_1(\as{G,u}{i}(0))\eqbreak
	  \maybealign - C\cdot \Re (\as{G,u}{i}(w) - \as{G,u}{i}(0))\Big|
    &\le \eps_I + \eps_w ;\label{eq:6}\\ 
    \Big|\Im  f_\gamma( \maybealign \as{G,u}{i}(w)) - \Im f_\gamma(\as{G,u}{j}(w))\Big|\eqbreak
    &\maybealign\le  \frac{1}{d + \eta} \cdot (d\eps_I + 4 \Delta \eps_w) +
      2\eps_w;\label{eq:7}\\
    \Big|\Im  f_\gamma( \maybealign \as{G,u}{i}(w))\Big|  
    &\le  \frac{1}{d + \eta} \cdot (d\eps_I + 4 \Delta \eps_w) +
      \eps_w.\label{eq:18}
  \end{align}
\end{consequence}
\begin{proof}
  Since $u$ is \nice{} in $G$, the bound
  $\prob{G, 0}{c(u) = k} \le \frac{1}{d+2}$ (for any $k \in \good{G,
    u}$) applies.  Combining them with Consequence\nobreakspace \ref {lem:approx-prob} we see that
  $\as{G,u}{i}(w), \as{G,u}{i}(0), \as{G,u}{j}(w), \as{G,u}{j}(0)$ lie in a
  domain $D$ as described in \MakeUppercase Lemma\nobreakspace \ref {obv:f-props-int} (with the parameter $\kappa$
  therein set to $1$), with the parameters $\zeta$ and $\tau$ in that
  observation chosen as
  \begin{equation}
    \begin{aligned}
      \zeta &= \ln(d+2) - d\eps_R - d\eps_I - 2\Delta\eps_w\,\text{, and }\\
      \tau &= d\eps_I + 2\Delta\eps_w.
    \end{aligned}
  \end{equation}
  Here, for the bound on $\zeta$, we use the fact that for $j \in \good{G, u}$,
  $\prob{G}{c(u) = j} \leq \frac{1}{d+2}$, which is due to $u$ being \nice{} in
  $G$. 

  The bounds on $\eps_w, \eps_I$ and $\eps_R$ now imply
  $e^\zeta \geq (d+2)\inp{1 - \frac{0.02}{\Delta}} \geq d + 1.94$, and also that
  $\tau \leq 0.02/\Delta$.  Thus, the conditions required on $\zeta$ and $\tau$
  in Lemma\nobreakspace \ref {obv:f-props-int} (i.e. that $\tau < 1/2$ and $\tau^2 + e^{-\zeta} < 1$)
  are satisfied. Further, $\rho_R$ and $\rho_I$ as set in the observation
  satisfy $\rho_R \le \frac{1}{d + \eta}$, where $\eta$ can be taken to be
  $0.94$, and $\rho_I < 3\eps_I$.  
  
  Using~Lemma\nobreakspace \ref {thm:signed-mean} followed by the value of $\eps_w$, and noting that $\as{G,u}{i}(0)$ is a real number, we then have
  \begin{align}
	\Big| \Re f_1( \maybealign \as{G,u}{i}(w)) - f_1(\as{G,u}{i}(0)) \eqbreak
	\maybealign  - C \cdot \Re \inp{ \as{G,u}{i}(w) - \as{G,u}{i}(0)}\Big| \eqbreak
	&\le \rho_I \cdot  \maybealign\abs{ \Im \inp{ \as{G,u}{i}(w) - \as{G,u}{i}(0)}} \nonumber \\
	&\le 3\eps_I\maybealign(d\eps_I + 2\Delta\eps_w) \leq 4d\eps_I^2 \leq \eps_I,\label{eq:19}
  \end{align}
  for an appropriate non-negative $C \leq 1/(d+\eta)$.  This is almost~eq.\nobreakspace \textup {(\ref {eq:6})};
  the difference will be handled later.

  Similarly, applying Lemma\nobreakspace \ref {thm:signed-mean} to the imaginary part we have
  \begin{multline}
    \abs{ \Im f_1(\as{G,u}{i}(w)) - \Im f_1(\as{G,u}{j}(w))\Big)}\\
    \le \rho_R\cdot \max\set{\abs{\Im \inp{\as{G,u}{i}(w) - \as{G,u}{j}(w)}} , \right. \eqbreakno \left.
      \abs{\Im \as{G,u}{i}(w)}, \abs{\Im \as {G,u}{j}(w)}},\label{eq:21}
  \end{multline}
  where, as noted above, $\rho_R \leq \frac{1}{d + \eta}$.  Now, note that the
  first term in the above maximum is less than $d\eps_I$ by item\nobreakspace \ref {ind-3} of the
  induction hypothesis, while the other two terms are at most
  $d\eps_I + 2\Delta\eps_w$ from item 2 of \MakeUppercase Consequence\nobreakspace \ref {lem:approx-prob}.
  This is almost the bound in~eq.\nobreakspace \textup {(\ref {eq:7})}; again, the difference will be handled later.

  To prove the bound in eq.\nobreakspace \textup {(\ref {eq:18})}, we first apply the imaginary part of
  \MakeUppercase Lemma\nobreakspace \ref {thm:signed-mean} along with the fact that $\Im \as{G, u}{i}(0) = 0$ to
  get
  \begin{align}
    \abs{\Im f_1(\as{G,u}{i}(w))} \maybealign= \abs{ \Im f_1(\as{G,u}{i}(w)) - f_1(\as{G,u}{i}(0))} \eqbreak
      \maybealign\le \rho_R\cdot \abs{\Im \inp{\as{G,u}{i}(w)}} \eqbreak
      \maybealign\leq \frac{1}{d+\eta}(d\eps_I + \Delta\eps_w).\label{eq:22}
  \end{align}

  Finally, we use item 2 of \MakeUppercase Lemma\nobreakspace \ref {obv:f-props-int} (with the parameter $\kappa'$
  therein set to $\gamma$) to conclude the proofs of~eqs.\nobreakspace  \textup {(\ref {eq:6})} to\nobreakspace  \textup {(\ref {eq:18})} . To
  this end, we note that $\gamma$ satisfies $\abs{\gamma - 1} \leq \eps_w$, so
  that the condition $(1 + \eps_w) < e^\zeta$ required for item 2 to apply is
  satisfied.  Thus we see that for any $z \in D$,
  \[
	\abs{ f_\gamma(z) - f_1(z) } \le \eps_w,
  \]
  so that the quantities
  $|\Re f_\gamma(\as{G,u}{i}(w)) - \Re f_1(\as{G,u}{i}(w))|$,
  $|\Im f_\gamma(\as{G,u}{i}(w)) - \Im f_1(\as{G,u}{i}(w))|$,
  $|\Im f_\gamma(\as{G,u}{j}(w)) - \Im f_1(\as{G,u}{j}(w))|$, and
  $|\Im f_\gamma(\as{G,u}{j}(w)) - \Im f_1(\as{G,u}{j}(w))|$ are all at most
  $\eps_w$. The desired bounds of~eqs.\nobreakspace  \textup {(\ref {eq:6})} to\nobreakspace  \textup {(\ref {eq:18})}  now follow from the triangle inequality and the
  bounds in eqs.\nobreakspace  \textup {(\ref {eq:19})} to\nobreakspace  \textup {(\ref {eq:22})} .
\end{proof}

We set up some further notation for the next consequence.  For a color
$i \in L(u) \setminus \good{u}$ we let $\bs{G,u}{i}(w) = \prob{G,w}{c(u)=i}$.
We then consider the function $g_\gamma(x) \defeq - \ln (1-\gamma x)$.
\begin{consequence}
  For every color $i \not\in \good{u}$,
  $\abs{g_\gamma(\bs{G,u}{i}(w))} \le 2\eps_w.$
  \label{lem:w-part-ind}
\end{consequence}
\begin{proof}
  Item 1 of \MakeUppercase Consequence\nobreakspace \ref {lem:approx-prob} implies that
  $\abs{\bs{G,u}{i}(w)} \le 1.2\eps_w$.  Thus, recalling that
  $\abs{\gamma - 1} \leq \eps_w$, we get that for all $\eps_w < 0.01$,
  $\abs{g_\gamma(\bs{G,u}{i}(w))} = \abs{\ln (1 - \gamma \bs{G,u}{i}(w))} \le
  2\eps_w$.
\end{proof}

\subsection*{Inductive proof of Lemma\nobreakspace \ref {lem:origin-induction}} We are now ready to
see the induction step in the proof of Lemma\nobreakspace \ref {lem:origin-induction}; recall that
the base case (when $u$ is the only unpinned vertex in~$G$)
was already established immediately following the statement of the lemma.  Let
$G$ be any unconflicted graph which satisfies the admissible list condition $\mathcal{L}$ and has at
least two unpinned vertices.
We first prove induction~item\nobreakspace \ref {ind-0} for any marked unpinned
  vertex $u$ in $G$.  Consider the graph $G'$ obtained from $G$
by pinning vertex $u$ to color $i$.  Note that by the definition of the pinning
operation, $\Z{G, u}{i}(w) = Z_{G'}(w)$.  When $i \in \good{G, u}$, the graph
$G'$ is also unconflicted and, further, since $\mathcal{L}$ is
  hereditary (because it is admissible), satisfies the admissible list condition
  $\mathcal{L}$.  Also, $G'$ has one fewer unpinned vertex than $G$.  Thus,
from Consequence\nobreakspace \ref {lem:sum-lowerbounds} of the induction
hypothesis applied to $G'$, we have that
$\abs{\Z{G, u}{i}(w)} = \abs{Z_{G'}(w)} > 0$.

We now consider~item\nobreakspace \ref {ind-1}.  When all neighbors of $u$ in
$G$ are pinned, the fact that all pinned vertices have degree at most one
implies that $G$ can be decomposed into two disjoint components $G_1$ and $G_2$,
where $G_1$ consists of $u$ and its pinned neighbors, while $G_2$ is also
unconflicted (when $G$ is unconflicted) and has one fewer unpinned vertex than
$G$.  Note also that $G_2$, being a connected component of $G$, also satisfies
the admissible list condition $\mathcal{L}$ (since $\mathcal{L}$ is hereditary).
Thus, from Consequence\nobreakspace \ref {lem:sum-lowerbounds} of the induction
hypothesis applied to $G_2$, we get that $Z_{G_2}(w)$ and $Z_{G_2}(0)$ are both
non-zero.  Now, since $G_1$ and $G_2$ are disjoint components, we have
$\Z{G, u}{k}(x) = Z_{G_2}(x)$ for all $k \in \good{G, u}$ and all $x \in \C$.
It therefore follows that when $i, j \in \good{G, u}$,
$\ratio{G, u}{i,j}(w) = \ratio{G, u}{i,j}(0) = 1$.

We now consider~items\nobreakspace \ref {ind-2} and\nobreakspace  \ref {ind-3}.  Recall that by~Lemma\nobreakspace \ref {thm:recurrence}, we
have
\begin{equation}
  \ratio{G,u}{i,j}(w) = \prod_{k=1}^{\deg_G(u)} \frac{1 - \gamma
      \prob{\G{k}{i,j},w}{ c(v_k) = i}  }{ 1 - \gamma \prob{\G{k}{i,j},w}{
        c(v_k) = j}  }.
\label{eq:26}
\end{equation}
For simplicity we write $G_k \defeq \G{k}{i,j}$.  Note that when
$i, j \in \good{G, u}$, and $G$ is unconflicted, so are the $G_k$.  Note also
that when $i, j \in \good{G, u}$, we can restrict the product above to the $d$
unpinned neighbors of $u$, since for such $i, j$, the contribution of the factor
corresponding to a pinned neighbor is $1$, irrespective of the value of $w$.
Without loss of generality, we relabel these unpinned neighbors as
$v_1, v_2, \dots, v_d$.

Since $\mathcal{L}$ is hereditary, $G_k$ also satisfies $\mathcal{L}$, and the
vertex $v_k$ is marked in $G_k$ (since $u$ was marked in
$G$).  Further, each $G_k$ has exactly one fewer unpinned vertex than $G$, so
that the induction hypothesis applies to each $G_k$ at the vertex $v_k$.

Now, as before, for $s \in \good{G_k,v_k}$ we define
$\as{G_k,v_k}{s}(w) \defeq \ln \prob{G_k,w}{c(v_k)=s}$; while for
$t \in L(v_k) \setminus \good{G_k, v_k}$ we let
$ \bs{G_k,v_k}{t}(w) \defeq \prob{G_k,w}{c(v_k)=t}$.  For a graph $G$, a vertex
$u$ and a color $s$, we let $\Bad{G,u}(s)$ be the set of those neighbors of $u$
for which $s$ is a bad color in $G\setminus \set{u}$.  For simplicity we will
also write $\Bi[s] \defeq \Bad{G,u}(s)$ when it is clear from the context.  As
before, we have $\gamma = 1 - w$,
$f_\gamma(x) = - \ln (1 - \gamma e^x), g_\gamma(x) = -\ln(1-\gamma x)$.  From
the above recurrence, we then have:
\begin{align}
 \maybealign -\ln \ratio{G,u}{i,j}(w)\eqbreak 
 &=\sum_{v_k \in \overline{\Bi} \cap \overline{\Bj}}  \Bigl(f_\gamma\inp{\as{G_k,v_k}{i}(w)} - f_\gamma\inp{\as{G_k,v_k}{j}(w)}\Bigr) \nonumber \\
  &+ \sum_{v_k \in \overline{\Bi} \cap \Bj} f_\gamma\inp{\as{G_k,v_k}{i}(w)} \eqbreak
	\maybealign- \sum_{v_k \in \Bi \cap \overline{\Bj}} f_\gamma\inp{\as{G_k,v_k}{j}(w)}  \nonumber\\
  &-  \sum_{v_k \in \overline{\Bi} \cap \Bj } g_\gamma\inp{\bs{G_k,v_k}{j}(w)} \eqbreak
	\maybealign+  \sum_{v_k \in \Bi \cap \overline{\Bj}} g_\gamma\inp{\bs{G_k,v_k}{i}(w)}\nonumber\\
  &+ \sum_{v_k \in {\Bi} \cap {\Bj}} \Bigl(g_\gamma\inp{\bs{G_k,v_k}{i}(w)} -
    g_\gamma\inp{\bs{G_k,v_k}{j}(w)}\Bigr). \label{eq:w-rec}
\end{align}
Note that the same recurrence also applies when $w$ is replaced by $0$ (and hence $\gamma$ by $1$), except in that case the last three sums are $0$ (as, when $i$
is bad for $v_k$ in $G_k$, we have
$\bs{G_k,v_k}{i}(0) \defeq \Pr{G_k}{c(v_k)=i} = 0$):
\begin{align}
  \maybealign-\ln \ratio{G,u}{i,j}(0) \eqbreak
  =&\sum_{v_k \in \overline{\Bi} \cap \overline{\Bj}} \Bigl(f_1\inp{\as{G_k,v_k}{i}(0)} - f_1\inp{\as{G_k,v_k}{j}(0)}\Bigr)\nonumber \\
  &+ \sum_{v_k \in \overline{\Bi} \cap \Bj} f_1\inp{\as{G_k,v_k}{i}(0)} \eqbreak
  \maybealign- \sum_{v_k \in \Bi \cap \overline{\Bj}} f_1\inp{\as{G_k,v_k}{j}(0)}.
\end{align}
Further, by \MakeUppercase Consequence\nobreakspace \ref {lem:w-part-ind} of the induction hypothesis applied to the
graph $G_k$ at a vertex $v_k \in \Bi$ (respectively, $v_k \in \Bj$) we see that
$\abs{g_\gamma\inp{\bs{G_k,v_k}{i}(w)}} \leq 2\eps_w$ (respectively,
$\abs{g_\gamma\inp{\bs{G_k,v_k}{j}(w)}} \leq 2\eps_w$).  Thus, applying the triangle
inequality to the real part of the difference of the two recurrences, we get
\begin{align}
  \frac{1}{d} \abs{ \Re \ln \ratio{G,u}{i,j}(0)-\ln \ratio{G,u}{i,j}(w) }
  &\le 2\Delta \eps_w\nonumber\\
  &+\max\left\{ \max_{v_k \in \overline{\Bi} \cap \overline{\Bj}}
    \left\{\left\vert\inp{\Re f_\gamma\inp{\as{G_k,v_k}{i}(w)} - f_1\inp{\as{G_k,v_k}{i}(0)}}\right.\right.\right.\nonumber\\
  &\quad\qquad\qquad\qquad\qquad\left.\left.\left.
    - \inp{\Re f_\gamma\inp{\as{G_k,v_k}{j}(w)} - f_1\inp{\as{G_k,v_k}{j}(0)}}\right\vert\right\},   \right.\nonumber \\
  &\qquad\qquad \left.\max_{v_k \in \overline{\Bi} \cap \Bj} \set{\abs{\Re
    f_\gamma\inp{\as{G_k,v_k}{i}(w)} - f_1\inp{\as{G_k,v_k}{i}(0)} }}\right.,\nonumber\\
  &\qquad\qquad \left.\max_{v_k \in \overline{\Bj} \cap \Bi} \set{ \abs{\Re
    f_\gamma\inp{\as{G_k,v_k}{j}(w)}
    - f_1\inp{\as{G_k,v_k}{j}(0)} }}  \right\}. \label{eq:5}
\end{align}

In what follows, we let $v_k$ be the vertex that maximizes the above expression,
and $d_k$ be the number of unpinned neighbors of $v_k$ in $G_k$.  Before
proceeding with the analysis, we recall the observation above that the
graphs $G_k$ are unconflicted and satisfy the admissible list condition
$\mathcal{L}$.  Further, we note that $v_k$ is (i) marked in
  $G_k$ (this follows from the fact that $\mathcal{L}$ is hereditary); and
(ii) \nice{} in $G_k$ (this last fact follows from Lemma\nobreakspace \ref
{obv:good-to-nice} and the fact that $G$ satisfies the admissible list
  condition $\mathcal{L}$).  Thus, the preconditions of \MakeUppercase
Consequence\nobreakspace \ref {lem:real-part-ind} apply to the vertex $v_k$ in
graph $G_k$.  We now proceed with the analysis.

We first consider $v_k \in \overline{\Bi} \cap \Bj$.  Note that this implies
that $i \in \good{G_k, v_k}$.  Thus, the conditions of \MakeUppercase Consequence\nobreakspace \ref {lem:real-part-ind}
of the induction hypothesis instantiated on $G_k$ apply to $v_k$ with color $i$,
and we thus have from eq.\nobreakspace \textup {(\ref {eq:6})} that
\begin{align*}
  \maybealign\abs{\Re f_\gamma\inp{\as{G_k,v_k}{i}(w)} - f_1\inp{\as{G_k,v_k}{i}(0)} } \eqbreak
  \maybealign\leq \frac{1}{d_k+\eta}\abs{\Re\as{G_k,v_k}{i}(w) - \as{G_k,v_k}{i}(0)} + \eps_I + \eps_w,
\end{align*}
where $d_k$ is the number of unpinned neighbors of $v_k$ and $\eta \in [0.9, 1)$
is as in the statement of \MakeUppercase Consequence\nobreakspace \ref
{lem:real-part-ind}.  Applying item 2 of \MakeUppercase Consequence\nobreakspace
\ref {lem:approx-prob} (which, again, is applicable because
$i \in \good{G_k, v_k}$), we then have
$\abs{\Re\as{G_k,v_k}{i}(w) - \as{G_k,v_k}{i}(0)} \leq d_k(\eps_R + \eps_I) +
2\Delta\eps_w$, so that (recalling $\Delta \geq 3$ and $\eta \geq 0.9$,
notably for the case $d_k = 0$)
\begin{align}
  \maybealign\abs{\Re f_\gamma\inp{\as{G_k,v_k}{i}(w)} - f_1\inp{\as{G_k,v_k}{i}(0)} }\eqbreak
  \maybealign\leq \frac{d_k}{d_k + \eta}\eps_R + 2\eps_I + 3\Delta\eps_w.\label{eq:8}
\end{align}
By interchanging the roles of $i$ and $j$ in the above argument, we see that, for
$v_k \in \overline{\Bj} \cap \Bi$
\begin{align}
  \maybealign\abs{\Re f_\gamma\inp{\as{G_k,v_k}{j}(w)} - f_1\inp{\as{G_k,v_k}{j}(0)} }\eqbreak
  \maybealign\leq \frac{d_k}{d_k + \eta}\eps_R + 2\eps_I + 3\Delta\eps_w.\label{eq:9}
\end{align}
We now consider $v_k \in \overline\Bi \cap \overline\Bj$.  Note that both $i$ and $j$ are good for $v_k$ in $G_k$, so that
\begin{multline*}
	\abs{\inp{\Re f_\gamma\inp{\as{G_k,v_k}{i}(w)} - f_1\inp{\as{G_k,v_k}{i}(0)}} \right. \eqbreak \left.
    -\inp{\Re f_\gamma\inp{\as{G_k,v_k}{j}(w)} - f_1\inp{\as{G_k,v_k}{j}(0)}}}\\
    \leq \max_{i', j' \in \good{G_k, v_k}} \abs{\inp{\Re f_\gamma\inp{\as{G_k,v_k}{i'}(w)} - f_1\inp{\as{G_k,v_k}{i'}(0)}}  \right. \eqbreak \left.
    -\inp{\Re f_\gamma\inp{\as{G_k,v_k}{j'}(w)} - f_1\inp{\as{G_k,v_k}{j'}(0)}}}.
\end{multline*}
Now, for any color $s \in \good{G_k, v_k}$, \MakeUppercase Consequence\nobreakspace \ref {lem:real-part-ind} of the
induction hypothesis instantiated on $G_k$ and applied to $v_k$ and $s$ shows
that there exists a $C_s = C_{s, v_k, G_k} \in [0, 1/(d_k + \eta)]$ such that
\begin{equation}
	\abs{\Re f_\gamma\inp{\as{G_k,v_k}{s}(w)} - f_1\inp{\as{G_k,v_k}{s}(0)}
	  - C_s\inp{\Re \as{G_k,v_k}{s}(w) - \as{G_k,v_k}{s}(0)}} \leq \eps_I + \eps_w.
\end{equation}
Substituting this in the previous display shows that
\begin{align}
	&\abs{\inp{\Re f_\gamma\inp{\as{G_k,v_k}{i}(w)}  -
   f_1\inp{\as{G_k,v_k}{i}(0)}}
	-\inp{\Re f_\gamma\inp{\as{G_k,v_k}{j}(w)} - f_1\inp{\as{G_k,v_k}{j}(0)}}}\nonumber\\
	&\quad\leq \max_{i', j' \in \good{G_k, v_k}} \abs{ C_{i'}(\Re \as{G_k,v_k}{i'}(w) - \as{G_k,v_k}{i'}(0))
		  - C_{j'}(\Re \as{G_k,v_k}{j'}(w) - \as{G_k,v_k}{j'}(0)) } + 2\eps_I + 2\eps_w\nonumber\\
  &\quad= 2\eps_I + 2\eps_w + \max_{i', j' \in \good{G_k, v_k}} \abs{C_{i'}\Re \xi_{i'}
    - C_{j'}\Re \xi_{j'}}\nonumber\\
  &\quad= 2\eps_I + 2\eps_w + C_{s}\Re \xi_{s}
    - C_{t}\Re \xi_{t}\label{eq:14},
\end{align}
where $\xi_{l} \defeq \as{G_k,v_k}{l}(w) - \as{G_k,v_k}{l}(0)$ for
$l \in \good{G_k, v_k}$, and $s$ and $t$ are given by
\[
  s \defeq \argmax_{i' \in \good{G_k, v_k}} C_{i'}\Re \xi_{i'}\quad\text{ and }\quad
  t \defeq \argmin_{i' \in \good{G_k, v_k}} C_{i'}\Re \xi_{i'}.
\]
We now have the following two cases:

\medskip
\noindent \textup{\textbf{Case 1: $(\Re \xi_{s})\cdot(\Re \xi_{t}) \leq 0$.}} Recall
that $C_s, C_t$ are non-negative and lie in $[0, 1/(d_k + \eta)]$.  Thus, in
this case, we must have $\Re \xi_{s} \geq 0$ and $\Re \xi_{t} \leq 0$, so that
  \begin{align}
    C_{s}\Re \xi_{s} - C_{t}\Re \xi_{t}  \maybealign =  C_{s}\Re \xi_{s} + C_{t}\abs{\Re \xi_{t}} \eqbreak
    \maybealign \leq  \frac{\Re \xi_{s} + \abs{\Re \xi_{t}}}{d_k+\eta}
    = \frac{\abs{\Re \xi_{s}  - \Re \xi_{t}}}{d_k+\eta}.\label{eq:10}
  \end{align}
  Now, note that
  \begin{align*}
    \maybealign\Re \xi_{s}  - \Re \xi_{t}\eqbreak
    &= \Re \ln \frac{\prob{G_k,w}{c(v_k)=s}}{
      \prob{G_k}{c(v_k)=s}} - \Re \ln \frac{\prob{G_k,w}{c(v_k)=t}}{
      \prob{G_k}{c(v_k)=t}}\\
    &= \Re \ln \frac{\prob{G_k,w}{c(v_k)=s}}{\prob{G_k,w}{c(v_k)=t}}
      - \Re \ln \frac{\prob{G_k}{c(v_k)=s}}{\prob{G_k}{c(v_k)=t}}\\
    &= \Re \ln \ratio{G_k, v_k}{s,t}(w) - \ln \ratio{G_k, v_k}{s,t}(0).
  \end{align*}
  Note that all the logarithms in the above are well defined from
  Consequence\nobreakspace \ref {lem:approx-prob} of the induction hypothesis applied to $G_k$ and $v_k$
  (as $s, t \in \good{G_k, v_k})$.  Further, from items\nobreakspace \ref {ind-1} and\nobreakspace  \ref {ind-2} of the
  induction hypothesis, the last term is at most $d_k\eps_R$ in absolute value.
  Substituting this in eq.\nobreakspace \textup {(\ref {eq:10})}, we get
  \begin{equation}
    C_{s}\Re \xi_{s} - C_{t}\Re \xi_{t}  \leq \frac{d_k}{d_k + \eta}\eps_R. \label{eq:11}
  \end{equation}
  This concludes the analysis of Case 1.

  \medskip
  \noindent \textbf{\textbf{Case 2: $\Re \xi_{i'}$ for $i' \in \good{G_k, v_k}$ all
      have the same sign}.} Suppose first that $\Re \xi_{i'} \geq 0$ for all
  $i' \in \good{G_k, v_k}$.  Then, we have
  \begin{equation}
    0 \leq  C_{s}\Re \xi_{s} - C_{t}\Re \xi_{t} \leq \frac{\Re \xi_{s}}{d_k+\eta}
    \leq \frac{d_k \cdot \eps_R}{d_k + \eta} + \eps_I + 4\Delta\eps_w,\label{eq:12}
  \end{equation}
  where the last inequality follows from item 2 of \MakeUppercase Consequence\nobreakspace \ref{lem:approx-prob} of
  the induction hypothesis applied to $G_k$ at vertex $v_k$ with color $s$,
  which states that $\abs{\Re \xi_{s}} \leq d_k(\eps_R + \eps_I) + 4\Delta\eps_w$.
  Similarly, when $\Re \xi_{i'} \leq 0$ for all $i' \in \good{G_k, v_k}$, we
  have
  \begin{align}
    0 \leq  C_{s}\Re \xi_{s} - C_{t}\Re \xi_{t}
    &=  C_{t}|\Re \xi_{t}| -  C_{s}|\Re \xi_{s}|\nonumber\\&
    \leq \frac{\abs{\Re \xi_{t}}}{d_k+\eta}\nonumber\\
    &\leq \frac{d_k \cdot \eps_R}{d_k + \eta} + \eps_I + 4\Delta\eps_w,\label{eq:13}
  \end{align}
  where the last inequality follows from item 2 of \MakeUppercase Consequence\nobreakspace \ref{lem:approx-prob} of
  the induction hypothesis applied to $G_k$ at vertex $v_k$ with color $t$,
  which states that $\abs{\Re \xi_{t}} \leq d_k(\eps_R + \eps_I) +
  4\Delta\eps_w$.  This concludes the analysis of Case 2.

\medskip
Now, substituting eqs.\nobreakspace  \textup {(\ref {eq:11})} to\nobreakspace  \textup {(\ref {eq:13})}  into eq.\nobreakspace \textup {(\ref {eq:14})}, we get
\begin{multline}
  \label{eq:15}
  \abs{\inp{\Re f_\gamma\inp{\as{G_k,v_k}{i}(w)} - f_1\inp{\as{G_k,v_k}{i}(0)}} \right. \eqbreakno \left.
    -\inp{\Re f_\gamma\inp{\as{G_k,v_k}{j}(w)} - f_1\inp{\as{G_k,v_k}{j}(0)}}}\\
  \leq \frac{d_k}{d_k + \eta}\eps_R + 3\eps_I + 5\Delta\eps_w.
\end{multline}

Substituting eqs.\nobreakspace \textup {(\ref {eq:8})},  \textup {(\ref {eq:9})} and\nobreakspace  \textup {(\ref {eq:15})} into eq.\nobreakspace \textup {(\ref {eq:5})}, we get
\begin{align*}
  \frac{1}{d} \maybealign\abs{ \Re \ln \ratio{G,u}{i,j}(w)-\ln \ratio{G,u}{i,j}(0) } \eqbreak
  \maybealign\leq \frac{d_k \cdot \eps_R}{d_k + \eta} + 3\eps_I + 7\Delta\eps_w 
  < \eps_R,
\end{align*}
where the last inequality follows since
$\eta\eps_R > (\Delta + 1)(3\eps_I + 7\Delta\eps_w)$ (recalling that
$0 \leq d_k \leq \Delta$ and $\eta \in [0.9, 1)$). This verifies item\nobreakspace \ref {ind-2} of the induction hypothesis.

For item\nobreakspace \ref {ind-3}, we consider the imaginary part of eq.\nobreakspace \textup {(\ref {eq:w-rec})}.  As in the
derivation of eq.\nobreakspace \textup {(\ref {eq:5})}, we use the fact that the induction hypothesis applied
to the graph $G_k$ at the vertex $v_k \in \Bi$ (respectively, $v_k \in \Bj$)
implies that $\abs{g_\gamma\inp{\bs{G_k,v_k}{i}(w)}} \leq 2\eps_w$
(respectively, $\abs{g_\gamma\inp{\bs{G_k,v_k}{j}(w)}} \leq 2\eps_w$).  This yields
\begin{align}
  \frac{1}{d} \abs{ \Im \ln \ratio{G,u}{i,j}(w) }
  &\le 2\Delta \eps_w\nonumber\\
  &\quad+\max\left\{
    \max_{v_k \in \overline{\Bi} \cap \overline{\Bj}}
    \abs{\Im f_\gamma\inp{\as{G_k,v_k}{i}(w)} - \Im
    f_\gamma\inp{\as{G_k,v_k}{j}(w)}},
    \right.\nonumber \\
  &\quad\qquad\qquad \left.\max_{v_k \in \overline{\Bi} \cap \Bj} \abs{\Im
    f_\gamma\inp{\as{G_k,v_k}{i}(w)}}, \max_{v_k \in \overline{\Bj} \cap \Bi} \abs{\Im
    f_\gamma\inp{\as{G_k,v_k}{j}(w)}}\right\}.\label{eq:16}
\end{align}
Again, let $v_k$ be the vertex that maximizes the above expression, and $d_k$ be the number of unpinned neighbors of $v_k$ in $G_k$.
We first consider $v_k \in \overline{\Bi} \cap \overline{\Bj}$. Applying
eq.\nobreakspace \textup {(\ref {eq:7})} of Consequence\nobreakspace \ref {lem:real-part-ind} of the induction hypothesis to the graph
$G_k$ at vertex $v_k$ with colors $i, j \in \good{G_k, v_k}$ gives
\begin{equation}
  \label{eq:17}
  \abs{\Im f_\gamma\inp{\as{G_k,v_k}{i}(w)} - \Im
    f_\gamma\inp{\as{G_k,v_k}{j}(w)}} \eqbreakno
    \leq \frac{d_k}{d_k + \eta}\eps_I + 6\Delta\eps_w.
\end{equation}
Now consider $v_k \in \overline{\Bi} \cap {\Bj}$.  For this case, eq.\nobreakspace \textup {(\ref {eq:18})}
of Consequence\nobreakspace \ref {lem:real-part-ind} of the induction hypothesis applied to $G_k$ at
vertex $v_k$ with color $i \in \good{G_k, v_k}$ gives
\begin{equation}
  \label{eq:23}
  \abs{\Im f_\gamma\inp{\as{G_k,v_k}{i}(w)}} \leq \frac{d_k}{d_k + \eta}\eps_I + 5\Delta\eps_w.
\end{equation}
Similarly, for $v_k \in \overline{\Bj} \cap {\Bi}$, eq.\nobreakspace \textup {(\ref {eq:18})}
of Consequence\nobreakspace \ref {lem:real-part-ind} of the induction hypothesis applied to $G_k$ at
vertex $v_k$ with color $j \in \good{G_k, v_k}$ gives
\begin{equation}
  \label{eq:24}
  \abs{\Im f_\gamma\inp{\as{G_k,v_k}{j}(w)}} \leq \frac{d_k}{d_k + \eta}\eps_I + 5\Delta\eps_w.
\end{equation}
Substituting eqs.\nobreakspace  \textup {(\ref {eq:17})} to\nobreakspace  \textup {(\ref {eq:24})}  into eq.\nobreakspace \textup {(\ref {eq:16})} we have
\begin{displaymath}
  \frac{1}{d} \abs{ \Im \ln \ratio{G,u}{i,j}(w) } \leq \frac{d_k}{d_k + \eta}\eps_I +
  8\Delta\eps_w < \eps_I,
\end{displaymath}
where the last inequality holds since $\eta\eps_I > 8(\Delta + 1)\Delta\eps_w$
(recalling that $0 \leq d_k \leq \Delta$ and
$\eta \in [0.9, 1)$). This completes the proof
of item\nobreakspace \ref {ind-3} of the induction hypothesis.

Finally, we prove~item\nobreakspace \ref {ind-4}.  Since $i \not\in \good{u}$,
there exist $n_i >0$ neighbors of $u$ that are pinned to color $i$.  Let $H$ be
the graph obtained by removing these neighbors of $u$ from $G$.  Then, $H$ is an
unconflicted graph with the \emph{same} number of unpinned vertices as $G$, which also
satisfies the admissible list condition $\mathcal{L}$ (since
  $\mathcal{L}$ is hereditary). Further, $u$ remains marked in $H$,
  and $H$ further satisfies $i, j \in \good{H, u}$.  We can therefore apply the
already proved items\nobreakspace \ref {ind-0} to\nobreakspace \ref {ind-2} to
$H$ to conclude that
\begin{equation}
  \abs{\ratio{H}{i,j}(w)} \leq \abs{\ratio{H}{i,j}(0)}e^{d\eps_R}.\label{eq:25}
\end{equation}
Now, since $i, j \in \good{H, u}$, we can apply the recurrence of
\MakeUppercase Lemma\nobreakspace \ref {thm:recurrence} in the same way as in the derivation of eq.\nobreakspace \textup {(\ref {eq:26})} above
to get
\begin{equation}
  \label{eq:27}
  \ratio{H,u}{i,j}(w) = \prod_{k=1}^{\deg_H(u)} \frac{1 - 
      \prob{H_{k}^{(i,j)},w}{ c(v_k) = i}  }{ 1 -  \prob{H_{k}^{(i,j)},w}{c(v_k) = j} } ,
\end{equation}
where, for the reasons described in the discussion following eq.\nobreakspace \textup {(\ref {eq:26})}, the
product can be restricted to unpinned neighbors of $u$ in $H$.  Renaming these
unpinned neighbors as $v_1, v_2, \dots, v_d$, we then have
\begin{equation}
  0 \leq \ratio{H}{i,j}(0)
  = \prod_{k=1}^{d} \frac{\inp{1 -  \prob{H_{k}}{ c(v_k) = i}}
  }{\inp{1 -  \prob{H_{k}}{c(v_k) = j}}
  },\label{eq:20}
\end{equation}
where, as before, $H_k \defeq H_k^{(i,j)}$.  Now, as observed above $H$
satisfies the admissible list condition $\mathcal{L}$.  Thus, for $1 \leq k \leq d$, $v_k$ is
\nice{} in $H_k$ (Lemma\nobreakspace \ref {obv:good-to-nice}), and hence, $\prob{H_k}{c(v_k) = j} \leq \frac{1}{d_k + 2}$ for $1 \leq k \leq d$, where
$d_k \geq 0$ is the number of unpinned neighbors of $v_k$ in $H_k$.  We then
have
\begin{align*}
  0 \leq \ratio{H}{i,j}(0)
  &= \prod_{k=1}^{d} \frac{\inp{1 -  \prob{H_{k}}{ c(v_k) = i}}
    }{\inp{1 -  \prob{H_{k}}{c(v_k) = j}}
    }
    \eqbreak
    \maybealign\leq \prod_{k=1}^d \frac{1}{1 - \frac{1}{d_k + 2}} 
    = \prod_{k=1}^d \frac{d_k + 2}{d_k + 1} 
    \leq 2^\Delta.
\end{align*}
(As an aside, we note that one could get a better bound under the slightly
stronger assumption of uniformly large list sizes considered in
\protect \MakeUppercase {R}emark\nobreakspace \ref {rem:nicer}.  Under the conditions of that remark, we have
$\prob{H_k}{c(v_k) = j} < \min\set{\frac{4}{3\Delta},1}$, so that the above
upper bound can be improved to $\ratio{H}{i,j}(0) \le e^4$ for $\Delta> 1$.)

Combining the estimate with eq.\nobreakspace \textup {(\ref {eq:25})}, we get
$\abs{\ratio{H}{i,j}(w)} \leq 5\cdot2^{\Delta}$ since
$d\eps_R \leq 1/2$. Now note that since
$j \in \good{G, u}$,
\begin{displaymath}
  \Z{G,u}{i}(w) = w^{n_i}\Z{H, u}{i}(w), \text{\ \ and\ \ }
  \Z{G,u}{j}(w) = \Z{H, u}{j}(w),
\end{displaymath}
so that
$\abs{\ratio{G,u}{i,j}(w)} = \abs{w}^{n_i}\abs{\ratio{H,u}{i,j}(w)} \leq
5\cdot2^{\Delta}\cdot \abs{w}^{n_i}$.  The latter is at most $\eps_w$ whenever
$\abs{w} \leq 0.2 \eps_w/2^\Delta$.  This proves item\nobreakspace \ref {ind-4}, and also
completes the inductive proof of \MakeUppercase Lemma\nobreakspace \ref {lem:origin-induction}.  (Note also that
using the stronger upper bound above under the condition of uniformly large list
sizes, we can in fact relax the requirement further to
$\abs{w} \leq \eps_w/(300\Delta)$.) \qed

We conclude this section by using Lemma\nobreakspace \ref {lem:origin-induction} to prove
Theorem\nobreakspace \ref {thm:main-origin-restated}.
\begin{proof}[Proof of Theorem\nobreakspace \ref {thm:main-origin-restated}]
  Let $G$ be a graph of maximum degree $\Delta$ satisfying the admissible list condition $\mathcal{L}$.
  Since $G$ has no pinned vertices, $G$ is unconflicted.  Let $u$ be an unpinned
  vertex that is marked in $G$.  By \MakeUppercase Consequence\nobreakspace \ref
  {lem:sum-lowerbounds} of the induction hypothesis (which we proved in
  Lemma\nobreakspace \ref {lem:origin-induction}), we then have $Z_w(G) \neq 0$
  provided $\nu_w \leq 0.2\eps_w/2^\Delta$.

  Furthermore, as discussed above, under a slightly stronger assumption of
  uniformly large list sizes considered in \protect \MakeUppercase {R}emark\nobreakspace \ref {rem:nicer}, $\nu_w$ can be
  chosen to be $\eps_w/(300\Delta)$.
\end{proof}

\section{Zero-free region around the interval $(0,1]$}
\label{sec:induction-interval}
In this section, we consider the case of $w$ close to $[0, 1]$ but bounded away
from $0$.  In particular, we prove the following theorem, which complements
Theorem\nobreakspace \ref {thm:main-origin-restated}.
\begin{theorem}
  \label{thm:main-interval-restated}
  Fix a positive integer $\Delta$ and an admissible list condition
    $\mathcal{L}$.  Let $\nu_w = \nu_w(\Delta)$ be as in \MakeUppercase
  Theorem\nobreakspace \ref {thm:main-origin-restated}.  Then, for any $w$
  satisfying
  \begin{equation}
    \Re w \in [\nu_w/2, 1 + \nu_w^2/8] \qquad\text{and}\qquad \abs{\Im{w}} \leq \nu_w^2/8,\label{eq:w-bounds}
  \end{equation}
  and any graph $G$ of maximum degree $\Delta$ which satisfies
    $\mathcal{L}$, we have $Z_G(w) \neq 0$.
\end{theorem}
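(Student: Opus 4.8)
The plan is to mirror \Cref{sec:induction-origin} almost verbatim, replacing the origin $0$ by a nearby real point in $(0,1]$. Given $w$ as in \cref{eq:w-bounds}, I would set $\tw := \min\{\Re w, 1\}$, so that $\tw \in [\nu_w/2, 1]$ and $\delta := |w - \tw| < \nu_w^2/4$; thus $\tw$ stays bounded away from $0$, while $\delta$ is negligible even compared with $\eps_w$ (hence also with $\eps_I$ and $\eps_R$). Writing $\gamma = 1 - w$ and $\gammat = 1 - \tw$, one has $|\gamma| < 1$, $\gammat \in [0,1)$, and $|\gamma - \gammat| = \delta$. The key structural simplification over \Cref{sec:induction-origin} is that, since $\tw > 0$, every restricted partition function $\Z{G,u}{i}(\tw)$ (for $u$ unpinned, $i \in L(u)$) is strictly positive; in particular the conditional computation of \Cref{lem:real-estimate2} still bounds $\Pr{G,\tw}{c(u) = i} \le 1/(|L(u)| - \deg_G(u))$ for \emph{every} color, \Cref{obv:good-to-nice} still makes $v_k$ \nice{} in $\G{k}{i,j}$, and conflicted edges are harmless (each contributes a nonzero factor $w$, so one may assume $G$ unconflicted). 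Consequently the ``good vs.\ bad color'' case split, and the attendant ``infinitesimal bad-color'' item of \Cref{lem:origin-induction}, essentially disappear.

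Concretely, I would prove by induction on the number of unpinned vertices of a graph $G$ satisfying \cref{cond:large-lists} an analogue of \Cref{lem:origin-induction} in which $0$ is replaced by $\tw$: item~1 asserts $\Z{G,u}{i}(w) \neq 0$ for all $i \in L(u)$ (which at $\tw$ is a triviality), the lower bound $|Z_G(w)| \geq \cos(d\eps_I/2)\cdot\max_i |\Z{G,u}{i}(w)| > 0$ replacing \Cref{lem:sum-lowerbounds} is now proved for \emph{all} colors at once via \Cref{lem:geometric-proj} (their pairwise angles being at most $d\eps_I$, where $d$ is the number of unpinned neighbors of $u$); and items~3--4 assert that $\ratio{G,u}{i,j}(w)$ is close to $\ratio{G,u}{i,j}(\tw)$, in the sense that $\frac{1}{d}$ of the absolute difference of the real parts of the logarithms is below $\eps_R$ and $\frac{1}{d}|\Im \ln \ratio{G,u}{i,j}(w)| < \eps_I$ (with $\ratio{G,u}{i,j}(\tw)$ a ratio of genuine positive marginal probabilities, hence a positive real). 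The bad-color item~5 is not needed; instead, for a color $i$ with $n_i \ge 1$ neighbors of $u$ pinned to $i$, one first factors out the elementary term: letting $H$ be $G$ with the $i$- and $j$-pinned neighbors of $u$ deleted (same number of unpinned vertices, with $i,j \in \good{H,u}$), we have $\ratio{G,u}{i,j}(w) = w^{n_i - n_j}\ratio{H,u}{i,j}(w)$ and $\ratio{G,u}{i,j}(\tw) = \tw^{n_i - n_j}\ratio{H,u}{i,j}(\tw)$, so $\ratio{G,u}{i,j}(w)/\ratio{G,u}{i,j}(\tw) = (w/\tw)^{n_i - n_j}\cdot\ratio{H,u}{i,j}(w)/\ratio{H,u}{i,j}(\tw)$; since $|w/\tw - 1| \le 2\delta/\nu_w \le \nu_w/2$ and $|n_i - n_j| \le \Delta$, the first factor perturbs the logarithms by only a negligible $O(\Delta\nu_w)$, reducing the estimate to the already-treated good-color case for $H$. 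The $d=0$ base case is handled as in \Cref{sec:induction-origin}, now using that $\tw>0$ makes $Z_{G_2}(w)\neq 0$. The analogues of \Cref{lem:approx-prob,lem:real-part-ind} are obtained exactly as before.

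The induction step again rests on the recurrence of \Cref{thm:recurrence} written for both $w$ and $\tw$: after taking logarithms, one compares for each unpinned neighbor $v_k$ the term $f_\gamma(\as{\G{k}{i,j},v_k}{i}(w))$ with $f_{\gammat}(\as{\G{k}{i,j},v_k}{i}(\tw))$ (and the $j$-analogue), where $f_\kappa(x) = -\ln(1 - \kappa e^x)$ and $\as{H,v}{s}(w) = \ln \prob{H,w}{c(v) = s}$ (and similarly at $\tw$). By the \nice{}ness of $v_k$ in $\G{k}{i,j}$, these arguments lie in a domain $D$ of the shape demanded by \Cref{obv:f-props-int}, with $\zeta \approx \ln(d_k + 2)$ and $\tau$ negligible; then \Cref{obv:f-props-int}(1) together with \Cref{thm:signed-mean} applied to $f_{\gammat}$ supplies the contraction factor $\rho_R \le 1/(d_k + \eta)$, $\eta \in [0.9,1)$, with negligible $\rho_I$, while \Cref{obv:f-props-int}(2) (taking $\kappa = \gammat$, $\kappa' = \gamma$, $\eps = 2\delta$) absorbs the difference between $f_\gamma$ and $f_{\gammat}$ at additive cost $O(\delta)$. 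Summing over the $d$ unpinned neighbors --- reusing Cases~1 and~2 of the analysis after \cref{eq:w-rec}, which use only that the relevant constants lie in $[0, 1/(d_k + \eta)]$ --- yields the ratio estimates, the factor $\frac{1}{d_k + \eta}$ once again swallowing the $O(\Delta)$ accumulated error by the choices in \cref{eq:def-eps}. I expect the main obstacle to be purely organizational: re-checking that the constants of \cref{eq:def-eps} still force contraction and that $D$ and $\gammat$ meet the hypotheses of \Cref{obv:f-props-int} (including the degenerate but harmless case $\tw = 1$, $\gammat = 0$, where $f_{\gammat} \equiv 0$), with none of the delicacy of \Cref{sec:induction-origin} near $0$ reappearing, since $\tw$ is bounded away from $0$. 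Finally, \Cref{thm:main-interval-restated} follows from the nonvanishing statement applied to $G$ itself; together with \Cref{thm:main-origin-restated} this establishes \Cref{thm:zeros} with $\tau_\Delta = \nu_w^2/8$.
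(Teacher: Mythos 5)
Your proposal is correct and follows essentially the same route as the paper's own proof: define $\tw$ as the nearest point of $[0,1]$ to $w$, prove by induction on the number of unpinned vertices an analogue of \Cref{lem:origin-induction} comparing $\ratio{G,u}{i,j}(w)$ with $\ratio{G,u}{i,j}(\tw)$ via \Cref{thm:recurrence}, \nice{}ness, \Cref{obv:f-props-int} (with $\kappa=\gammat$, $\kappa'=\gamma$) and \Cref{thm:signed-mean}, handle colors with pinned-neighbor conflicts by pulling out the harmless factor $w^{n_i-n_j}$ (equivalent to the paper's $(n_i-n_j)\ln w$ term in the recurrence), and conclude zero-freeness from the geometric projection lemma. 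The only differences are organizational (e.g.\ phrasing the $w^{n_i-n_j}$ factor via an auxiliary graph $H$ rather than inside the recurrence), not mathematical.
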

(Here, we recall that as described in the discussion following
Theorem\nobreakspace \ref {thm:main-origin-restated}, $\nu_w$ can be chosen to be
$\eps_w/(300\Delta)$ when the uniformly large list size condition of
\protect \MakeUppercase {R}emark\nobreakspace \ref {rem:nicer} is satisfied.  However, as in that theorem, in the case of
general list coloring, one chooses $\nu_w = 0.2\eps_w/2^\Delta$.)

For $w$ as in eq.\nobreakspace \textup {(\ref {eq:w-bounds})}, we define $\wt$ to be the point on the
interval $[0, 1]$ which is closest to $w$.  Thus
\begin{displaymath}
  \wt \defeq
  \begin{cases}
    \Re w & \text{ when $\Re w \in [\nu_w/2, 1]$;}\\
    1 & \text{ when $\Re w \in (1, 1 + \nu_w^2/8]$.}
  \end{cases}
\end{displaymath}
We also define, in analogy with the last section, $\gamma \defeq 1 - w$ and
$\gammat \defeq 1 - \wt$.  We record a few properties of these quantities in the
following observation.
\begin{observation}
  With $w, \gamma, \wt$ and $\gammat$ as above, we have
  \begin{enumerate}
  \item $0 \leq \gammat, \abs{\gamma} < 1$.
  \item $|\ln w - \ln \wt| \leq \nu_w$.
  \end{enumerate}\label{obv:wg-prop}
\end{observation}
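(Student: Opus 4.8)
The plan is to derive both items directly from the defining bounds \cref{eq:w-bounds} on $w$ together with the fact that $\nu_w$, as fixed via \cref{eq:def-eps}, is a small positive constant; in particular $\nu_w \le 1$, which is all we will need about its size.

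For item (1), the bound $0 \le \gammat < 1$ is immediate: by construction $\wt$ is the point of $[0,1]$ nearest to $w$, and since $\Re w \ge \nu_w/2 > 0$ we have $\wt \in [\nu_w/2, 1]$, hence $\gammat = 1 - \wt \in [0,\, 1 - \nu_w/2] \subseteq [0,1)$. For $\abs{\gamma} < 1$ I would write $\gamma = 1 - w = (1 - \Re w) - \iota\Im w$, so that $\abs{\gamma}^2 = (1-\Re w)^2 + (\Im w)^2$; using $1 - \Re w \in [-\nu_w^2/8,\, 1 - \nu_w/2]$ and $\abs{\Im w} \le \nu_w^2/8$ from \cref{eq:w-bounds}, this is at most $(1-\nu_w/2)^2 + (\nu_w^2/8)^2 = 1 - \nu_w + \nu_w^2/4 + \nu_w^4/64$, which is strictly less than $1$ since $\nu_w/4 + \nu_w^3/64 < 1$ for $\nu_w \le 1$.

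For item (2), I would first bound $\abs{w - \wt}$. If $\Re w \le 1$ then $\wt = \Re w$ and $w - \wt = \iota\Im w$, so $\abs{w-\wt} \le \nu_w^2/8$; if $\Re w > 1$ then $\wt = 1$ and $w - \wt = (\Re w - 1) + \iota\Im w$ with both coordinates at most $\nu_w^2/8$ in absolute value, so $\abs{w-\wt} \le \sqrt2\,\nu_w^2/8 < \nu_w^2/4$. Since also $\wt \ge \nu_w/2$, we obtain $\abs{w/\wt - 1} = \abs{w-\wt}/\wt \le \nu_w/2 < 1/2$. Hence $w/\wt$ lies in the disk of radius $\tfrac12$ about $1$, and — as $\wt$ is a positive real while $\arg w$ is tiny (bounded by $\arctan(\nu_w/4)$) — one has $\arg(w/\wt) = \arg w - \arg \wt \in (-\pi,\pi]$, so that $\ln w - \ln \wt = \ln(w/\wt)$ for the principal branch. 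The elementary estimate $\abs{\ln(1+z)} \le \abs{z}/(1-\abs{z})$, valid for $\abs{z} < 1$, then gives $\abs{\ln w - \ln \wt} \le (\nu_w/2)/(1 - \nu_w/2) \le \nu_w$, the last step again using $\nu_w \le 1$.

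Neither item presents a genuine obstacle; the only point requiring a moment's care is the multiplicativity $\ln w - \ln\wt = \ln(w/\wt)$ for the principal logarithm, which is exactly why one records above that $\wt > 0$ and that $w$ has small argument, so that no $2\pi\iota$ correction can arise. Everything else is routine arithmetic with the explicitly small constant $\nu_w$.
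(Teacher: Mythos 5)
Your proof is correct and follows essentially the same route as the paper's: both items come down to the explicit bounds in \cref{eq:w-bounds} together with $\wt \geq \nu_w/2$ and the smallness of $\nu_w$. The only cosmetic difference is in item (2), where the paper bounds $\Re(\ln w - \ln \wt) = \ln\bigl(\abs{w}/\wt\bigr)$ and $\Im(\ln w - \ln \wt) = \arg w$ separately, while you bound $\abs{\ln(w/\wt)}$ in a single step via $\abs{\ln(1+z)} \leq \abs{z}/(1-\abs{z})$ (after noting, as you correctly do, that $\wt>0$ makes the principal-branch identity $\ln w - \ln\wt = \ln(w/\wt)$ legitimate); both give the bound $\nu_w$ with room to spare.
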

\begin{proof}
  We have $\tilde{\gamma} \in [0, 1 - \nu_w/2]$,
  $\Re \gamma \in [-\nu_w^2/8, 1 - \nu_w/2]$ and $\abs{\Im \gamma} \leq \nu_w^2/8$.
  Since $\nu_w \leq 0.01$, these bounds taken together imply item 1.  We also
  have $0 \leq \tw \leq \abs{w} \leq \tw + \nu_w^2/4$ and $\tw \geq
  \nu_w/2$. Thus
  \begin{displaymath}
    0 \leq \Re (\ln w - \ln \tilde{w}) = \ln \frac{\abs{w}}{\wt} \leq \ln\inp{1
      + \frac{\nu_w^2}{4\tw}} \leq \frac{\nu_w}{2}.
  \end{displaymath}
  Similarly, $\Im (\ln w - \ln \tilde{w}) = \Im \ln w = \arg w$, so that
  \begin{displaymath}
    \abs{\Im (\ln w - \ln \tilde{w})} \leq \abs{\arg w} \leq \frac{\abs{\Im
        w}}{\Re w} \leq \frac{\nu_w}{4}.
  \end{displaymath}
  Together, the above two bounds imply item 2.
\end{proof}

In analogous fashion to the proof of Theorem\nobreakspace \ref {thm:main-origin-restated}, we would
like to show that $\ratio{G,u}{i,j}(w) \approx \ratio{G,u}{i,j}(\wt)$
independent of the size of $G$. (Note that for positive $\wt$,
$\ratio{G,u}{i,j}(\wt)$ is a well defined positive real number for any graph.)
To this end, we will prove the following analog of Lemma\nobreakspace \ref {lem:origin-induction}
via an induction on the number of unpinned vertices in $G$.  The induction is
very similar in structure to that used in the proof of
Lemma\nobreakspace \ref {lem:origin-induction}, except that the fact that $w$ has strictly positive
real part allows us to simplify several aspects of the proof.  In particular, we
do not need to consider good and bad colors separately, and do not require the
underlying graphs to be unconflicted.

As in the previous section, we assume that all graphs in this section have
maximum degree at most $\Delta \geq 1$, and define the quantities
$\eps_w, \eps_R, \eps_I$ in terms of $\Delta$ using eq.\nobreakspace \textup
{(\ref {eq:def-eps})}.  We again fix an admissible
  list condition $\mathcal{L}$ throughout this section.

\begin{lemma}
  \label{lem:interval-induction}
  Let $G$ be a graph of maximum degree $\Delta$ satisfying the admissible list
  condition $\mathcal{L}$ and let $u$ be any marked unpinned vertex in $G$.  Then, the
  following are true (here, $\eps_w, \eps_I, \eps_R$ are as defined in
  eq.\nobreakspace \textup {(\ref {eq:def-eps})}):
  \begin{enumerate}\item \label{wint-ind-0} For $i \in L(u)$, $\abs{ \Z{G,u}{i} (w)} > 0 $.
  \item \label{wint-ind-1} For $i,j\in L(u)$, if $u$ has all neighbors pinned,
    then $|\ln \ratio{G,u}{i,j}(w) - \ln \ratio{G,u}{i,j}(\wt)| < \eps_w$.
  \item \label{wint-ind-2} For $i,j\in L(u)$, if $u$ has $d \geq 1$ unpinned
    neighbors, then
    \[
      \frac{1}{d}\abs{\Re \ln \ratio{G,u}{i,j}(w) - \Re \ln
        \ratio{G,u}{i,j}(\wt)} < \eps_R.
    \]
  \item \label{wint-ind-3} For any $i,j\in L(u)$, if $u$ has $d \geq 1$ unpinned
    neighbors, then
    $\frac{1}{d} \abs{\Im \ln \ratio{G,u}{i,j}(w)} < \eps_I$.
  \end{enumerate}
\end{lemma}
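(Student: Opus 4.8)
The plan is to prove \cref{lem:interval-induction} by induction on the number of unpinned vertices of $G$, following the same template as the proof of \cref{lem:origin-induction}, but exploiting the fact that $\Re w \geq \nu_w/2 > 0$ (so that $w \neq 0$ throughout) to avoid the good/bad-color dichotomy and the restriction to unconflicted graphs. For the base case $\abs{G} = 1$, the vertex $u$ is the only unpinned vertex and all its neighbors are pinned; since every pinned vertex has degree one, $G$ splits into the star formed by $u$ and its pinned neighbors together with a disjoint union of edges between pinned vertices, so $\Z{G,u}{i}(w) = w^{n_i + c}$, where $n_i$ is the number of neighbors of $u$ pinned to color $i$ and $c$ counts the conflicted edges not meeting $u$. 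This is nonzero because $w \neq 0$, which gives \cref{wint-ind-0}; \cref{wint-ind-2,wint-ind-3} are vacuous; and $\ratio{G,u}{i,j}(w) = w^{n_i - n_j}$, so $\abs{\ln \ratio{G,u}{i,j}(w) - \ln \ratio{G,u}{i,j}(\wt)} \leq \Delta \abs{\ln w - \ln \wt} \leq \Delta \nu_w < \eps_w$ by \cref{obv:wg-prop}, which is \cref{wint-ind-1}.

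For the inductive step I would first record the analogues of \cref{lem:sum-lowerbounds,lem:approx-prob,lem:real-part-ind}, all of which simplify here. Using \cref{wint-ind-1,wint-ind-3} of the hypothesis to bound the pairwise angles between the vectors $\Z{G,u}{i}(w)$ and applying \cref{lem:geometric-proj} gives $\abs{Z_G(w)} \geq 0.9 \min_{i \in L(u)} \abs{\Z{G,u}{i}(w)} > 0$; now \emph{every} color contributes a nonzero restricted partition function, so there is no separate bad-color term to discard. Writing $\as{G,u}{i}(w) \defeq \ln \prob{G,w}{c(u) = i}$, this bound together with \cref{wint-ind-1,wint-ind-2,wint-ind-3} yields $\abs{\Re(\as{G,u}{i}(w) - \as{G,u}{i}(\wt))} \leq d(\eps_R + \eps_I) + O(\Delta \eps_w)$ and $\abs{\Im \as{G,u}{i}(w)} \leq d \eps_I + O(\Delta \eps_w)$, where $d$ is the number of unpinned neighbors of $u$. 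Finally---the key step---if $u$ is also \nice{} in $G$, then the niceness bound $\prob{G, \wt}{c(u) = k} \leq \frac{1}{d+2}$ controls $\Re \as{G,u}{k}(\wt)$ and hence, via the previous estimate, places $\as{G,u}{i}(w), \as{G,u}{i}(\wt), \as{G,u}{j}(w), \as{G,u}{j}(\wt)$ all inside a domain $D$ as in \cref{obv:f-props-int} with $\zeta \approx \ln(d+2)$ and $\tau = d\eps_I + O(\Delta\eps_w)$. Applying \cref{thm:signed-mean} to $f_{\gammat}$ on $D$, and then passing from $f_{\gammat}$ to $f_\gamma$ via item~2 of \cref{obv:f-props-int} (legitimate since $\abs{\gamma - \gammat} = \abs{w - \wt} \leq \nu_w^2/4$), reproduces the analogues of \cref{eq:6,eq:7,eq:18} with $\wt$ in place of $0$ and $\gammat$ in place of $1$, for some $\eta \in [0.9,1)$ exactly as in \cref{lem:real-part-ind}.

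With these in hand, the inductive step mirrors that of \cref{lem:origin-induction}. \Cref{wint-ind-0} follows by pinning $u$ to $i$ to form a graph $G'$ with one fewer unpinned vertex that still satisfies \cref{cond:large-lists}, and noting $\Z{G,u}{i}(w) = Z_{G'}(w) \neq 0$ by the first consequence applied to $G'$; \cref{wint-ind-1} follows by the same decomposition as in the base case. For \cref{wint-ind-2,wint-ind-3} I would apply the recurrence of \cref{thm:recurrence} at $u$: the pinned neighbors of $u$ contribute a factor $w^{m}$ to $\ratio{G,u}{i,j}(w)$ and $\wt^{m}$ to $\ratio{G,u}{i,j}(\wt)$ for some integer $m$ with $\abs{m} \leq \Delta$---these must be split off, since such a neighbor would feed $-\infty$ into the $f_\gamma$ recurrence---and this changes $\ln \ratio{G,u}{i,j}$ by at most $\Delta\nu_w$, negligible next to $d\eps_R$ and $d\eps_I$; the $d$ unpinned neighbors $v_1, \dots, v_d$ contribute $\sum_k \inp{f_\gamma(\as{G_k, v_k}{i}(w)) - f_\gamma(\as{G_k, v_k}{j}(w))}$ with $G_k \defeq \G{k}{i,j}$, exactly as in \cref{eq:w-rec} but with no bad-color sums. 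Each $G_k$ has one fewer unpinned vertex, satisfies \cref{cond:large-lists}, and has $v_k$ \nice{} in it by \cref{obv:good-to-nice}, so the three consequences apply. Bounding the real part then proceeds via the identical Case~1 / Case~2 analysis used for \cref{lem:origin-induction}---writing $\xi_l = \as{G_k, v_k}{l}(w) - \as{G_k, v_k}{l}(\wt)$ and using $\Re \xi_s - \Re \xi_t = \Re\ln\ratio{G_k, v_k}{s,t}(w) - \ln\ratio{G_k, v_k}{s,t}(\wt)$, bounded by $d_k \eps_R$ from \cref{wint-ind-1,wint-ind-2} of the hypothesis---so that each unpinned neighbor contributes at most $\frac{d_k}{d_k + \eta}\eps_R + O(\eps_I + \Delta\eps_w)$ to $\frac{1}{d}\abs{\Re \ln \ratio{G,u}{i,j}(w) - \Re\ln\ratio{G,u}{i,j}(\wt)}$, which is $<\eps_R$ since $\eta\eps_R$ dominates $(\Delta+1)\,O(\eps_I + \Delta\eps_w)$. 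The bound for \cref{wint-ind-3} comes out the same way from the imaginary-part estimates, each unpinned neighbor contributing at most $\frac{d_k}{d_k+\eta}\eps_I + O(\Delta\eps_w)$, which sums to $<\eps_I$ because $\eta\eps_I$ dominates $8(\Delta+1)\Delta\eps_w$.

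The main obstacle is the same as in \cref{sec:induction-origin}: carefully tracking the accumulated error, and in particular verifying that the arguments $\as{G_k,v_k}{s}(\cdot)$ remain inside the domain $D$ to which \cref{obv:f-props-int,thm:signed-mean} apply---this is precisely where niceness, hence \cref{cond:large-lists}, enters---and that the contraction factor $\frac{d_k}{d_k + \eta} < 1$ beats the lower-order terms. I expect this to be genuinely easier than the $w = 0$ case, since with $\Re w$ bounded away from $0$ all restricted partition functions are automatically nonzero, no underlying graph need be unconflicted, and there is no analogue of \cref{ind-4} to propagate; the only new feature is the harmless $w^m$ factor from pinned neighbors, controlled by \cref{obv:wg-prop}.
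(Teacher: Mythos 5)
Your proposal is correct and follows essentially the same route as the paper's proof: the same base case, the same three consequences (the lower bound on $\abs{Z_G(w)}$ via \cref{lem:geometric-proj}, the approximation of pseudo-probabilities, and the contraction estimate obtained by combining \nice{}ness with \cref{obv:f-props-int} and \cref{thm:signed-mean} applied to $f_{\gammat}$ and then $f_\gamma$), and the same inductive step using \cref{thm:recurrence} with the $w^{n_i-n_j}$ factor from pinned neighbors split off and the identical Case~1/Case~2 analysis of the quantities $C_{i'}\Re\xi_{i'}$. The simplifications you point out (no good/bad color dichotomy, no unconflictedness requirement, no analogue of the bad-color item) are exactly those made in the paper.
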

We will refer to items\nobreakspace  \ref {wint-ind-0} to\nobreakspace  \ref {wint-ind-3}  as ``items
of the induction hypothesis''.  The rest of this section is devoted to the proof
of this lemma via an induction on the number of unpinned vertices in $G$.

We begin by verifying that the induction hypothesis holds in the base case
when $u$ is the only unpinned vertex in a graph $G$.  In this case,
items\nobreakspace \ref {wint-ind-2} and\nobreakspace  \ref {wint-ind-3} are vacuously true since $u$ has no unpinned
neighbors.  Since all neighbors of $u$ in $G$ are pinned, the fact that all
pinned vertices have degree at most one implies that $G$ can be decomposed into
two disjoint components $G_1$ and $G_2$, where $G_1$ consists of $u$ and its
pinned neighbors, while $G_2$ consists of a disjoint union of edges with pinned
end-points.  Let $m$ be the number of conflicted edges on $G_2$, and let $n_k$
denote the number of neighbors of $u$ pinned to color $k$.  We then have
$\Z{G, u}{k}(x) = x^{n_k} Z_{G_2}(x) = x^{n_k+m}$ for all $x \in \C$.  This
already proves item\nobreakspace \ref {wint-ind-0} since $w, \wt \neq 0$.  Item\nobreakspace \ref {wint-ind-1}
follows via the following computation (which uses item 2 of \MakeUppercase Observation\nobreakspace \ref {obv:wg-prop}):
\begin{align*}
  |\ln \ratio{G,u}{i,j}(w) - \ln \ratio{G,u}{i,j}(\wt)| &= |n_i -
  n_j|\cdot|\ln w - \ln \tw| \eqbreak
  \maybealign\leq \Delta \nu_w < \eps_w.
\end{align*}
We now derive some consequences of the above induction hypothesis that will be
helpful in carrying out the induction.  Throughout, we fix the graph $G$ and the
vertex $u$ as in the statement of Lemma~\ref{lem:interval-induction}.
\begin{consequence}
  \label{lem:wint-sum-lowerbounds}
  If $|L(u)| \geq 1$, then $\abs{Z_G(w)} > 0.$
\end{consequence}
\begin{proof}
  Note that $Z_G(w) = \sum_{i \in L(u)} \Z{G,u}{i} (w)$.  From
  item\nobreakspace \ref {wint-ind-3}, we see that the angle between the complex numbers
  $\Z{G,u}{i} (w)$ and $\Z{G,u}{j} (w)$, for all $i, j \in L(u)$, is at most
  $d\eps_I$.  Applying \MakeUppercase Lemma\nobreakspace \ref {lem:geometric-proj} we then have
  \begin{align*}
   \Bigl|\sum_{i \in L(u)} \Z{G,u}{i} (w)\Bigr|
    &\ge \abs{L(u)} \cos \frac{d\eps_I}{2}
      \cdot \min_{i \in \good{u}}
      \abs{\Z{G,u}{i}(w)} \eqbreak
      \maybealign\ge  0.9\min_{i \in \good{u}} \abs{\Z{G,u}{i}(w)},
  \end{align*}
  when $|L(u)| \ge 1$ and $d\eps_I \leq 0.01$.  This last quantity is positive
  from item\nobreakspace \ref {wint-ind-0}.
\end{proof}

\begin{consequence}
  For all $\eps_R, \eps_I, \eps_w$ small enough such that $\eps_I \leq \eps_R$
  and $\eps_w\leq 0.01\eps_I$, the pseudo-probabilities approximate the real
  probabilities in the following sense: for any $j\in L(u)$,
    \begin{align*}
      \abs{\Im \ln
      \frac{\prob{G,w}{c(u) = j}}{\prob{G,\wt}{c(u) = j}}}
      &= \abs{\Im \ln \prob{G,w}{c(u) = j}} \eqbreak
      \maybealign\leq d\eps_I + 2\Delta\eps_w; \\ \abs{\Re \ln
      \frac{\prob{G,w}{c(u) = j}}{\prob{G,\wt}{c(u) = j}}}
      &\leq d \eps_R + d\eps_I + 2\Delta\eps_w,
    \end{align*}
    where $d$ is the number of unpinned neighbors of $u$ in $G$.
  \label{lem:wint-approx-prob}
\end{consequence}
\begin{proof}
  Using items\nobreakspace  \ref {wint-ind-1} to\nobreakspace  \ref {wint-ind-3}  of the induction hypothesis,
  there exist complex numbers $\xi_i$ (for all $i \in \good{u}$) satisfying
  $\abs{ \Re \xi_i} \le d \eps_R + \eps_w$ and
  $\abs{\Im \xi_i} \le d\eps_I + \eps_w$ such that
  \begin{align}
    \frac{\prob{G,\tilde{w}}{c(u) = j}}{\prob{G,w}{c(u) = j}}
    \maybealign= \prob{G,\tilde{w}}{c(u) = j}\sum_{i \in L(u)} \frac{\Z{G,u}{i} (w)}{\Z{G,u}{j} (w)} \eqbreak
    \maybealign=  \prob{G,\tilde{w}}{c(u) = j}\sum_{i \in L(u)} \frac{\Z{G,u}{i} (\wt)}{\Z{G,u}{j} (\wt)} e^{\xi_i}.\label{eq:w-ind-26}
  \end{align}
Now, note that
  $\sum_{i \in L\inp{u}} \frac{\Z{G,u}{i} (\tw)}{\Z{G,u}{j} (\tw)} =
  \frac{1}{\prob{G,\tw}{c(u) = j}}$, so that the sum above is a convex
  combination of the $\exp(\xi_i)$.  From the bounds on the real and imaginary
  parts of the $\xi_i$ quoted above, by a calculation similar to that in~eq.\nobreakspace \textup {(\ref {eq:ind-origin-1})}, we also have (when $\eps_I, \eps_w \leq 0.01/\Delta$)
  \begin{align*}
	  \Re{e^{\xi_i}} \maybealign\in (e^{-d\eps_R-\eps_w} - (d\eps_I + \eps_w)^2,\;
    e^{d\eps_R + \eps_w})\text{, and } \eqbreak
    |\arg {e^{\xi_i}}| \maybealign\leq
    d\eps_I + \eps_w.\end{align*}
  The above will therefore be true also for any convex combination of the
  $e^{\xi_i}$, in particular the one in eq.\nobreakspace \textup {(\ref {eq:w-ind-26})}.  We therefore have,
  for $C \defeq \frac{\prob{G,\tilde{w}}{c(u) = j}}{\prob{G,w}{c(u) = j}}$,
  \begin{align*}
\Re C
    &\in \big(e^{-d\eps_R - \eps_w} - (d\eps_I + \eps_w)^2,\; e^{d\eps_R + \eps_w}\big),\\ |\arg C|
    &\leq d\eps_I + \eps_w.
  \end{align*}
  Now recall that for $\abs{\theta} \leq \pi/4$, we have
  $-\theta^2 \leq \ln \cos \theta \leq -\theta^2/2$. Thus, using the values of
  $\eps_w, \eps_I$ and $\eps_R$, we have
  \begin{align*}
    |\Re \ln C| &\leq d\eps_R + d\eps_I + 2\Delta\eps_w,
                   \text{ and }\\
    |\Im \ln C| &\leq d\eps_I + \eps_w\qedhere.
  \end{align*}
\end{proof}
As before we define $\as{G,u}{i}(w) = \ln \prob{G,w}{c(u)=i}$, and recall the
definition of the function $f_\gamma(x) \defeq -\ln ( 1 - \gamma e^x)$.
\begin{consequence}
  \label{lem:wint-real-part-ind} There exists a positive constant
  $\eta \in [0.9, 1)$ so that the following is true. Let $d$ be the number of
  unpinned neighbors of $u$.  Assume further that the vertex $u$ is \nice{} in
  $G$. Then, for any colors $i, j\in L(u)$, there exist a real number
  $C = C_{G,u,i} \in [0, \frac{1}{d + \eta}]$ such that
  \begin{align}
    \abs{ \Re f_\gamma(\as{G,u}{i}(w)) - f_\gammat(\as{G,u}{i}(\wt))
    - C\cdot \Re \inp{\as{G,u}{i}(w) - \as{G,u}{i}(\wt)}}
    &\le \eps_I + \eps_w;\label{eq:wint-6}\\
    \abs{ \Im  f_\gamma(\as{G,u}{i}(w)) - \Im f_\gamma(\as{G,u}{j}(w))}\eqbreak
    &\le  \frac{1}{d + \eta} \cdot (d\eps_I + 4 \Delta \eps_w) +
      2\eps_w.\label{eq:wint-7}\end{align}
\end{consequence}
\begin{proof}
  Since $u$ is \nice{} in $G$, the bound
  $\prob{G,\tw}{c(u) = k} \leq \frac{1}{d+2}$ (for any $k \in
  L(u)$) applies.  Combining them with Consequence\nobreakspace \ref {lem:wint-approx-prob} we see that
  $\as{G,u}{i}(w), \as{G,u}{i}(\wt), \as{G,u}{j}(w), \as{G,u}{j}(\wt)$ lie in a
  domain $D$ as described in \MakeUppercase Lemma\nobreakspace \ref {obv:f-props-int}, with the parameters $\zeta$
  and $\tau$ in that lemma chosen as
  \begin{equation*}
    \begin{aligned}
      \zeta &= \ln(d+2) - d\eps_R - d\eps_I - 2\Delta\eps_w\,\text{, and }\\
      \tau &= d\eps_I + 2\Delta\eps_w.
    \end{aligned}
  \end{equation*}
  Here, for the bound on $\zeta$, we use the fact that for $k \in L(u)$,
  $\prob{G,\tw}{c(u) = k} \leq \frac{1}{d+2}$, since $u$ is \nice{} in
  $G$. As in the proof of \MakeUppercase Consequence\nobreakspace \ref {lem:real-part-ind}, we use the values of
  $\eps_w, \eps_I, \eps_R$ to verify that the condition $\tau < 1/2$ and
  $\tau^2 + e^{-\zeta} < 1$ are satisfied, so that item 1 of
  Lemma\nobreakspace \ref {obv:f-props-int} applies (with the parameter $\kappa$ therein set to
  $\gammat$) and further that $\rho_R$ and $\rho_I$ as set there satisfy
  $\rho_R \le \frac{1}{d + \eta}$ and $\rho_I < 3\eps_I$, with $\eta =
  0.94$. Using~Lemma\nobreakspace \ref {thm:signed-mean} followed by the bound on $\eps_w$, we then
  have
  \begin{equation}
	  \abs{ \Re f_\gammat(\as{G,u}{i}(w)) - f_\gammat(\as{G,u}{i}(\tw))
		- C \cdot \Re \inp{ \as{G,u}{i}(w) - \as{G,u}{i}(\tw)}} \le 3\eps_I(d\eps_I + 2\Delta\eps_w)
		\leq 4d\eps_I^2 \leq \eps_I,\label{eq:wint-19}
  \end{equation}
  for an appropriate non-negative $C \leq 1/(d+\eta)$.  
  This is almost~eq.\nobreakspace \textup {(\ref {eq:wint-6})}, whose difference will be handled later.

  Similarly, applying
  Lemma\nobreakspace \ref {thm:signed-mean} to the imaginary part we have
  \begin{multline}
    \abs{ \Im \Big(f_\gammat(\as{G,u}{i}(w)) - f_\gammat(\as{G,u}{j}(w))\Big)}\\
    \le \rho_R\cdot \max\set{\abs{\Im \inp{\as{G,u}{i}(w) - \as{G,u}{j}(w)}} ,\right. \eqbreakno\left.
      \abs{\Im \as{G,u}{i}(w)}, \abs{\Im \as {G,u}{j}(w)}},\label{eq:wint-21}
  \end{multline}
  where, as noted above, $\rho_R \leq \frac{1}{d + \eta}$.  Now, note that the
  first term in the above maximum is less than $d\eps_I + \eps_w$ by
  items\nobreakspace \ref {wint-ind-1} and\nobreakspace  \ref {wint-ind-3} of the induction hypothesis, while the other two
  are at most $d\eps_I + 2\Delta\eps_w$ from item 2 of
  \MakeUppercase Consequence\nobreakspace \ref {lem:wint-approx-prob}.

Finally, we use item 2 of \MakeUppercase Lemma\nobreakspace \ref {obv:f-props-int} with the parameter $\kappa'$
  therein set to $\gamma$.  To this end, we note that
  $\abs{\gamma - \gammat} \leq \eps_w$, and that with the fixed values of
  $\eps_w, \eps_R$, and $\eps_I$, the condition $(1 + \eps_w) < e^\zeta$ is
  satisfied, so that the item applies.  Using the item, we then see that for any
  $z \in D$,
  \[
	\abs{ f_\gamma(z) - f_\gammat(z) } \le \eps_w.
  \]
  Thus, the quantities
  $|\Re f_\gamma(\as{G,u}{i}(w)) - \Re f_\gammat(\as{G,u}{i}(w))|$,
  $|\Im f_\gamma(\as{G,u}{i}(w)) - \Im f_\gammat(\as{G,u}{i}(w))|$,
  $|\Im f_\gamma(\as{G,u}{j}(w)) - \Im f_\gammat(\as{G,u}{j}(w))|$, and
  $|\Im f_\gamma(\as{G,u}{j}(w)) - \Im f_\gammat(\as{G,u}{j}(w))|$ are all at
  most $\eps_w$. The desired bounds now follow from the triangle inequality and
  the bounds in eqs.\nobreakspace \textup {(\ref {eq:wint-19})} and\nobreakspace  \textup {(\ref {eq:wint-21})}.
\end{proof}

\subsection*{Inductive proof of Lemma\nobreakspace \ref {lem:interval-induction}} We are now ready
to see the inductive proof of Lemma\nobreakspace \ref {lem:interval-induction}; recall that the
base case was already established immediately following the statement of the lemma. Let $G$
be any graph which satisfies the admissible list condition $\mathcal{L}$ and has at least two
unpinned vertices.
We first prove induction~item\nobreakspace \ref {wint-ind-0} for any
marked unpinned vertex $u$ in $G$. Consider the graph $G'$
obtained from $G$ by pinning vertex $u$ to color $i$.  Note that by the
definition of the pinning operation, $Z_{G, u}^{i}(w) = Z_{G'}(w)$.  Further,
since $\mathcal{L}$ is hereditary (because it is admissible), the graph
  $G'$ also satisfies $\mathcal{L}$, and has one fewer unpinned vertex than
$G$.  Thus, from Consequence\nobreakspace \ref {lem:wint-sum-lowerbounds} of the
induction hypothesis applied to $G'$, we have that
$\abs{\Z{G, u}{i}(w)} = \abs{Z_{G'}(w)} > 0$.

We now consider~item\nobreakspace \ref {wint-ind-1}.  When all neighbors of $u$
in $G$ are pinned, the fact that all pinned vertices have degree at most one
implies that $G$ can be decomposed into two disjoint components $G_1$ and $G_2$,
where $G_1$ consists of $u$ and its pinned neighbors, while $G_2$ has one fewer
unpinned vertex than $G$.  Note also that $G_2$, being a connected
  component of $G$, also satisfies the admissible list condition $\mathcal{L}$
  (since $\mathcal{L}$ is hereditary). Thus, from Consequence\nobreakspace \ref
{lem:wint-sum-lowerbounds} of the induction hypothesis applied to $G_2$, we have
that $Z_{G_2}(w)$ and $Z_{G_2}(\wt)$ are both non-zero.  Let $n_k$ be the number
of neighbors of $u$ pinned to color $k$.  Now, since $G_1$ and $G_2$ are disjoint
components, we get $\Z{G, u}{k}(x) = x^{n_k}Z_{G_2}(x)$ for all $k \in L(u)$
and all $x \in \C$.  It therefore follows that
\begin{align*}
  |\ln \ratio{G,u}{i,j}(w) - \ln \ratio{G,u}{i,j}(\wt)| \maybealign= |n_i - n_j|\cdot|\ln w - \ln \tw| \eqbreak
  \maybealign\leq \Delta \nu_w < \eps_w.
\end{align*}

We now consider~items\nobreakspace \ref {wint-ind-2} and\nobreakspace  \ref {wint-ind-3}.  Recall that by~Lemma\nobreakspace \ref {thm:recurrence}, we
have
\[
  \ratio{G,u}{i,j}(w) = \prod_{k=1}^{\deg_G(u)} \frac{\inp{1 - \gamma
      \prob{\G{k}{i,j},w}{ c(v_k) = i} } }{ \inp{1 - \gamma \prob{\G{k}{i,j},w}{
        c(v_k) = j} } }.
\]
Without loss of generality, we relabel the unpinned neighbors of $u$ as
$v_1, v_2, \dots, v_d$.  As before, for simplicity we write
$G_k \defeq \G{k}{i,j}$.  Note that each $G_k$ has exactly one fewer
  unpinned vertex than $G$ and satisfies $\mathcal{L}$ (since $\mathcal{L}$ is
  hereditary).  Further, the vertex $v_k$ is marked in $G_k$ (as $u$
  was marked in $G$).  Thus, the induction hypothesis applies to each
  $G_k$ at the vertex $v_k$.  Now, let $n_k$ be the number of neighbors of $u$
pinned to color $k$. Recalling that $1 - \gamma = w$, we can then simplify the
above recurrence to
\[
  \ratio{G,u}{i,j}(w) = w^{n_i - n_j} \prod_{k=1}^{d} \frac{\inp{1 - \gamma
      \prob{\G{k}{i,j},w}{ c(v_k) = i} } }{ \inp{1 - \gamma \prob{\G{k}{i,j},w}{
        c(v_k) = j} } }.
\]
Now, as before, for $s \in L(v_k)$ we define
$\as{G_k,v_k}{s}(w) \defeq \ln \prob{G_k,w}{c(v_k)=s}$. From the above
recurrence, we then have,
\begin{equation}
  \label{eq:wint-rec}
  -\ln \ratio{G,u}{i,j}(w) =(n_i - n_j)\ln w \eqbreakno +\sum_{k=1}^d \Big(f_\gamma\inp{\as{G_k,v_k}{i}(w)}
  - f_\gamma\inp{\as{G_k,v_k}{j}(w)}\Big).
\end{equation}
Note that the same recurrence also applies when $w$ is replaced by $\tw$ (and
hence $\gamma$ by $\gammat$):
\begin{equation*}
  -\ln \ratio{G,u}{i,j}(\wt) = (n_i - n_j)\ln \tw \eqbreak + \sum_{k=1}^d \Big(f_\gammat\inp{\as{G_k,v_k}{i}(\tw)} 
  - f_\gammat\inp{\as{G_k,v_k}{j}(\tw)}\Big).
\end{equation*}
(Recall that since $\Re w, \tw > 0$, $\ln w$ and $\ln \tw$ are well
defined).

Using item 2 of \MakeUppercase Observation\nobreakspace \ref {obv:wg-prop}, $|n_i - n_j| \leq \Delta$, and the fact that
$\Delta\nu_w \leq \eps_w$, we have
\[\abs{n_i- n_j}\cdot \abs{\ln w - \ln \tw} \leq \eps_w.\]  Applying the triangle
inequality to the real part of the difference of the two recurrences, we
therefore get
\begin{multline}
  \frac{1}{d} \abs{\Re \ln \ratio{G,u}{i,j}(w)-\ln \ratio{G,u}{i,j}(\tw)}\\
\le \eps_w + \max_{1 \leq k \leq d} \inbr{\abs{\inp{\Re f_\gamma\inp{\as{G_k,v_k}{i}(w)} - f_\gammat\inp{\as{G_k,v_k}{i}(\tw)}}\right.\right.\eqbreakno\left.\left.
        - \inp{\Re f_\gamma\inp{\as{G_k,v_k}{j}(w)} - f_\gammat\inp{\as{G_k,v_k}{j}(\tw)}}}}.
\label{eq:wint-5}
\end{multline}

In what follows, we let $v_k$ be the vertex that maximizes the above expression,
and $d_k$ be the number of unpinned neighbors of $v_k$ in $G_k$.  Before
proceeding with the analysis, we recall the observation above that the
graphs $G_k$ satisfy the admissible list condition $\mathcal{L}$.  Further, we
note that $v_k$ is (i) marked in $G_k$ (this follows from the
  fact that $\mathcal{L}$ is hereditary); and (ii) \nice{} in $G_k$ (this
last fact follows from Lemma\nobreakspace \ref {obv:good-to-nice} and the fact
that $G$ satisfies the admissible list condition $\mathcal{L}$).
Thus, the preconditions of
\MakeUppercase Consequence\nobreakspace \ref {lem:wint-real-part-ind} applies to the vertex $v_k$ in graph $G_k$.  We
now proceed with the analysis.

We begin by noting that
\begin{multline*}
  \abs{\inp{\Re f_\gamma\inp{\as{G_k,v_k}{i}(w)} -
  f_\gammat\inp{\as{G_k,v_k}{i}(\tw)}} \right.\eqbreakno\left.
    -\inp{\Re f_\gamma\inp{\as{G_k,v_k}{j}(w)} - f_\gammat\inp{\as{G_k,v_k}{j}(\tw)}}}\\
  \leq \max_{i', j' \in L(v_k)} \abs{\inp{\Re f_\gamma\inp{\as{G_k,v_k}{i'}(w)}
  - f_\gammat\inp{\as{G_k,v_k}{i'}(\tw)}} \right.\eqbreakno\left.
      - \inp{\Re f_\gamma\inp{\as{G_k,v_k}{j'}(w)} -
      f_\gammat\inp{\as{G_k,v_k}{j'}(\tw)}}}.
\end{multline*}
On the other hand, for any color $s \in L(v_k)$, \MakeUppercase Consequence\nobreakspace \ref {lem:wint-real-part-ind}
of the induction hypothesis instantiated on $G_k$ and applied to $v_k$ and $s$
shows that there exists a $C_s = C_{s, v_k, G_k} \in [0, 1/(d_k + \eta)]$ such
that
\begin{align*}
  \abs{\Re f_\gamma\inp{\as{G_k,v_k}{s}(w)} - f_\gammat\inp{\as{G_k,v_k}{s}(\tw)}
  - C_s(\Re \as{G_k,v_k}{s}(w) - \as{G_k,v_k}{s}(\tw))} \leq \eps_I + \eps_w.
\end{align*}
Substituting this in the previous display shows that
\begin{align}
	&\abs{\inp{\Re f_\gamma\inp{\as{G_k,v_k}{i}(w)} - f_\gammat\inp{\as{G_k,v_k}{i}(\tw)}}\right.\eqbreak 
    \maybealign
   \left.  -\inp{\Re f_\gamma\inp{\as{G_k,v_k}{j}(w)} - f_\gammat\inp{\as{G_k,v_k}{j}(\tw)}}}\nonumber\\
  &\leq \max_{i', j' \in L(v_k)} \abs{C_{i'}(\Re \as{G_k,v_k}{i'}(w) - \as{G_k,v_k}{i'}(\tw)) \right.\eqbreak
    \maybealign
    -\left. C_{j'}(\Re \as{G_k,v_k}{j'}(w) - \as{G_k,v_k}{j'}(\tw))} + 2\eps_I + 2\eps_w\nonumber\\
  &= 2\eps_I + 2\eps_w + \max_{i', j' \in L(v_k)} \abs{C_{i'}\Re \xi_{i'}
    - C_{j'}\Re \xi_{j'}},\nonumber\\
  &= 2\eps_I + 2\eps_w + C_{s}\Re \xi_{s}
    - C_{t}\Re \xi_{t}\label{eq:wint-14},
\end{align}
where $\xi_{l} \defeq \as{G_k,v_k}{l}(w) - \as{G_k,v_k}{l}(\tw)$ for
$l \in \good{G_k, v_k}$, and $s$ and $t$ are given by
\[
  s \defeq \argmax_{i' \in L(v_k)} C_{i'}\Re \xi_{i'}\quad\text{ and }\quad t
  \defeq \argmin_{i' \in L(v_k)} C_{i'}\Re \xi_{i'}.
\]
We now have the following two cases:

\medskip
\noindent{\textbf{Case 1: $(\Re \xi_{s})\cdot(\Re \xi_{t}) \leq 0$.}} Recall that $C_s, C_t$
  are non-negative and lie in $[0, 1/(d_k + \eta)]$.  Thus, in this case, we
  must have $\Re \xi_{s} \geq 0$ and $\Re \xi_{t} \leq 0$, so that
  \begin{align}
    C_{s}\Re \xi_{s} - C_{t}\Re \xi_{t} \maybealign=  C_{s}\Re \xi_{s} + C_{t}\abs{\Re \xi_{t}} \eqbreak
    \maybealign\leq \frac{\Re \xi_{s} + \abs{\Re \xi_{t}}}{d_k+\eta} = \frac{\abs{\Re \xi_{s}  - \Re \xi_{t}}}{d_k+\eta}.\label{eq:wint-10}
  \end{align}
  Now, note that
  \begin{align*}
    \maybealign\Re \xi_{s}  - \Re \xi_{t}\eqbreak
    &= \Re \ln \frac{\prob{G_k,w}{c(v_k)=s}}{
      \prob{G_k,\tw}{c(v_k)=s}} - \Re \ln \frac{\prob{G_k,w}{c(v_k)=t}}{
      \prob{G_k,\tw}{c(v_k)=t}}\\
    &= \Re \ln \frac{\prob{G_k,w}{c(v_k)=s}}{\prob{G_k,w}{c(v_k)=t}}
      - \Re \ln \frac{\prob{G_k,\tw}{c(v_k)=s}}{\prob{G_k,\tw}{c(v_k)=t}}\\
    &= \Re \ln \ratio{G_k, v_k}{s,t}(w) - \ln \ratio{G_k, v_k}{s,t}(\tw).
  \end{align*}
  Note that all the logarithms in the above are well defined from
  Consequence\nobreakspace \ref {lem:wint-approx-prob} of the induction hypothesis applied to $G_k$ and
  $v_k$.  Further, from items\nobreakspace \ref {wint-ind-1} and\nobreakspace  \ref {wint-ind-2} of the induction hypothesis,
  the last term is at most $d_k\eps_R + \eps_w$ in absolute value.  Substituting
  this in eq.\nobreakspace \textup {(\ref {eq:wint-10})}, we get
  \begin{equation}
    C_{s}\Re \xi_{s} - C_{t}\Re \xi_{t}  \leq \frac{d_k}{d_k + \eta}\eps_R + \eps_w. \label{eq:wint-11}
  \end{equation}
  This concludes the analysis of Case 1.

  \medskip
\noindent{\textbf{Case 2: $\Re \xi_{i'}$ for $i' \in L(v_k)$ all have the same sign.}}
Suppose first that $\Re \xi_{i'} \geq 0$ for all $i' \in L(v_k)$.  Then, we have
  \begin{equation}
    0 \leq  C_{s}\Re \xi_{s} - C_{t}\Re \xi_{t} \leq \frac{\Re \xi_{s}}{d_k+\eta}
    \leq \frac{d_k\cdot \eps_R}{d_k + \eta} + \eps_I + 4\Delta\eps_w,\label{eq:wint-12}
  \end{equation}
  where the last inequality follows from the second inequality in \MakeUppercase Consequence\nobreakspace \ref {lem:wint-approx-prob} of
  the induction hypothesis applied to $G_k$ at vertex $v_k$ with color $s$,
  which states that $\abs{\Re \xi_{s}} \leq d_k(\eps_R + \eps_I) + 4\Delta\eps_w$.
  Similarly, when $\Re \xi_{i'} \leq 0$ for all $i' \in \good{G_k, v_k}$, we
  have
  \begin{align}
    0 \leq  C_{s}\Re \xi_{s} - C_{t}\Re \xi_{t}
    &=  C_{t}|\Re \xi_{t}| -  C_{s}|\Re \xi_{s}|\nonumber\\&
    \leq \frac{1}{d_k+\eta}\abs{\Re \xi_{t}}\eqbreak
    \maybealign\leq \frac{d_k}{d_k + \eta}\eps_R + \eps_I + 4\Delta\eps_w,\label{eq:wint-13}
  \end{align}
  where the last inequality follows from the second inequality in \MakeUppercase Consequence\nobreakspace \ref {lem:wint-approx-prob} of
  the induction hypothesis applied to $G_k$ at vertex $v_k$ with color $t$,
  which states that $\abs{\Re \xi_{t}} \leq d_k(\eps_R + \eps_I) +
  4\Delta\eps_w$.  This concludes the analysis of Case 2.

  \medskip

Now, substituting eqs.\nobreakspace  \textup {(\ref {eq:wint-11})} to\nobreakspace  \textup {(\ref {eq:wint-13})}  into eq.\nobreakspace \textup {(\ref {eq:wint-14})}, we get
\begin{multline}
  \label{eq:wint-15}
  \abs{\inp{\Re f_\gamma\inp{\as{G_k,v_k}{i}(w)} - f_\gammat\inp{\as{G_k,v_k}{i}(\tw)}}\right.\eqbreakno\left.
    -\inp{\Re f_\gamma\inp{\as{G_k,v_k}{j}(w)} - f_\gammat\inp{\as{G_k,v_k}{j}(\tw)}}}\\
  \leq \frac{d_k}{d_k + \eta}\eps_R + 3\eps_I + 5\Delta\eps_w.
\end{multline}

Substituting eq.\nobreakspace \textup {(\ref {eq:wint-15})} into eq.\nobreakspace \textup {(\ref {eq:wint-5})}, we get
\begin{equation}
  \frac{1}{d} \abs{ \Re \ln \ratio{G,u}{i,j}(w)-\ln \ratio{G,u}{i,j}(\tw) } \eqbreak
  \leq \frac{d_k}{d_k + \eta}\eps_R + 3\eps_I + 7\Delta\eps_w < \eps_R,
\end{equation}
where the last inequality holds since
$\eta\eps_R > (\Delta + 1)(3\eps_I + 7\Delta\eps_w)$ (recalling that
$0 \leq d_k \leq \Delta$ and $\eta \in [0.9, 1)$). This verifies item\nobreakspace \ref {wint-ind-2} of the induction hypothesis.

Finally, to prove item\nobreakspace \ref {wint-ind-3}, we consider the imaginary part of
eq.\nobreakspace \textup {(\ref {eq:wint-rec})}.  We first note that
\begin{displaymath}
  \abs{n_i-n_j}\cdot\abs{\Im \ln w} \leq \Delta\abs{\ln w - \ln\tw} \leq
  \Delta\nu_w \leq \eps_w.
\end{displaymath}
We then have
\begin{equation}
  \frac{1}{d} \abs{ \Im \ln \ratio{G,u}{i,j}(w) }
  \le \eps_w+\eqbreak
   \max_{1 \leq k \leq d}
    \abs{\Im f_\gamma\inp{\as{G_k,v_k}{i}(w)} - \Im f_\gamma\inp{\as{G_k,v_k}{j}(w)}}
    .\label{eq:wint-16}
\end{equation}
Again, let $v_k$ be the vertex that maximizes the above expression, and $d_k$ be the number of unpinned neighbors of $v_k$ in $G_k$.
Applying eq.\nobreakspace \textup {(\ref {eq:wint-7})} of Consequence\nobreakspace \ref {lem:wint-real-part-ind} of the induction
hypothesis to the graph $G_k$ at vertex $v_k$ with colors $i, j \in L(v_k)$
gives
\begin{equation}
  \label{eq:wint-17}
  \abs{\Im f_\gamma\inp{\as{G_k,v_k}{i}(w)} - \Im
    f_\gamma\inp{\as{G_k,v_k}{j}(w)}} \eqbreak
    \leq \frac{d_k}{d_k + \eta}\eps_I + 6\Delta\eps_w.
\end{equation}
Substituting eq.\nobreakspace \textup {(\ref {eq:wint-17})} into eq.\nobreakspace \textup {(\ref {eq:wint-16})} we then have
\begin{displaymath}
  \frac{1}{d} \abs{ \Im \ln \ratio{G,u}{i,j}(w) } \leq \frac{d_k}{d_k + \eta}\eps_I +
  8\Delta\eps_w < \eps_I,
\end{displaymath}
where the last inequality holds since $\eta\eps_I > 8(\Delta + 1)\Delta\eps_w$
(recalling that $0 \leq d_k \leq \Delta$ and
$\eta \in [0.9, 1)$). This proves
item\nobreakspace \ref {wint-ind-3}, and also completes the inductive proof of
\MakeUppercase Lemma\nobreakspace \ref {lem:interval-induction}.\qed

We now use Lemma\nobreakspace \ref {lem:interval-induction} to prove
Theorem\nobreakspace \ref {thm:main-interval-restated}.
\begin{proof}[Proof of Theorem\nobreakspace \ref {thm:main-interval-restated}]
  Let $G$ be any graph of maximum degree $\Delta$ satisfying
  the admissible list condition $\mathcal{L}$.  If $G$ has no unpinned vertices, then $Z_G(w) = 1$
  and there is nothing to prove.  Otherwise, let $u$ be an unpinned vertex
  that is marked in~$G$.
  By \MakeUppercase Consequence\nobreakspace \ref {lem:wint-sum-lowerbounds} of the induction hypothesis (which we proved in
  Lemma\nobreakspace \ref {lem:interval-induction}), we then have $Z_w(G) \neq 0$ for $w$ as in the
  statement of the theorem.
\end{proof}

The proof of \MakeUppercase Theorem\nobreakspace \ref {thm:zeros} is now immediate.

\begin{proof}[Proof of \MakeUppercase Theorem\nobreakspace \ref {thm:zeros}]
  Let the quantity $\nu_w = \nu_w(\Delta)$ be as in the statements of
  \MakeUppercase Theorems\nobreakspace \ref {thm:main-origin-restated} and\nobreakspace  \ref {thm:main-interval-restated}.  Fix the maximum
  degree $\Delta$, and suppose that $w$ satisfies
  \begin{equation}
    -\nu_w^2/8 \leq \Re w \leq 1 + \nu_w^2/8 \text{ and } \abs{\Im w} \leq \nu_w^2/8.\label{eq:33}
  \end{equation}
  Now, if $G$ satisfies the hypotheses of Theorem~\ref{thm:zeros-intro}
    (respectively, Theorem~\ref{thm:zeros-intro-176}), we mark all its vertices so that
    the resulting instance satisfies Condition~\ref{asm:2G} (respectively,
    Condition~\ref{asm:176}); whereas if $G$ is a tree satisfying the hypotheses
    of Proposition~\ref{thm:zeros-trees}, we root $G$ at an arbitrary vertex and
    mark the root, so that the resulting instance satisfies
    Condition~\ref{asm:tree}.

  By Lemma~\ref{obv:good-to-nice}, the list coloring instance for $G$ so
  generated then satisfies an admissible list condition.  When $w$
  satisfying eq.\nobreakspace \textup {(\ref {eq:33})} is such that
  $\Re w \leq \nu_w/2$, we have $\abs{w} \leq \nu_w$, so that $Z_G(w) \neq 0$ by
  \MakeUppercase Theorem\nobreakspace \ref {thm:main-origin-restated}, while
  when such a $w$ satisfies $\Re w \geq \nu_w/2$, we have $Z_G(w) \neq 0$ from
  \MakeUppercase Theorem\nobreakspace \ref {thm:main-interval-restated}.  It
  therefore follows that $Z_G(w) \neq 0$ for all $w$ satisfying eq.\nobreakspace
  \textup {(\ref {eq:33})}, and thus the quantity $\tau_\Delta$ in the statement
  of Theorem\nobreakspace \ref {thm:zeros} can be taken to be
  $\nu_w^2/8$. \end{proof}

We conclude with a brief discussion of the dependence of $\tau_\Delta$ on
$\Delta$.  We saw above that $\tau_\Delta$ can be taken to be
$\nu_w(\Delta)^2/8$, so it is sufficient to consider the dependence of
$\nu_w = \nu_w(\Delta)$ on $\Delta$.  Let $c = 10^{-6}$.  As stated in the
discussion following eq.\nobreakspace \textup {(\ref {eq:def-eps})}, $\nu_w$ can be chosen to be
$0.2c/(2^{\Delta}\Delta^{7})$ for the case of general list colorings, or
$c/(300\Delta^{8})$ with the assumption of uniformly large list sizes (which, we
recall from \protect \MakeUppercase {R}emark\nobreakspace \ref {rem:nicer}, is satisfied in the case of uniform
$q$-colorings).  We have not tried to optimize these bounds, and it is
conceivable that a more careful accounting of constants in our proofs can
improve the value of the constant $c$ by a few orders of magnitude.

{\renewcommand{\addcontentsline}[3]{}
\section*{Acknowledgments}
}

We thank Guus Regts and anonymous reviewers for their various helpful comments.

JL was a PhD student at UC Berkeley when this work was carried out.  JL and AS
were supported by US NSF grant CCF-1815328. PS was supported by a Ramanujan
Fellowship of SERB, Indian Department of Science and Technology, and by the
Department of Atomic Energy, Government of India, under project
nos.~12-R\&D-TFR-5.01-0500 and RTI4001. Part of this work was performed while the
authors were at the Simons Institute for the Theory of Computing.

\vspace{\baselineskip}

\end{document}